\theoremstyle{plain}
\newtheorem{theorem}{Theorem}[section]
\newtheorem{proposition}[theorem]{Proposition}
\newtheorem{lemma}[theorem]{Lemma}
\newtheorem{corollary}[theorem]{Corollary}
\theoremstyle{definition}
\newtheorem{definition}[theorem]{Definition}
\newtheorem{assumption}[theorem]{Assumption}
\theoremstyle{remark}
\icmltitlerunning{Regret-Minimizing Double Oracle for Extensive-Form Games}
\begin{document}

\twocolumn[
\icmltitle{Regret-Minimizing Double Oracle for Extensive-Form Games}



\icmlsetsymbol{equal}{*}

\begin{icmlauthorlist}
\icmlauthor{Xiaohang Tang}{ucl}
\icmlauthor{Le Cong Dinh}{sot}
\icmlauthor{Stephen Marcus McAleer}{cmu}
\icmlauthor{Yaodong Yang}{pek}
\end{icmlauthorlist}

\icmlaffiliation{ucl}{Department of Statistical Science, University College London}
\icmlaffiliation{sot}{University of Southampton}
\icmlaffiliation{cmu}{Carnegie Mellon University}
\icmlaffiliation{pek}{Institute for AI, Peking University}

\icmlcorrespondingauthor{Yaodong Yang}{yaodong.yang@pku.edu.cn}

\icmlkeywords{Machine Learning, ICML}

\vskip 0.3in
]



\printAffiliationsAndNotice{} 

\begin{abstract}
By incorporating regret minimization, double oracle methods have demonstrated rapid convergence to Nash Equilibrium (NE) in normal-form games and extensive-form games, through algorithms such as online double oracle (ODO) and extensive-form double oracle (XDO), respectively. In this study, we further examine the theoretical convergence rate and sample complexity of such regret minimization-based double oracle methods, utilizing a unified framework called Regret-Minimizing Double Oracle. Based on this framework, we extend ODO to extensive-form games and determine its sample complexity. Moreover, we demonstrate that the sample complexity of XDO can be exponential in the number of information sets $|S|$, owing to the exponentially decaying stopping threshold of restricted games. To solve this problem, we propose the Periodic Double Oracle (PDO) method, which has the lowest sample complexity among regret minimization-based double oracle methods, being only polynomial in $|S|$. Empirical evaluations on multiple poker and board games show that PDO achieves significantly faster convergence than previous double oracle algorithms and reaches a competitive level with state-of-the-art regret minimization methods.
 
\end{abstract}

\section{Introduction}
Extensive-form games (EFGs) have been extensively employed in modeling sequential decision-making problems, including auctions, security games, and poker. However, solving such games remains challenging in real-world applications due to various complexities, including the game's large size and imperfect information. While linear programming (LP)~\citep{lp} and sequence-form LP~\citep{sequence_lp} can be used to compute the Nash equilibrium (NE) of EFGs~\citep{ne}, direct calculation of the exact NE via LP can become infeasible for large real-world games due to memory constraints and the high cost of matrix inversion. Therefore, more efficient methods are required to solve EFGs for real-world applications.

The Double Oracle (DO)~\citep{do} algorithm family has been developed to address the complexity of solving large Extensive-Form Games (EFGs) by solving a sequence of restricted games, where players can only select actions from a subset of pure strategies in the original game. These restricted games are typically much smaller than the original EFG, especially when the support of the Nash equilibrium (NE) is small ~\citep{do_small_sub1, do_small_sub2, sdo}. For instance, in symmetric normal-form games with random entries, the NE's support size is only half of the game size ~\citep{jonasson2004optimal}. By constructing a solution to the large game using only a small subset of pure strategies, the DO algorithm can effectively solve large EFGs. The DO algorithms can also be combined with deep reinforcement learning ~\citep{lanctot2017unified, muller2019generalized, mcaleer2022anytime, mcaleer2022self} to achieve state-of-the-art performance on large games like Stratego ~\citep{mcaleer2020pipeline} and Starcraft ~\citep{vinyals2019grandmaster}. 

Recent advances in game theory have led to the development of methods that integrate the Double Oracle (DO) algorithm with regret minimization techniques. For example, in large normal-form games, the Online Double Oracle (ODO) algorithm combines the DO approach with the Multiplicative Weights Update (MWU) ~\citep{mwu} regret minimizer for restricted games ~\citep{odo}. In extensive-form games, the Counterfactual Regret Minimization (CFR) algorithm has been successfully used to achieve superhuman performance in Texas Hold'em Poker and other games ~\citep{cfr, brown2019deep, mccfr,srm,mcaleer2022escher}. To further improve the performance of DO in this setting, the Extensive-Form DO (XDO) algorithm applies CFR to the DO framework and can support games with high-dimensional continuous action spaces ~\citep{xdo}. However, theoretical questions regarding the convergence speed, expected iterations, and sample complexity of DO methods in EFGs have yet to be thoroughly investigated and remain an open area of research ~\citep{sdo}.

In this study, a general theoretical framework is presented to investigate algorithms that combine DO and regret minimization, providing their expected number of iterations and sample complexity to reach $\epsilon$-NE. This is to our knowledge the first work analyzing the theoretical convergence rate of DO in EFGs. The paper presents two applications of the framework. In the first example, we extend ODO to solve EFGs and determine the corresponding sample complexity. In the second example, we prove that the stopping condition for restricted games in the Extensive-Form DO (XDO) algorithm may lead to a worst-case convergence in a number of iterations exponential in the number of information sets $|S|$. To address this issue, we propose Periodic Double Oracle (PDO), which only has a polynomial sample complexity in $|S|$, lower than that of both ODO and XDO. Empirical assessments on typical poker and board games demonstrate that PDO achieves much faster convergence compared to XDO and ODO, reaching a level competitive with state-of-the-art regret minimization methods.

\section{Preliminaries}
\subsection{Two-Player Zero-Sum Extensive-Form Games}
In this paper, we focus on Two-Player Zero-Sum Extensive-Form Games (\textbf{EFGs}) with perfect recall. Our notation is based on that of \citet{noamthesis}. EFGs are represented by a game tree where nodes correspond to players $i\in \mathcal{P}=\{1,2\}$. Imperfect Information EFGs employ \textbf{chance player} c to model the stochastic events like card dealing in Poker. \textbf{History} ($h$) is a sequence of actions the players took and events uniquely attached to a node on the game tree. $A(h)$ is the set of available actions at $h$, and $P(h)$ denotes the player who needs to make a decision at $h$. Terminal histories, which are all in the set $Z$, are attached to nodes where the game's payoff can be determined. The payoff at the terminal history $z\in Z$ is treated as value at $z$, denoted as $v_i(z)$. The range of the payoff is represented by $\Delta$.

\textbf{Information Set (Infoset/Infostate $s_i$)} of each player $i \in \mathcal{P}$ is the set of indistinguishable histories from player $i$'s perspective. The set $S_i$ contains infosets where player $i$ must make decisions, and $S$ represents the union of information sets for all players: $S=\cup_{i\in \mathcal{P}}S_i$. The set $A(s_i)$ includes all available actions of player $i$ at $s_i$. The \textbf{Strategy} of player $i$ is represented by $\pi_i$, with $\pi_i(s_i,a)$ representing the probability of player $i$'s taking action $a \in A(s_i)$ at the information set $s_i$. Joint strategy $\pi=(\pi_1, \pi_2)$. \textbf{Reaching probability }$x^{\pi}(h)$ is the probability to reach $h$ when players use joint strategy $\pi$. Specifically, $x^\pi(h) = \prod_{i\in \mathcal{P} \bigcup \{c\}}x^\pi_i(h)$, where $x^\pi_i(h)=\prod_{h^{'}\cdot a \subset h} \pi_{\mathcal{P}(h^{'})}(h^{'},a)$ is player $i$'s contribution. Given joint strategy $\pi=(\pi_1, \pi_2)$, we can define the \textbf{expected value} of history $h$ of player $i$, denoted by $v_i(h)$. If reaching probability $x^{\pi}(h)=0$, then $v_i^\pi(h)=0$; Otherwise, we have
\begin{align}
\label{ev}
v_i^\pi(h) =
    \sum_{z\in Z} \frac{x^{\pi}(z)}{x^{\pi}(h)} v_i(z),\ x^{\pi}(h)>0 ,
\end{align}
and thus the value function is \emph{linear} with respect to strategy $\pi$:
\begin{align}
v_i(\pi_i, \pi_{-i}) = \sum_{z \in Z} v_i^{\pi}(z)x^\pi(z).
\end{align} 

Then \textbf{best response} $\mathbb{BR}_i(\pi_{-i})$ is $\arg\max_{\pi_i}v_i(\pi_i, \pi_{-i})$. $\epsilon$-\textbf{Nash Equilibrium (NE)} strategy $\pi^*$ satisfies that $\forall i\in \mathcal{P}$, $\min_{\pi_{-i}} v_i(\pi^*_i, \pi_{-i}) + \epsilon \geq v_i(\pi^*)\geq \max_{\pi_i} v_i(\pi_i, \pi^*_{-i}) - \epsilon$. In particular, exact NE is $\epsilon$-NE when $\epsilon=0$. \textbf{Exploitability} $e(\pi) = \sum_{i\in \mathcal{P}} v_i(\mathbb{BR}(\pi_{-i}), \pi_i) - v_i(\pi)$. \textbf{Support Size} of NE ($\pi^*$) denotes the number of actions that have positive probabilities at infoset $s$ in NE $\pi^*$, denoted by $\text{supp}^{\pi^*}(s)$ ~\citep{supbound}. 

\subsection{Regret Minimization Algorithms}
Regret minimization methods approximate NE in EFGs if the algorithm has a sublinear regret upper bound or an average regret converging to zero as iteration $T$ goes to infinity. In this context, we introduce different regret minimization algorithms, and we will demonstrate how these algorithms ensure convergence to NE.

\begin{definition} [\textbf{Regret}]
Given $\{\pi^t|\ t=1,\cdots,T\}$ is a sequence of strategies delivered by an algorithm, the regret of this algorithm is defined as: 
\begin{align}
    R^T_i 
    = \sum_{t=1}^T \max_{\pi} v_i(\pi, \pi_{-i}^t) - v_i(\pi_i^t, \pi_{-i}^t),
\end{align}
and \textit{average} regret $\bar{R}^T_i = R^T_i/T$.
\end{definition}
Vanilla Counterfactual Regret Minimization (CFR) ~\cite{cfr} is a regret minimization algorithm that aims to minimize counterfactual regret at each infoset by traversing the full game tree depth-firstly. It achieves this by calculating player i's expected values $v_i(\cdot)$ based on equation \ref{ev} and computing instantaneous regrets at iteration $t\leq T$ of taking action $a$ in infoset $s$ following $r_i^t(s,a) = \sum_{h \in s}x_{-i}^{\pi^t} (h) [v_i^{\pi^t}(h\cdot a) -  v_i^{\pi^t}(h)]$. The counterfactual regrets of CFR can be computed by uniform average of $r_i$ over all iterations: $R_i^T(s,a) = \sum_{t=1}^T r_i^t(s,a) / T$. In two-player games, if both players apply regret matching~\citep{cfr} to strategy updates, the regret of CFR is bounded by$\Delta|S_i|\sqrt{|A_i|} \cdot T^{1/2}$.


Discounted regret minimization methods aim to minimize a weighted-average regret in which the strategy in each iteration is discounted:
\begin{align}
    \bar{R}^T_i = \sum_{t=1}^T w_t \left[\max_{\pi} v_i(\pi, \pi_{-i}^t) - v_i(\pi_i^t, \pi_{-i}^t)\right] / \sum_{t=1}^T w_t.
\end{align}
The discounted CFR (DCFR) ~\citep{discfr} is a regret minimization framework that belongs to this family and is based on the counterfactual regret minimization algorithm. DCFR employs weighted-average counterfactual regrets and weighted-average strategy to achieve faster convergence. It has an upper bound on weighted-average regret, where $w_t$ satisfy $\sum_{t=1}^\infty w_t=\infty$. The weighted-average regret's upper bound is $6\Delta|S_i|(\sqrt{|A|} + \frac{1}{T}) \cdot T^{-1/2}$. DCFR can be generalized to other CFR variants such as vanilla CFR, CFR+, and Linear CFR with appropriate hyperparameters ~\citep{cfrplus,discfr,noamthesis}.

For the purpose of facilitating analysis, we adopt a uniform notation for the upper bound of regret throughout the remaining sections of this paper. Specifically, we use $\mathcal{O}(|S_i|\sqrt{|A_i|}T^{-1/2})$ as the \textit{average} regret bound of CFR algorithms including vanilla CFR and DCFR.

\subsubsection{Regret Minimization to Nash Equilibrium}
A widely accepted folk lemma suggests that if both players in a two-player zero-sum game adopt an algorithm with a sublinear regret bound, then the average strategies of both players will converge to a Nash Equilibrium ~\citep{cesa2006prediction,blum2007learning}. Here we prove the discounted regret version with the same idea.

\begin{lemma}
\label{lemma:bound_to_ne}
Given the weighted-average regret of an algorithm $\mathcal{O}(|S_i|\sqrt{|A_i|}/\sqrt{T})$, the weighted-average strategy of this algorithm is a $\mathcal{O}(|S|\sqrt{|A|})/\sqrt{T})$-NE.
\end{lemma}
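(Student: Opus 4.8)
The plan is to reduce the statement to the classical folk lemma connecting regret with equilibrium, carried out with the discounting weights $w_t$. Throughout I work with the sequence-form (realization-plan) representation of strategies, so that $v_i$ is bilinear in the pair of realization plans and the weighted-average strategy $\bar\pi_i$ is the one whose realization plan equals $\sum_{t=1}^T w_t x_i^{\pi^t}/\sum_{t=1}^T w_t$ (this is how CFR-type algorithms form their average); with this convention $v_i(\pi,\bar\pi_{-i}) = \tfrac1W\sum_t w_t v_i(\pi,\pi_{-i}^t)$ for any fixed $\pi$, where $W=\sum_{t=1}^T w_t$.

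First I would bound, for each player $i$ separately, the gap between the best-response value against $\bar\pi_{-i}$ and the realized weighted-average payoff. Since a maximum of a convex combination is at most the convex combination of the maxima, $\max_\pi v_i(\pi,\bar\pi_{-i}) = \max_\pi \tfrac1W\sum_t w_t v_i(\pi,\pi_{-i}^t) \le \tfrac1W\sum_t w_t \max_\pi v_i(\pi,\pi_{-i}^t)$. Subtracting $\tfrac1W\sum_t w_t v_i(\pi_i^t,\pi_{-i}^t)$ from both sides, the right-hand side becomes exactly $\bar R_i^T$ (with $\max_\pi v_i(\pi,\pi_{-i}^t) = v_i(\mathbb{BR}_i(\pi_{-i}^t),\pi_{-i}^t)$), so $\max_\pi v_i(\pi,\bar\pi_{-i}) - \tfrac1W\sum_t w_t v_i(\pi_i^t,\pi_{-i}^t) \le \bar R_i^T$.

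Next I would sum this inequality over $i\in\mathcal P$. Because the game is zero-sum, $\sum_i v_i(\pi_i^t,\pi_{-i}^t)=0$ for every $t$, so the middle terms cancel and I obtain $\sum_i \max_\pi v_i(\pi,\bar\pi_{-i}) \le \bar R_1^T + \bar R_2^T$. Using $\max_\pi v_i(\pi,\bar\pi_{-i}) \ge v_i(\bar\pi)$ and $v_1(\bar\pi)+v_2(\bar\pi)=0$ (zero-sum again), this yields, for each $i$, $\max_\pi v_i(\pi,\bar\pi_{-i}) - v_i(\bar\pi) \le \bar R_1^T + \bar R_2^T$; the symmetric inequality with $\min_{\pi_{-i}}$ follows by applying the same chain to the opponent. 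Hence $\bar\pi$ is an $\epsilon$-NE (and $e(\bar\pi)\le\epsilon$) with $\epsilon = \bar R_1^T + \bar R_2^T$. Substituting $\bar R_i^T = \mathcal O(|S_i|\sqrt{|A_i|}/\sqrt T)$ and using $|S_1|+|S_2| = |S|$ with $|A_i| \le |A|$ gives $\epsilon = \mathcal O(|S|\sqrt{|A|}/\sqrt T)$.

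The step that needs care — the main obstacle — is the per-player inequality: one cannot bound a single player's exploitability directly, because the bilinearity of $v_i$ makes $v_i(\bar\pi_i,\bar\pi_{-i})$ differ from $\tfrac1W\sum_t w_t v_i(\pi_i^t,\pi_{-i}^t)$ by cross terms, and it is only after summing the two players' regrets that the zero-sum cancellation eliminates these terms. Relatedly, one must take the ``average strategy'' in the realization-plan sense rather than as a naive average of behavioral-strategy vectors, otherwise the linearity exploited in the first step fails; this is exactly the averaging that CFR/DCFR already perform.
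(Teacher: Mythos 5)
Your proposal is correct and follows essentially the same route as the paper's own proof: the classical folk lemma, bounding the best-response value against the averaged opponent by the weighted average of per-iteration best-response values, then using the zero-sum structure to convert the two players' regret bounds into the $\epsilon$-NE sandwich with $\epsilon=\bar R_1^T+\bar R_2^T=\mathcal{O}(|S|\sqrt{|A|}/\sqrt{T})$. If anything, you are more careful than the paper's write-up at the one delicate step: the paper's displayed equality identifies $\sum_t W_t\, v_i(\pi_i^t,\pi_{-i}^t)$ with $v_i(\bar\pi_i,\bar\pi_{-i})$, which does not hold termwise for a bilinear $v_i$, whereas you correctly observe that the cross terms only cancel after summing the two players' inequalities under the zero-sum condition.
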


The proof is in Appendix \ref{append:bound_to_ne}. Therefore, if the weighted-average regret can converge to zero, the resulting weighted-average strategy is a reliable approximation that converges to NE.

\subsection{Double Oracle Methods}\label{section: Double Oracle}

\begin{algorithm}[t!]
    \caption{XDO~\citep{xdo}}
    \label{alg:xdo}
    \begin{algorithmic}
    \STATE Input: initial threshold $\epsilon_0$, time window index $j=0$, uniform random strategy $\pi^0$.
    \STATE $\Pi_1= \cup_{i \in \{1,2\}} \mathbf{BR}_i(\pi^0_{-i})$.
    \STATE Construct restricted game $\mathbf{G}_1$ with $\Pi_1$.
    \FOR{$t=1,\cdots,\infty$}
    \STATE Run one iteration of CFR in $\mathbf{G}_t$.
    \IF{Exploitability in $\mathbf{G}_t$, $e(\bar{\pi}^t)\leq\epsilon_0/2^j$}
    \STATE $\Pi_{t+1} = \Pi_t \cup \mathbf{BR}(\bar{\pi}^t_{-i})$ for $i \in \{1,2\}$.
    \STATE $j = j + 1$.
    \STATE Reset $\pi^{t+1}$.
    \STATE Construct restricted game $\mathbf{G}_{t+1}$ with $\Pi_{t+1}$.
    \ENDIF
    \ENDFOR
    \end{algorithmic}
\end{algorithm}


Double Oracle (DO) ~\citep{do} is a technique initially developed for solving Normal-form Games (NFGs) which maintains a population of pure strategies for both players, denoted as $\Pi_t$ at time $t$, and creates a restricted game by considering only the actions in $\Pi_t$. Then the restricted game's NE is obtained using linear programming, and the best response to this NE is added to the population. This process repeats until the population no longer changes. DO's advantage is that it solves large games by only solving some small restricted games, since the restricted game usually stops growing when it is still small, particularly when the support of the NE is small (i.e., non-zero probabilities of NE strategy's actions are few)~\cite{do_small_sub1, do_small_sub2}. Table 2 in Appendix \ref{table:support} includes the support size of NE in some common games. However, traditional algorithms such as linear programming can still be intractable in large restricted games. To address this issue, Online Double Oracle (ODO) ~\citep{odo} uses regret minimization for strategy updates in the restricted games, leading to better empirical performance than DO in large normal-form games.

The conversion of extensive-form games (EFGs) into normal-form games and solving them with normal-form Double Oracle (DO) is theoretically feasible. However, representing EFGs in normal-form will result in an exponential increase in the number of required iterations for convergence ~\cite{xdo}. Therefore, a specific DO algorithm for EFGs is necessary. The first DO method for EFGs, Sequence-form DO (SDO), was proposed by \citet{sdo}. This approach uses the Sequence-form LP to compute the exact NE of the restricted game. Moreover, SDO introduces an efficient technique for expanding the restricted game in extensive-form algorithms, where each iteration involves adding sequences instead of one pure strategy to the population. To address the challenge of solving large restricted games, \citet{xdo} extends SDO to Extensive-Form DO (XDO), which employs CFR to approximate the restricted game's NE and reinforcement learning methods to approximate the best response. Since XDO adds actions at every information set, it expands the restricted game much faster than SDO. The formal process of XDO is in Algorithm \ref{alg:xdo}, which is rearranged to count one iteration as the completion of one strategy update in the restricted game, followed by computing a BR if required. For more details on Algorithm \ref{alg:xdo}, please refer to Appendix \ref{append:algo_xdo}.


To demonstrate the efficacy of DO for EFGs, we present a basic example of XDO (SDO) in a two-player zero-sum EFG, as depicted in Figure \ref{fig:example}. DO methods can compute the NE in EFGs with a small support NE without solving the original game. XDO has been empirically shown to converge rapidly in small support games~\citep{xdo}, but it lacks a theoretical analysis of the convergence rate for achieving an approximate NE. In the following section, we introduce a general framework that can generalize to both XDO and extensive-form ODO, and provide a theoretical analysis of their convergence rates. Based on the analysis, we propose a more sample-efficient DO algorithm for EFGs.

\section{Regret-Minimizing Double Oracle}
In this section, we propose Regret-Minimizing Double Oracle (RMDO), a novel generic Double Oracle framework combined with regret minimization to approximate the Nash Equilibrium of EFGs. To the best of our knowledge, this is the first study analyzing the convergence rate and sample complexity of regret-minimization based Double Oracle for EFGs.

RMDO consists of the same elements as the previous DO methods. Restricted game is constructed by considering only a subset of all pure strategies. Population $\Pi_t$ containing the available pure strategies in the restricted game. Time window $T_j$, defined as a partition of the set of all iterations where the populations are the same: $\forall t_0,t_1 \in T_j, \Pi_{t_0}=\Pi_{t_1}$, plays a crucial role in RMDO and contributes to making it a generic framework. The number of time windows, denoted by $k$, corresponds to the number of restricted games from iteration $t=0$ to $T$. However, in contrast to existing DO methods, RMDO has the ability to expand the restricted game at any time.

Prior to presenting the formal process, we highlight two key new components of RMDO. The first component is the \textbf{frequency function} $m(\cdot)$ used in the computation of Best Response, denoted as $m(\cdot)$, which is defined as a mapping from the set of time window indices $\mathcal{N} \cap [0,k-1]$ to $\mathcal{N}^{+}$. The function $m(j)$ represents the frequency of computing Best Response in the $j$-th time window. Since the process of DO based on regret minimization is exactly to take turns to do regret minimization and compute the best response, balancing the regret minimization and Best Response computation is critical to achieve a rapid convergence. The second component is \textbf{weighted-average scheme}. To accelerate the convergence in the restricted game, we incorporate within-window weights $w_t$, where $t\in T_j$, allowing us to utilize the discounted regret minimizer. The within-window weights are exactly the weights in the discounted regret minimizer, satisfying that their sum in the current window $T_j$ equals one. For instance, vanilla CFR employs uniform weights $1/|T_j|$ in window $T_j$. CFR$+$ and linear CFR use linearly increasing weights. Specifically, within the $j$-th window $T_j$:
\begin{align}
w_t= 2(t - \sum_{m=1}^{j-1} |T_m|)/|T_j|(|T_j| + 1). 
\end{align}
Thus, the weighted average strategy in the window $T_j$ is:
\begin{align}
\label{eq:avg_strategy}
\Tilde{\pi}_i^t = \sum_{t'\in T_j}\pi_i^{t'} w_{t'}.
\end{align}

Presented in Algorithm \ref{alg:RMDO}, the formal RMDO procedure is as follows. At each iteration $t$, assuming the current time window is $j$, the restricted game $\mathbf{G}_t$ is constructed by restricting the pure strategies in the population $\Pi_t$ for players. Within $\mathbf{G}_t$, regret minimization is conducted by traversing the game tree, computing the regret of each infoset (node), and updating the strategy using any Counterfactual Regret Minimization (CFR) algorithm. At the outset of the procedure, when $t=0$, the construction of the restricted game and the strategy update are bypassed since $\Pi_0$ is empty. The expected value at $t=0$ is computed based on the joint strategy $\pi$ following a uniformly random policy. As the procedure progresses, when $t>0$ and the current time window is $T_j$, the joint average strategy of current window $\bar{\pi}=(\bar{\pi}_1, \bar{\pi}_2)$ is expanded to the original game every $m(j)$ iteration by setting the probabilities of actions not in the population to zero. Then the original game best response (BR), considering all actions in the original game, is computed against the expanded current-window average strategy, which is $\mathbf{a}_i^t=\arg\max_{\pi_i \in \Pi}v_i(\pi_i, \pi_{-i})$, for both players. $\mathbf{a}_i^t$ for $i=1,2$ are both merged to the population $\Pi_{t+1}$. Finally, if the population changes ($\Pi_{t+1} \neq \Pi_t$), a new time window is initiated, and $\pi_i^{t+1}$ is reset to a uniform random strategy.

\begin{algorithm}[t!]
\begin{algorithmic}
    \caption{Regret-Minimizing Double Oracle}
    \label{alg:RMDO}
    \STATE Input: hyperparameter $m$, window index $j=0$, uniform random strategy $\pi^0$.
    \STATE Set population $\Pi_1=\mathbf{BR}_i(\pi^0)$ for $i \in \{1,2\}$.
    \STATE Construct restricted game $\mathbf{G}_1$ with $\Pi_1$.
    \FOR{$t=1,\cdots,\infty$}
    \STATE Run one iteration of CFR in $\mathbf{G}_t$.
    \IF{$t \mod m(j)=0$}
    \STATE Compute $\Tilde{\pi}_i^t$ with equation (\ref{eq:avg_strategy}).
    \STATE
    $\Pi_{t+1} = \Pi_t \cup  \mathbf{BR}_i(\Tilde{\pi}^t_{-i})$ for $i \in \{1,2\}$.
    \IF{$\Pi_{t+1} \neq \Pi_{t}$}
    \STATE Start new window: $j = j+1$.
    \STATE Reset strategy $\pi^{t+1}$.
    \STATE Construct restricted game $\mathbf{G}_{t+1}$ with $\Pi_{t+1}$.
    \ENDIF
    \ENDIF
    \ENDFOR
\end{algorithmic}
\end{algorithm}

Then we investigate the convergence guarantee of RMDO. Regret minimization algorithms can converge to $\epsilon$-NE by iteratively  updating strategy in a static game. But in RMDO, a regret minimizer is employed in the restricted game, which expands over time. Thus if the restricted game stops expanding at some finite iteration, the convergence of RMDO is guaranteed. The following lemma proves that the number of restricted games is finite, which guarantees RMDO's convergence.

\begin{lemma}
\label{lemma:k}
In an extensive-form game, where $\Pi^*$ represents the set of Nash Equilibrium (NE) strategies of this game. When running RMDO, given that $k$ is the number of restricted games, we have $\min_{\pi\in \Pi^*}\max_{s\in S}\text{supp}^{\pi}(s) < k \leq \sum_{i}|S_i|$. (Refer to Appendix \ref{append:k} for proof).
\end{lemma}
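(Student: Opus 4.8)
The plan is to prove the two inequalities separately, in both cases by tracking how the population $\Pi_t$ --- equivalently, the per-information-set sets of available actions, which I will write $B_t(s)$ for the actions available at $s$ in $\mathbf{G}_t$ --- evolves across the $k$ time windows. The key structural observation underlying both bounds is that, by the termination test in Algorithm \ref{alg:RMDO}, a best response that leaves $\Pi_t$ unchanged certifies that the current restricted game is already "wide enough," so I would first isolate this as a standalone claim and then apply it at the first and last windows.

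For the upper bound $k\le\sum_i|S_i|$, I would observe that the population is monotone non-decreasing (strategies are only added, never removed) and that a new restricted game $\mathbf{G}_{t+1}$ is created precisely when $\Pi_{t+1}\supsetneq\Pi_t$; hence the number of restricted games equals one plus the number of strict enlargements of the population. Since a best response is a pure strategy specifying one action per information set of the responding player, each enlargement introduces at least one new (information set, action) pair for some player $i$, and because there are finitely many such pairs, $k$ is finite --- which already secures convergence of RMDO inside a fixed restricted game. To sharpen the count to $\sum_i|S_i|$ I would charge each enlargement to an information set $s$ at which $B_t(s)$ strictly grows, and argue that a genuinely new best response, computed after the previous restricted game was solved, must enlarge a previously single-action information set, so that the relevant information sets for player $i$ number at most $|S_i|$.

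For the lower bound $\min_{\pi\in\Pi^*}\max_{s}\text{supp}^\pi(s)<k$, the key fact is that RMDO stops expanding only when $\mathbf{G}_k$ already contains the full support of some Nash equilibrium $\pi^\star$ of the original game: otherwise the best response against (the limit of) $\mathbf{G}_k$'s equilibrium strategy would lie outside $\Pi_k$ and trigger another window. Given such a $\pi^\star$, at every information set $s$ we have $\text{supp}^{\pi^\star}(s)\le|B_k(s)|$, and since player $i$'s population is built from the single initial best response $\mathbf{BR}_i(\pi^0)$ together with at most one added pure strategy per subsequent window, it contains at most $k$ pure strategies, so $|B_k(s)|\le k$; hence $\min_{\pi\in\Pi^*}\max_s\text{supp}^\pi(s)\le\max_s\text{supp}^{\pi^\star}(s)\le k$. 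To promote this to a strict inequality I would examine the final enlargement $\mathbf{G}_{k-1}\to\mathbf{G}_k$: it adds a pure strategy not previously in $\Pi_{k-1}$, so $\mathbf{G}_{k-1}$ contains no equilibrium's full support; running the same counting one step back and using the standalone claim shows the action added in that final step is not forced at the maximum-support information set of a minimally-supported equilibrium, giving $|B_k(s)|\le k-1$ there.

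The main obstacle I anticipate is exactly this bookkeeping at the two extremes: making the charging map in the upper bound injective enough to land at $\sum_i|S_i|$ rather than the cruder $\sum_{s}|A(s)|$, and turning the "$\le k$" estimate in the lower bound into "$<k$." Both rely on the same subtle point --- that a best response which does not expand $\Pi_t$ witnesses that the restricted game is already rich enough at every information set --- so the bulk of the work is to state that lemma cleanly and verify it survives the approximate (CFR-based) equilibrium computation used inside each window. The finiteness of $k$, and hence the eventual convergence of RMDO, then follows immediately from either bound.
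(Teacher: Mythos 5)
Your overall skeleton matches the paper's: you charge each new time window to a strict enlargement of the population for the upper bound, and for the lower bound you use the fact that once the population stops growing, the restricted game's equilibrium is an equilibrium of the full game whose support lies inside the population. Your lower-bound route is in fact cleaner than the paper's, which only bounds $k$ below by $\max_{s}\min_{\pi\in\Pi^*}\text{supp}^{\pi}(s)$ (the weaker max--min ordering, and non-strictly) rather than the stated $\min_{\pi\in\Pi^*}\max_{s}\text{supp}^{\pi}(s)$. The problem is that both refinements you defer to the end are genuine gaps, and the sub-claims you propose to close them with are false.

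For the upper bound, the claim that ``a genuinely new best response, computed after the previous restricted game was solved, must enlarge a previously single-action information set'' fails: take a one-shot game in which each player has a single information set with three actions and the unique equilibrium has full support (rock--paper--scissors written as an EFG). The second and third expansions each add an action at an information set that already holds two or more actions, so your charging map into $\bigcup_i S_i$ cannot be made injective, and the honest count from ``at least one new (infoset, action) pair per window'' only yields $k\le \sum_{s\in S}|A(s)|$, not $k\le\sum_i|S_i|$. (The paper's own one-line argument instead asserts directly that the population has size at most $\sum_i|S_i|$ and charges one new addition per window against it; it does not pass through your intermediate claim, and note that in this example $k=3>2=\sum_i|S_i|$, so no charging scheme can rescue the bound without a further hypothesis.) For the lower bound, the per-infoset count gives $\text{supp}^{\pi^\star}(s)\le 1+(k-1)=k$, i.e. only the non-strict inequality, and the same example attains it with equality ($\max_s\text{supp}^{\pi^\star}(s)=3=k$), so the step you sketch for promoting $\le k$ to $<k$ --- that the action added in the final expansion ``is not forced at the maximum-support information set'' --- cannot go through in general (nor does the paper supply such an argument; its proof also ends at a non-strict inequality). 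You should either settle for the non-strict inequalities with $\sum_{s}|A(s)|$ on the right, or identify the extra assumption under which the stated strict and $\sum_i|S_i|$ forms hold; your finiteness conclusion for $k$, which is what Assumption 3.3 actually needs, survives either way.
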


RMDO will converge by doing regret minimization in the final restricted game, after $k$ times of restricted game expanding. Additionally, according to Lemma \ref{lemma:k}, where $k$ is bounded, the subsequent assumption can be made in hindsight, without any loss of generality, for the remainder of this paper.
\begin{assumption}
\label{assume:k}
Given an extensive-form game solving by Regret-Minimizing Double Oracle, there are $k<\infty$ restricted games even when $T\rightarrow \infty$.
\end{assumption}

While the value of $k$ is unknown during training, it is possible to partition the iterations into a limited number of time windows in hindsight to investigate the convergence rate and sample complexity of RMDO. In the following section, we will examine two types of strategies that are generated by RMDO.

\subsection{Overall Average Strategy}
The convergence rate of the average strategy can be determined by the regret upper bound of the regret minimization algorithm. Thus we first investigate the overall average strategy by taking the average over iterations $t=0$ to $T$.

In addition to the within-window weights, global weights are assigned to $\pi_t$ when computing the overall average strategy. Specifically, for all $t\in T_j$ within a time window $T_j$, the weight $W_t$  is defined as $|T_j|w_t / T$. It can be easily shown that $\sum_{t=1}^{T} W_t = 1$. The overall average strategy for player $i$ is then obtained as:
\begin{align}
    \bar{\pi}_{i}^t = \sum_{t=1}^T  \pi_{i}^{t} \cdot W_t = \sum_{t=1}^T  \pi_{i}^{t} \cdot \frac{|T_j|w_t}{T}.
\end{align}
Define weighted-average regret of RMDO as
\begin{align}
    \bar{R}^T_i = \max_{\pi^{'}_i} \sum_{t=1}^T (v_i(\pi^{'}_i, \pi_{-i}^t) - v_i(\pi^t)) \cdot W_t.
\end{align}

We then prove that the regret has the following upper bound:

\begin{theorem}
\label{theo:RMDO_bound} 
In RMDO, suppose the regret minimizer has $\mathcal{O}(|S_i|\sqrt{|A|T})$ regret, the weighted-average regret bound of RMDO:
\begin{align}
\mathcal{O}\left(\sum_{j=0}^{k-2} \frac{|T_j|}{T}\cdot[m(j)-1] + \sum_{j=0}^{k-1}\frac{\sqrt{k}|S_i||T_j|}{T\sqrt{|T_j| - m(j) + 1}}
    \right),
    \label{eq:rmdo}
\end{align}
converges to $0$ if $m(j)$ is sublinear in $T$.
\end{theorem}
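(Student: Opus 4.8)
\emph{Proof proposal.}
The plan is to fix the maximizing deviation $\pi_i^{*}$ in the definition of $\bar R_i^T$, recall that $W_t=|T_j|w_t/T$ with $\sum_{t\in T_j}w_t=1$, and split the weighted-average regret along the $k$ time windows as $\bar R_i^T=\sum_{j=0}^{k-1}\frac{|T_j|}{T}\,\rho_j$ with $\rho_j=\sum_{t\in T_j}\big(v_i(\pi_i^{*},\pi_{-i}^t)-v_i(\pi^t)\big)w_t$. The value of this split is that inside a single window $T_j$ the population $\Pi_j$, hence the restricted game $\mathbf G_j$, is frozen and $\pi^{t+1}$ was reset at the start of the window, so the iterates $\pi^t$ are produced by a genuine discounted regret minimizer running in a \emph{static} game; the per-window analysis can then invoke the assumed $\mathcal O(|S_i|\sqrt{|A|\,n})$ cumulative (equivalently $\mathcal O(|S_i|\sqrt{|A|}/\sqrt n)$ weighted-average) regret bound of the CFR variant.

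Next I would bound each $\rho_j$ by an ``internal'' and an ``approximation'' part. For the internal part I compare the realized play against the best strategy $\hat\pi_i^{(j)}$ that is actually realizable inside $\mathbf G_j$ against the opponent's own window-$j$ sequence; by bilinearity of $v_i$ this quantity is dominated by the discounted restricted regret $\sum_{t\in T_j}\big[\max_{\pi_i\in\mathbf G_j}v_i(\pi_i,\pi_{-i}^t)-v_i(\pi^t)\big]w_t$ of the minimizer in $\mathbf G_j$. The subtlety is the best-response schedule: a best response is queried only when $t\equiv 0\pmod{m(j)}$, so up to $m(j)-1$ iterations of the window are ``idle'' in the sense that the relevant comparison strategy has not yet been committed; I bound those idle iterations crudely by the payoff range $\Delta$ each, which is the source of the $\sum_{j=0}^{k-2}\frac{|T_j|}{T}[m(j)-1]$ term, and I apply the regret bound on the remaining $|T_j|-m(j)+1$ iterations, producing the $1/\sqrt{|T_j|-m(j)+1}$ factor in the second sum.

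For the approximation part, $\sum_{t\in T_j}\big(v_i(\pi_i^{*},\pi_{-i}^t)-v_i(\hat\pi_i^{(j)},\pi_{-i}^t)\big)w_t$, I would pass to the within-window weighted-average opponent strategy $\tilde\pi^{(j)}_{-i}=\sum_{t\in T_j}w_t\pi^t_{-i}$ (legitimate by bilinearity) and bound this by $v_i(\mathbf{BR}_i(\tilde\pi^{(j)}_{-i}),\tilde\pi^{(j)}_{-i})-\max_{\pi_i\in\mathbf G_j}v_i(\pi_i,\tilde\pi^{(j)}_{-i})$, i.e.\ exactly the improvement obtained by the best response RMDO adds at the end of window $j$. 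If that best response already lies in $\Pi_j$ this improvement is zero; otherwise it is a genuine new strategy, so it enters $\Pi_{j+1}$, and summing these improvements telescopes over windows. Because Lemma~\ref{lemma:k} bounds the number of windows by $k\le\sum_i|S_i|<\infty$, aggregating the at most $k$ resulting per-window square-root terms (Cauchy--Schwarz over the windows) is what injects the $\sqrt k$ into the numerator of the second sum. \textbf{I expect this approximation step -- controlling how much a full-game best response can beat the current restricted game, and charging that improvement correctly across the $k$ windows -- to be the main obstacle}; the internal-regret part is otherwise a direct application of the stated CFR bound.

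Finally I would check that the bound vanishes. For $j\le k-2$ the window closes as soon as a \emph{strictly} improving best response to the window average exists and the next multiple of $m(j)$ is reached, so $|T_j|$ is at most a game-dependent constant plus $m(j)$; hence when $m(j)$ is sublinear in $T$ we get $|T_j|/T\to0$, which kills the first sum and every term of the second sum with $j\le k-2$. The surviving term is $j=k-1$: the final restricted game contains (again by Lemma~\ref{lemma:k}) the support of some equilibrium, so $\pi_i^{*}$ is realizable in $\mathbf G_{k-1}$, the approximation part there is non-positive, and since $|T_{k-1}|/T\to1$ and $|T_{k-1}|-m(k-1)+1=\Theta(T)$ the residual is $\mathcal O\!\big(\sqrt k\,|S_i|\sqrt{|A|}/\sqrt T\big)\to0$, which concludes the proof. \qed
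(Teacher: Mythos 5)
Your skeleton matches the paper's: decompose the weighted regret over the $k$ windows, charge the at most $m(j)-1$ ``uncertified'' iterations at the tail of each non-final window at unit cost (giving the first sum), and apply the minimizer's regret bound to the remaining $|T_j|-m(j)+1$ iterations (giving the $1/\sqrt{|T_j|-m(j)+1}$ factor). However, there are two genuine gaps. First, the $\sqrt{k}$ in the numerator does \emph{not} come from a Cauchy--Schwarz aggregation over windows --- that step appears only later, in the XODO corollary, and would collapse the sum over $j$ rather than produce the stated per-window form. In the paper it comes from the observation that the restricted game in window $T_j$ has at most $j+1\le k$ actions per infoset (each window adds at most one new action per infoset, starting from a single pure strategy), so the per-window regret bound $\mathcal{O}(|S_i|\sqrt{|A_{i,j}|/|T_j\setminus V_j|})$ can be relaxed to $\mathcal{O}(|S_i|\sqrt{k/|T_j\setminus V_j|})$. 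With your argument you would be left with $\sqrt{|A|}$ in the numerator, which is not the claimed bound.

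Second, the ``approximation part'' you flag as the main obstacle is handled in the paper not by a telescoping argument but by eliminating it outright: on the prefix $T_j\setminus V_j$ (everything up to the last best-response checkpoint that did \emph{not} grow the population), the algorithm has explicitly verified that the full-game best response to the opponent's weighted-average play over that prefix already lies in $\Pi_j$, hence
$\max_{\pi'_i\in\Pi_j}\sum_{t\in T_j\setminus V_j}w_t\,v_i(\pi'_i,\pi^t_{-i})=\max_{\pi'_i\in\Pi}\sum_{t\in T_j\setminus V_j}w_t\,v_i(\pi'_i,\pi^t_{-i})$,
and the global regret on that prefix \emph{is} the restricted-game regret. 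Your telescoping sketch does not obviously produce a quantity that fits the stated bound (the per-window improvements are each $\Theta(\Delta)$ and do not cancel), and your appeal to Lemma~\ref{lemma:k} for the last window conflates ``the support of some NE is in $\mathbf{G}_{k-1}$'' with ``the best response to the opponent's realized sequence is in $\mathbf{G}_{k-1}$''; the latter is again what the periodic BR checks certify. Finally, your claim that $|T_j|$ is at most a game-dependent constant plus $m(j)$ for $j\le k-2$ is unsupported; the paper instead invokes Assumption~\ref{assume:k} to conclude $|T_j|/T\to 0$ for the non-final windows and $|T_{k-1}|/T\to 1$.
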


The proof is in Appendix \ref{append:RMDO_bound}. If the number of restricted games $m(j)$ is sublinear in the total number of iterations $T$, then the Regret-Minimizing Double Oracle (RMDO) algorithm is an anytime algorithm that converges to Nash Equilibrium (NE) with its overall average strategy. Moreover, with the help of Lemma \ref{lemma:bound_to_ne}, one can easily obtain the expected number of iterations required for the algorithm to converge to $\epsilon$-NE.


\subsection{Last-Window Average Strategy}

The final average strategy obtained from regret minimization in the last time window is an $\epsilon$-NE, requiring at least $\mathcal{O}(|A_{i,k}||S_{i,k}|^2/\epsilon^2)$ iterations to reach, where $A_{i,k}$ and $S_{i,k}$ denote the action and information set space in the last time window $T_k$ of player $i$. The regret bound of the regret minimizer is $\mathcal{O}(|S_{i,k}|\sqrt{|A_{i,k}|/T})$. During the growth of the population, we observe that in each time window, the regret minimizer at each iteration except the last one will not reach $\epsilon$-NE. Otherwise, in a non-last time window, the global best response is already in the population. Thus, the average strategy in this window at this iteration will already reach $\epsilon$-NE, which is contradictory. Utilizing this idea, we can bound the number of iterations required for each time window and provide a sample complexity to reach $\epsilon$-NE.

\begin{theorem}
\label{theo:lw-xodo}
The last-window average strategy of RMDO needs the following number of iterations to reach $\epsilon$-NE:
\begin{align}
    \mathcal{O}(k|A||S|^2/\epsilon^2 -k + \sum_j m(j)).
\end{align}
\end{theorem}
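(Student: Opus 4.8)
The plan is to split the run of RMDO into its $k$ time windows $T_0,\dots,T_{k-1}$ (finite by Lemma~\ref{lemma:k} and Assumption~\ref{assume:k}), bound the length of each window separately, and add them up. Write $\mathbf{G}_j$ for the restricted game used throughout $T_j$, with information sets $S_{i,j}$ and actions $A_{i,j}$; since deleting actions from the population only prunes whole subtrees, $|S_{i,j}|\le|S_i|$ and $|A_{i,j}|\le|A_i|$. Because the strategy is reset to a uniform policy at the start of every window, the CFR-type regret minimizer runs from scratch inside each $\mathbf{G}_j$, so after $t'$ within-window iterations its weighted-average regret is $\mathcal{O}(|S_{i,j}|\sqrt{|A_{i,j}|/t'})$; by Lemma~\ref{lemma:bound_to_ne} the within-window average $\tilde\pi^{t'}$ is an $\mathcal{O}(|S_j|\sqrt{|A_j|/t'})$-NE of $\mathbf{G}_j$, hence an $\epsilon$-NE of $\mathbf{G}_j$ once $t'\ge N_j$ with $N_j:=\mathcal{O}(|A_{i,j}||S_{i,j}|^2/\epsilon^2)=\mathcal{O}(|A||S|^2/\epsilon^2)$, and this property persists for larger $t'$ since the bound decreases in $t'$.

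The second ingredient is a bridge between the restricted and the original game at a best-response checkpoint: if $\tilde\pi^{t'}$ is an $\epsilon$-NE of $\mathbf{G}_j$ and both full-game best responses $\mathbf{a}_i^{t'}=\mathbf{BR}_i(\tilde\pi^{t'}_{-i})$ already belong to the population $\Pi_{t'}$, then $\tilde\pi^{t'}$ is an $\epsilon$-NE of the original game. This holds because every population pure strategy is feasible in $\mathbf{G}_j$, so $\max_{\pi_i\in\mathbf{G}_j}v_i(\pi_i,\tilde\pi^{t'}_{-i})\ge v_i(\mathbf{a}_i^{t'},\tilde\pi^{t'}_{-i})=\max_{\pi_i}v_i(\pi_i,\tilde\pi^{t'}_{-i})$, while $v_i(\tilde\pi^{t'})$ is identical in $\mathbf{G}_j$ and in the original game (both strategies use only population actions); combining with the $\epsilon$-NE inequality of $\mathbf{G}_j$ yields $v_i(\tilde\pi^{t'})\ge\max_{\pi_i}v_i(\pi_i,\tilde\pi^{t'}_{-i})-\epsilon$ for $i=1,2$. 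Since a window stays open past a checkpoint $t'$ exactly when both $\mathbf{a}_i^{t'}\in\Pi_{t'}$, this says that once $t'\ge N_j$ a window can remain open only if it has already produced an $\epsilon$-NE of the full game.

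For the last window $T_{k-1}$ no new best response is ever added, so at every checkpoint both $\mathbf{a}_i\in\Pi$; by the bridge and Step~1, the first checkpoint with within-window index $t'\ge N_{k-1}$ — which occurs by $t'\le N_{k-1}+m(k-1)-1$ since checkpoints are spaced $m(k-1)$ apart — makes the last-window average an $\epsilon$-NE of the original game. For a non-last window $T_j$ ($j\le k-2$): were it still open at the first checkpoint $t'\ge N_j$ (at latest $N_j+m(j)-1$), the bridge would force $\tilde\pi^{t'}$ to already be an $\epsilon$-NE of the full game, so RMDO would have hit the target before $T_j$ closed; hence for the purpose of counting iterations to $\epsilon$-NE we may take $|T_j|\le N_j+m(j)-1$ (if a new best response appeared earlier, $|T_j|$ is only smaller). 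Summing over windows, the number of iterations until the last-window average reaches $\epsilon$-NE is at most $\sum_{j=0}^{k-1}(N_j+m(j)-1)\le k\cdot\mathcal{O}(|A||S|^2/\epsilon^2)-k+\sum_j m(j)=\mathcal{O}(k|A||S|^2/\epsilon^2-k+\sum_j m(j))$, where the $-k$ is exactly the ``$-1$'' per window and $\sum_j m(j)$ the checkpoint slack.

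I expect the main obstacle to be the control of non-last window lengths through the bridge lemma: one must argue carefully that an $\epsilon$-NE of $\mathbf{G}_j$ whose full-game best responses already lie in the current population is automatically an $\epsilon$-NE of the whole game — which needs the exact agreement of restricted- and full-game values on the relevant strategies, plus feasibility of population pure strategies in $\mathbf{G}_j$ — and then conclude that a non-last window cannot survive appreciably past the moment CFR has $\epsilon$-solved its restricted game. The remaining steps — turning the CFR regret bound into the per-window iteration count $N_j$ via Lemma~\ref{lemma:bound_to_ne}, replacing $|S_{i,j}|,|A_{i,j}|$ by $|S_i|,|A_i|$, and tracking the $m(j)$ offset from checkpoints being tested only every $m(j)$ iterations together with the per-window reset — are routine bookkeeping.
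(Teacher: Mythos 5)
Your proposal is correct and follows essentially the same route as the paper's own proof: bound each non-last window by $\mathcal{O}(|A||S|^2/\epsilon^2) + m(j) - 1$ via the contradiction that a longer window would hit a best-response checkpoint at which the restricted-game $\epsilon$-NE with in-population best responses is already a full-game $\epsilon$-NE, treat the last window the same way, and sum over the $k$ windows. The paper leaves the restricted-to-full-game ``bridge'' step implicit; your explicit justification of it (agreement of values on population strategies plus feasibility of the full-game best response in the restricted game) is a welcome elaboration rather than a departure.
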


The proof is in Appendix \ref{append:lw-xodo_it}. Utilizing this result, we can estimate the sample complexity required for RMDO to achieve $\epsilon$-NE. It is noteworthy that as the exact regret minimizer will traverse the entire game tree, the sample complexity is at most $\mathcal{O}(|S|)$. Hence, the sample complexity for the regret minimization part is given as $\mathcal{O}(k|A||S|^3/\epsilon^2 -k|S| + |S|\sum_j m(j))$.

In addition, we also need to investigate the sample complexity involved in computing the best response (BR). As BR computation also requires full tree traversal, we only need to consider the number of times RMDO computes BR during training. Since it is reliant on the selection of frequency function $m(j)$, we will examine the overall sample complexity in the following section where we introduce RMDO with various schemes of frequency functions and demonstrate how it generalizes to existing methods. Following this analysis, we propose a more sample-efficient algorithm in comparison to existing methods.

\section{Efficient Schemes of Frequency Function}
The complexity of approximating NE using RMDO is influenced by the choice of frequency function $m(j)$ for best response computation, as demonstrated in the previous section through theoretical analysis. For analysis on existing methods, we present various RMDO instantiations with distinct frequency schemes in this section.

\subsection{Online Double Oracle for Extensive-Form Games}

\begin{algorithm}[t]
    \caption{Extensive-Form Online Double Oracle}
    \label{alg:xodo}
\begin{algorithmic}
    \STATE Initial empty population $\Pi_0$, time window $T_0$
    \FOR{$t=0,\cdots,\infty$}
    \STATE Construct restricted game $\mathbf{G}_t$ with $\Pi_t$.
    \STATE Update strategy $\pi^t$ in $\mathbf{G}_t$.
    \FOR {$i = 1,2$}
    \STATE Get BR: $\mathbf{a}_i^t = \arg\max_{\pi_i \in \Pi} \sum_{t' \in T_j} v_i(\pi_i, \pi_{-i}^{t'}).$
    \STATE Update population $\Pi_{t+1} = \Pi_t \cup \mathbf{a}_i^t$.
    \ENDFOR
    \IF{$\Pi_t \neq \Pi_{t+1}$}
    \STATE Start new window $T_{j+1}$ and Reset strategy $\pi_i^{t+1}$.
    \ENDIF
    \ENDFOR
\end{algorithmic}
\end{algorithm}

We propose an extension of the Online Double Oracle (ODO) algorithm, known as the Extensive-Form Online Double Oracle (XODO), which combines the Sequence-form DO framework with Counterfactual Regret Minimization (CFR) to solve extensive-form games. The algorithm is described in detail in Algorithm \ref{alg:xodo}, where the construction and update of the restricted game and strategy are performed in a similar manner to the DO framework used in ODO. However, in each iteration, XODO expands the restricted game by computing the best response against the average strategy in the current window. As XODO computes the best response in each iteration after regret minimization, it is equivalent to the RMDO algorithm with $m(j)=1$. Based on Theorem \ref{theo:RMDO_bound}, XODO has a regret bound.

\begin{corollary}
\label{theo:xodo_bound} 
In XODO, given the regret minimizer with $\mathcal{O}(|S_i|\sqrt{|A|T})$ regret upper bound, the weighted-average regret bound of XODO will be:
\begin{align}
    \mathcal{O}( \frac{|S_i|k}{\sqrt{T}}).
\end{align}
\end{corollary}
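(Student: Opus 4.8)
The plan is to obtain Corollary~\ref{theo:xodo_bound} as an immediate specialization of Theorem~\ref{theo:RMDO_bound}. Since XODO is precisely RMDO run with the frequency function $m(j)=1$ for every window index $j$ (a best response is computed after every single CFR iteration), $m(j)$ is constant and hence trivially sublinear in $T$, so Theorem~\ref{theo:RMDO_bound} applies and its bound converges to $0$. I would then substitute $m(j)=1$ into \eqref{eq:rmdo} and simplify.

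The substitution kills the first sum: $\sum_{j=0}^{k-2}\frac{|T_j|}{T}[m(j)-1]=0$ because $m(j)-1=0$. In the second sum the denominator collapses, $\sqrt{|T_j|-m(j)+1}=\sqrt{|T_j|}$ (which is well defined and positive since every window contains at least one iteration, so $|T_j|\ge 1$), leaving
\begin{align}
\mathcal{O}\!\left(\sum_{j=0}^{k-1}\frac{\sqrt{k}\,|S_i|\,|T_j|}{T\sqrt{|T_j|}}\right)=\mathcal{O}\!\left(\frac{\sqrt{k}\,|S_i|}{T}\sum_{j=0}^{k-1}\sqrt{|T_j|}\right).
\end{align}
The final step is to control $\sum_{j=0}^{k-1}\sqrt{|T_j|}$. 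Because the windows partition the iterations $1,\dots,T$, we have $\sum_{j=0}^{k-1}|T_j|=T$, and Cauchy--Schwarz gives $\sum_{j=0}^{k-1}\sqrt{|T_j|}\le\sqrt{k\sum_{j=0}^{k-1}|T_j|}=\sqrt{kT}$. Plugging this in yields $\mathcal{O}\!\big(\frac{\sqrt{k}\,|S_i|}{T}\sqrt{kT}\big)=\mathcal{O}\!\big(\frac{|S_i|k}{\sqrt{T}}\big)$, as claimed.

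There is no real obstacle here — the argument is a one-line specialization plus a single application of Cauchy--Schwarz. The only points worth stating explicitly are that $m(j)=1$ makes XODO a valid instance of RMDO and keeps the hypotheses of Theorem~\ref{theo:RMDO_bound} satisfied; that each window is nonempty so the square roots are harmless; and that Assumption~\ref{assume:k} ($k<\infty$) ensures the sum over windows is finite, so the resulting $\sqrt{k}\cdot\sqrt{k}=k$ factor is meaningful even in the limit $T\to\infty$. One could also remark that the Cauchy--Schwarz bound is attained up to constants when the windows have equal length $T/k$, so the $k/\sqrt{T}$ scaling is essentially tight for this proof technique.
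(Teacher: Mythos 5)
Your proof is correct and follows exactly the paper's own argument: substitute $m(j)=1$ into \eqref{eq:rmdo} so the first sum vanishes and the second becomes $\mathcal{O}\bigl(|S_i|\sqrt{k}\sum_j\sqrt{|T_j|}/T\bigr)$, then apply Cauchy--Schwarz to get $\sum_j\sqrt{|T_j|}\le\sqrt{kT}$. Your additional remarks on nonempty windows and tightness are fine but not needed beyond what the paper already does.
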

\begin{proof}
Plug $m(j)=1$ into equation \ref{eq:rmdo}, the upper bound will be $\mathcal{O}(|S_i|\sqrt{k}\sum_j\sqrt{|T_j|}/T)$. According to Cauchy-Schwartz inequality, $\sum_j\sqrt{|T_j|}\leq \sqrt{k\sum_j |T_j|}=\sqrt{kT}$, then the upper bound becomes $\mathcal{O}(|S_i|k/\sqrt{T})$.
\end{proof}

Then according to lemma \ref{lemma:k}, $k\leq |S|$, XODO has a sublinear regret upper bound. According to the regret to strategy conversion in Lemma \ref{lemma:bound_to_ne}, the overall average strategy of XODO requires $\mathcal{O}(|S|^2k^2/\epsilon^2)$ iterations to reach $\epsilon$-NE. We can further derive the sample complexity:

\begin{proposition}
\label{theo:xodo_sc}
Since XODO compute BR in each iteration, the sample complexity to reach $\epsilon$-NE is $\mathcal{O}(2|S|^3k^2/\epsilon^2)$. (Proof in Appendix \ref{append:xodo_sc}).
\end{proposition}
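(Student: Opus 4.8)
The plan is to combine the iteration count already established for the overall average strategy of XODO with a per-iteration sample cost and then verify that the best-response work does not dominate. From the discussion following Corollary \ref{theo:xodo_bound}, the overall average strategy of XODO reaches $\epsilon$-NE in $\mathcal{O}(|S|^2 k^2 / \epsilon^2)$ iterations. The first step is to account for the cost of one XODO iteration. An XODO iteration consists of (i) one iteration of CFR in the restricted game $\mathbf{G}_t$ and (ii) one exact best-response computation against the current-window average strategy. Since the exact CFR variant we use traverses the full game tree, step (i) costs $\mathcal{O}(|S|)$ samples; and since an exact best response is itself computed by a full tree traversal (a single backward induction pass), step (ii) also costs $\mathcal{O}(|S|)$ samples. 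Hence each iteration costs $\mathcal{O}(|S|)$, and multiplying by the iteration bound gives $\mathcal{O}(|S|\cdot |S|^2 k^2/\epsilon^2) = \mathcal{O}(|S|^3 k^2/\epsilon^2)$ samples for the CFR part and the same order for the BR part.

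The second step is to add these two contributions. The regret-minimization sampling cost is $\mathcal{O}(|S|^3 k^2/\epsilon^2)$ and the best-response sampling cost is $\mathcal{O}(|S|^3 k^2/\epsilon^2)$ as well, because XODO computes a BR in \emph{every} iteration (this is exactly the $m(j)=1$ instantiation of RMDO), so the number of BR calls equals the number of iterations. Summing the two terms yields $\mathcal{O}(2|S|^3 k^2/\epsilon^2)$, which is the claimed bound; I would keep the explicit factor of $2$ to emphasize that the BR computation contributes on equal footing with regret minimization, in contrast to schemes with larger $m(j)$ where the BR cost is negligible.

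The main obstacle, such as it is, is bookkeeping rather than mathematical depth: one must make sure the iteration count being plugged in is the one for the \emph{overall average strategy} (the anytime guarantee from Corollary \ref{theo:xodo_bound} via Lemma \ref{lemma:bound_to_ne}), not the last-window count from Theorem \ref{theo:lw-xodo}, and one must justify that the window resets do not inflate the per-iteration cost — resetting the strategy $\pi^{t+1}$ to uniform and rebuilding the restricted game are both $\mathcal{O}(|S|)$ operations, so they are absorbed into the per-iteration $\mathcal{O}(|S|)$ already charged. A secondary point worth stating cleanly is why an exact best response costs only $\mathcal{O}(|S|)$: in an extensive-form game with perfect recall, $\mathbb{BR}_i(\pi_{-i})$ is obtained by a single bottom-up traversal computing, at each of player $i$'s information sets, the action maximizing the counterfactual value induced by the fixed $\pi_{-i}$, so the cost is linear in the number of information sets (up to the branching factor $|A|$, which we suppress in the same way the regret bounds in the paper do). Once these two clarifications are in place, the arithmetic is immediate.
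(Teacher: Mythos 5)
Your proposal is correct and follows essentially the same route as the paper's own proof: take the $\mathcal{O}(|S|^2k^2/\epsilon^2)$ iteration count for the overall average strategy, charge $\mathcal{O}(|S|)$ per iteration for the full-tree CFR traversal and another $\mathcal{O}(|S|)$ for the per-iteration best response, and sum the two equal contributions to get $\mathcal{O}(2|S|^3k^2/\epsilon^2)$. The extra clarifications you add (why an exact BR is a single $\mathcal{O}(|S|)$ traversal, and why the overall-average rather than last-window iteration count is the right one to plug in) are sound and, if anything, make the argument more explicit than the paper's version.
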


\subsection{Extensive-Form Double Oracle}

The Extensive-form Double Oracle (XDO) algorithm is initialized with a given threshold $\epsilon_0$, which is divided by two each time the local exploitability of the regret minimizer meets the threshold. The local exploitability is the exploitability in the restricted game. In time window $T_j$, the algorithm performs regret minimization for more than $4^j|S_{i,j}|^2|A_{i,j}|/\epsilon_0^2$ iterations before computing the best response. Here $A_{i,j}$ and $S_{i,j}$ denote the action space and infoset space in the $j$-th time window of player $i$. Finally the average strategy in the last window is outputted when the convergence condition is met. If XDO converges, the last-window average strategy is $\epsilon_0/2^k$-NE. To investigate the complexity of reaching $\epsilon$-NE, it is assumed without loss of generality that $\epsilon_0/2^k\leq \epsilon$.

Thus, RMDO can generalize to XDO with $m(j)\geq 4^j|S_{i,j}|^2|A_{i,j}|/\epsilon_0^2$ and the last-window average strategy. Based on Theorem \ref{theo:lw-xodo}, we can determine the expected iterations and sample complexity for XDO to reach $\epsilon$-NE.

\begin{corollary}
\label{theo:xdo_cr}
XDO needs at least $\mathcal{O}(k|A||S|^2/\epsilon^2 + |A||S|^2 4^{k}/\epsilon_0^2-k)$ iterations to reach $\epsilon$-NE. 
\end{corollary}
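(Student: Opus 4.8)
The plan is to derive this directly from Theorem~\ref{theo:lw-xodo} by specializing to the XDO frequency schedule and then summing a geometric series. First I would record the two facts already established in the text: (i) XDO is the instance of RMDO that outputs the last-window average strategy and uses a frequency function satisfying $m(j)\ge 4^{j}|S_{i,j}|^2|A_{i,j}|/\epsilon_0^2$; and (ii) under the standing normalization $\epsilon_0/2^{k}\le \epsilon$, once the regret minimizer in the final restricted game has converged, the last-window average strategy is $\epsilon_0/2^{k}$-NE and hence an $\epsilon$-NE. Consequently Theorem~\ref{theo:lw-xodo} applies and bounds the number of iterations by $\mathcal{O}\!\left(k|A||S|^2/\epsilon^2 - k + \sum_{j} m(j)\right)$, so it only remains to control $\sum_{j} m(j)$.

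Second, I would bound the sum $\sum_{j=0}^{k-1} m(j)$. Using the monotonicity of the restricted game as the population grows --- $|S_{i,j}|\le |S_i|\le |S|$ and $|A_{i,j}|\le|A_i|\le|A|$ for every window $j$ --- together with the geometric identity $\sum_{j=0}^{k-1}4^{j} = (4^{k}-1)/3$, one gets $\sum_{j} m(j) = \mathcal{O}\!\left(|A||S|^2\,4^{k}/\epsilon_0^2\right)$, with the last window $j=k-1$ dominating. Substituting into the bound from Theorem~\ref{theo:lw-xodo} yields $\mathcal{O}\!\left(k|A||S|^2/\epsilon^2 - k + |A||S|^2\,4^{k}/\epsilon_0^2\right)$, which is exactly the claimed $\mathcal{O}(k|A||S|^2/\epsilon^2 + |A||S|^2 4^{k}/\epsilon_0^2 - k)$ after reordering terms.

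The geometric sum is routine; the step that deserves care is justifying that this is genuinely a lower bound (the ``needs at least'' phrasing), i.e.\ that XDO cannot finish window $j$ in fewer than $\Omega(4^{j}|S_{i,j}|^2|A_{i,j}|/\epsilon_0^2)$ iterations. Here I would reuse the contradiction argument stated just before Theorem~\ref{theo:lw-xodo}: in a non-last window, if CFR drove the local exploitability below the threshold $\epsilon_0/2^{j}$ --- equivalently, reached an $\epsilon$-NE of the restricted game --- before the global best response was appended to the population, then the restricted-game average strategy would already be an $\epsilon$-NE of the full game, contradicting the population still growing. Combined with the quantitative average-regret guarantee $\mathcal{O}(|S_{i,j}|\sqrt{|A_{i,j}|/T})$ of CFR, forcing this quantity below $\epsilon_0/2^{j}$ requires $T=\Omega(4^{j}|S_{i,j}|^2|A_{i,j}|/\epsilon_0^2)$ iterations in that window alone; summing over $j$ reproduces the $4^{k}$ factor and, since $k$ can be as large as $\sum_i|S_i|$ by Lemma~\ref{lemma:k}, pins the exponential dependence on the number of information sets that motivates the next section.
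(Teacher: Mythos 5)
Your proposal matches the paper's own proof: both specialize Theorem~\ref{theo:lw-xodo} to the XDO schedule $m(j)\ge 4^{j}|S_{i,j}|^2|A_{i,j}|/\epsilon_0^2$, bound the window-wise quantities by $|S|$ and $|A|$, and sum the geometric series $\sum_{j}4^{j}=(4^{k}-1)/3=\mathcal{O}(4^{k})$ to obtain the stated bound. Your additional paragraph justifying the ``needs at least'' reading via the contradiction argument is a reasonable elaboration of what the paper leaves implicit, but it does not change the route.
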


\begin{proposition}
\label{theo:xdo_sc}
Since XDO computes BR only at the end of the time window before convergence, the sample complexity to reach $\epsilon$-NE is at least $\mathcal{O}(k|A||S|^3/\epsilon^2 + |A||S|^3 4^k/\epsilon_0^2)$.
\end{proposition}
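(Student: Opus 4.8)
The plan is to combine the iteration lower bound already established for XDO's last-window average strategy (Corollary~\ref{theo:xdo_cr}) with the per-iteration sample cost of the two operations XDO performs at each step: one CFR update inside the current restricted game, and, at the end of each time window, one best-response computation in the original game. Since an exact CFR traversal and a best-response traversal both visit the full game tree, each call costs $\mathcal{O}(|S|)$ samples, so the sample complexity is obtained by multiplying iteration counts by $\mathcal{O}(|S|)$ and adding the best-response overhead.

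First I would recall that XDO is the instantiation of RMDO with last-window output and frequency function $m(j)\geq 4^j|S_{i,j}|^2|A_{i,j}|/\epsilon_0^2$, so Theorem~\ref{theo:lw-xodo} applies and the number of iterations to reach $\epsilon$-NE is $\mathcal{O}(k|A||S|^2/\epsilon^2-k+\sum_j m(j))$. Bounding $|S_{i,j}|\le|S|$ and $|A_{i,j}|\le|A|$ uniformly and evaluating the geometric sum $\sum_{j=0}^{k-1}4^j|S|^2|A|/\epsilon_0^2=\mathcal{O}(4^k|S|^2|A|/\epsilon_0^2)$ recovers Corollary~\ref{theo:xdo_cr}. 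Multiplying this iteration count by the $\mathcal{O}(|S|)$ cost of one CFR iteration gives the regret-minimization sample cost $\mathcal{O}(k|A||S|^3/\epsilon^2-k|S|+|S|\sum_j m(j))=\mathcal{O}(k|A||S|^3/\epsilon^2+|A||S|^3 4^k/\epsilon_0^2)$, matching the stated expression up to lower-order terms.

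Next I would account for the best-response calls separately. Before convergence XDO computes a best response only once, at the end of each time window, and by Lemma~\ref{lemma:k} / Assumption~\ref{assume:k} there are at most $k$ windows; each call costs $\mathcal{O}(|S|)$, so the total best-response sample cost is $\mathcal{O}(k|S|)$. This is dominated by the regret-minimization term $\mathcal{O}(k|A||S|^3/\epsilon^2)$, so adding the two contributions does not change the asymptotics and yields the claimed bound $\mathcal{O}(k|A||S|^3/\epsilon^2+|A||S|^3 4^k/\epsilon_0^2)$.

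The main obstacle is essentially bookkeeping rather than a deep difficulty: handling the window-dependent and player-dependent sizes $|S_{i,j}|,|A_{i,j}|$ consistently (resolved by the uniform bounds above, which make the geometric sum telescope cleanly), and justifying the "at least" in the statement — this comes from the fact that $m(j)$ is only lower-bounded, so the true iteration count, and hence the sample count, can only be larger, together with the standing assumption $\epsilon_0/2^k\le\epsilon$ that guarantees the last-window target accuracy is genuinely $\epsilon$. I would also remark that the exponential $4^k$ factor is inherited directly from XDO's exponentially decaying stopping threshold $\epsilon_0/2^j$, which is precisely the phenomenon the paper highlights and that PDO is designed to avoid.
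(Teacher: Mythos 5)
Your proposal is correct and follows essentially the same route as the paper: invoke the iteration bound of Corollary~\ref{theo:xdo_cr} (itself obtained from Theorem~\ref{theo:lw-xodo} with $m(j)\geq 4^j|S|^2|A|/\epsilon_0^2$ and the geometric sum), charge $\mathcal{O}(|S|)$ per full tree traversal, and add the $k$ best-response calls, which are dominated. The only cosmetic difference is that you separate the BR cost as an additive $\mathcal{O}(k|S|)$ term while the paper folds the $k$ extra traversals into the iteration count before multiplying by $|S|$; these are the same computation.
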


Corollary \ref{theo:xdo_cr} is a specific instance of Theorem \ref{theo:lw-xodo} with an appropriate choice of $m(j)$, and its proof is provided in Appendix \ref{append:gxdo_sc}. The sample complexity of XDO is analyzed in Proposition \ref{theo:xdo_sc}, and its proof can be found in Appendix \ref{append:xdo_sc}. Theorem \ref{lemma:k} states that $k\leq |S|$; thus, theoretically, the restricted game stopping condition of XDO decays exponentially, implying that in the worst-case scenario, when $k=|S|$, XDO has an exponential sample complexity in the number of infosests. Therefore, XDO suffers from a large theoretical sample complexity. Empirically, the values of $k$ when executing XDO on common poker games leads to a large sample complexity (Appendix \ref{append:xdo}).

\begin{figure*}[t!]
     \centering
\begin{subfigure}
    \centering
    \includegraphics[width=.32\textwidth]{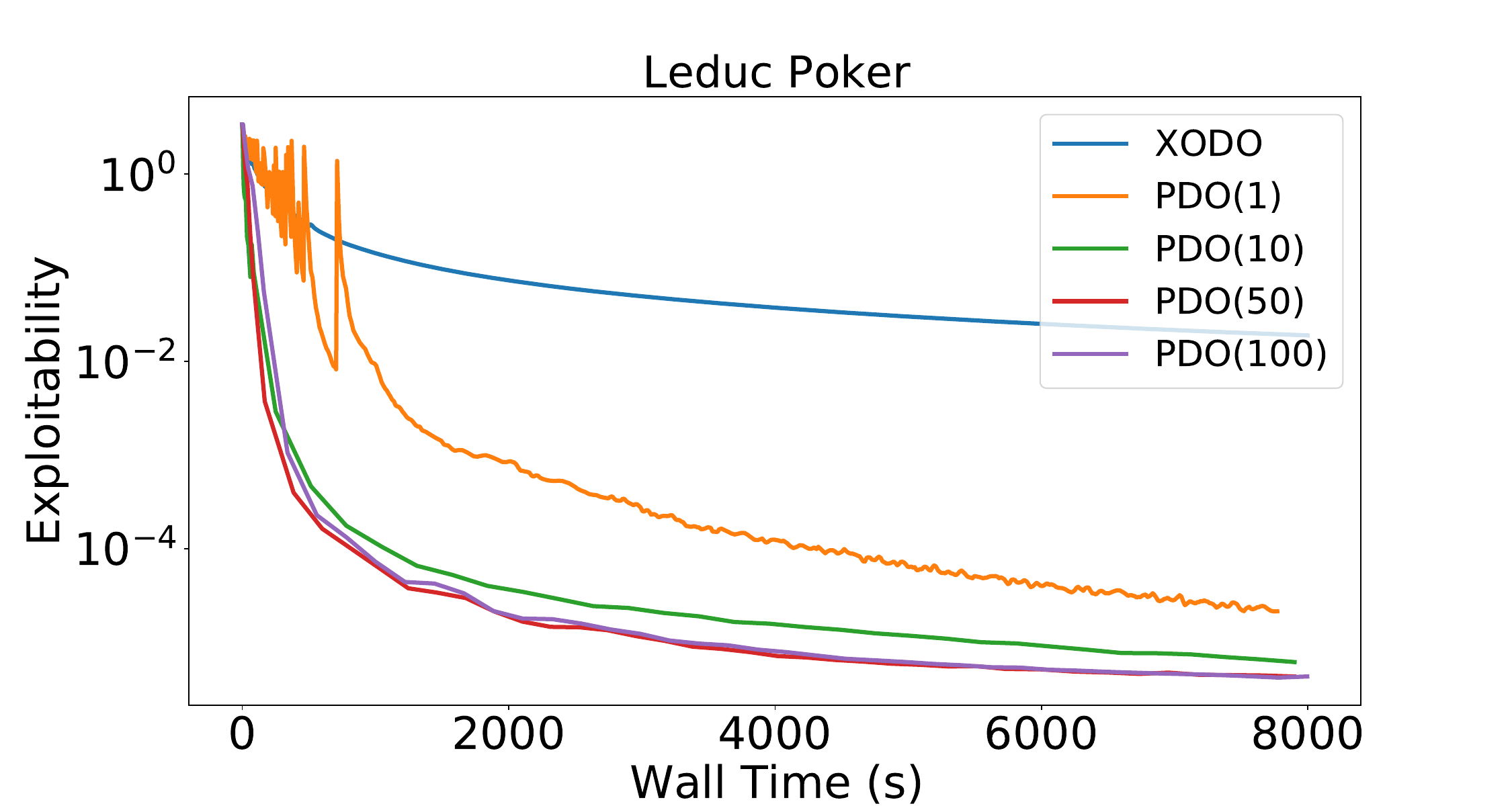}     
\end{subfigure}
\begin{subfigure}
    \centering
    \includegraphics[width=.32\textwidth]{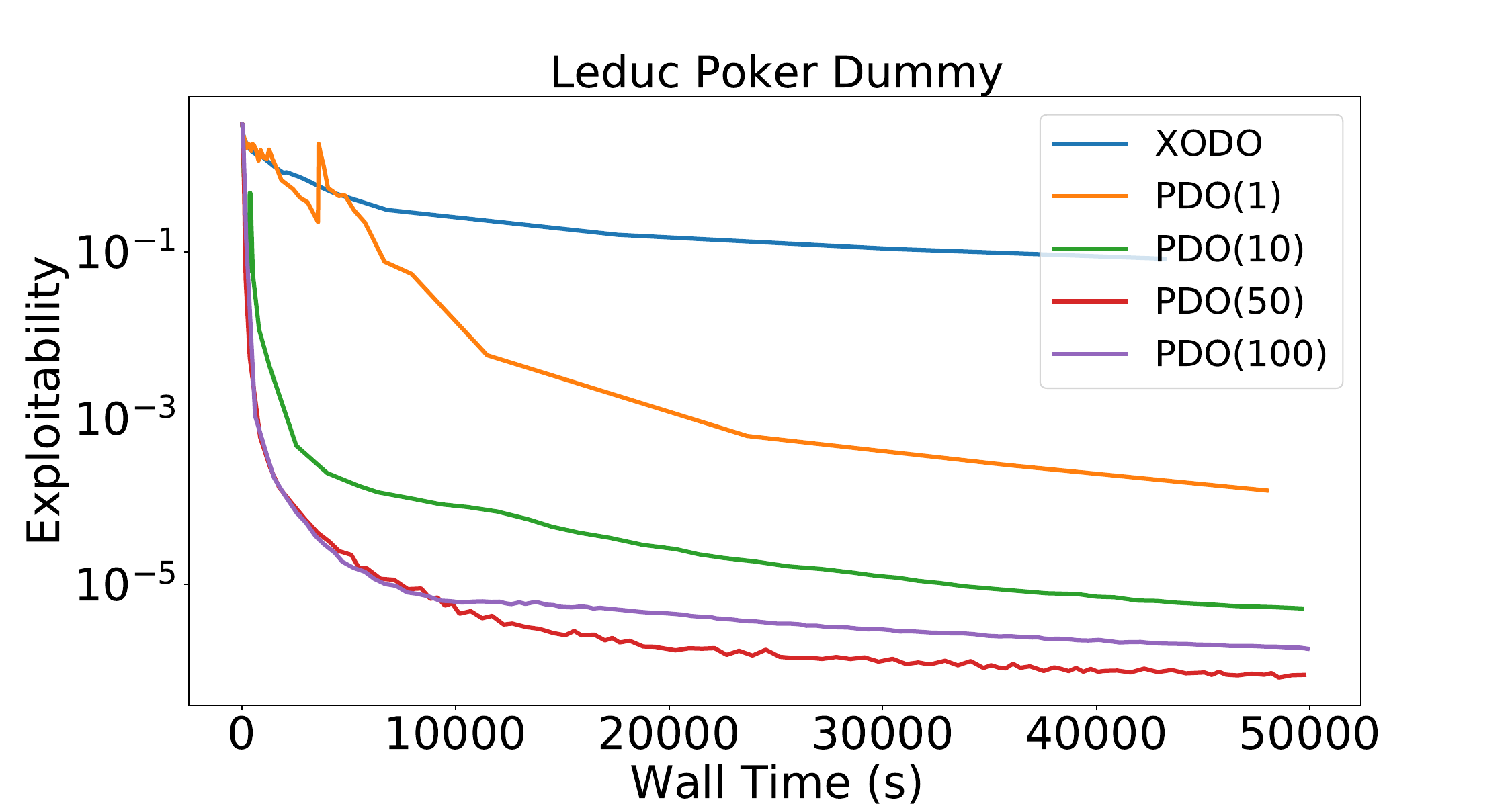}
\end{subfigure}
\begin{subfigure}
    \centering
    \includegraphics[width=.32\textwidth]{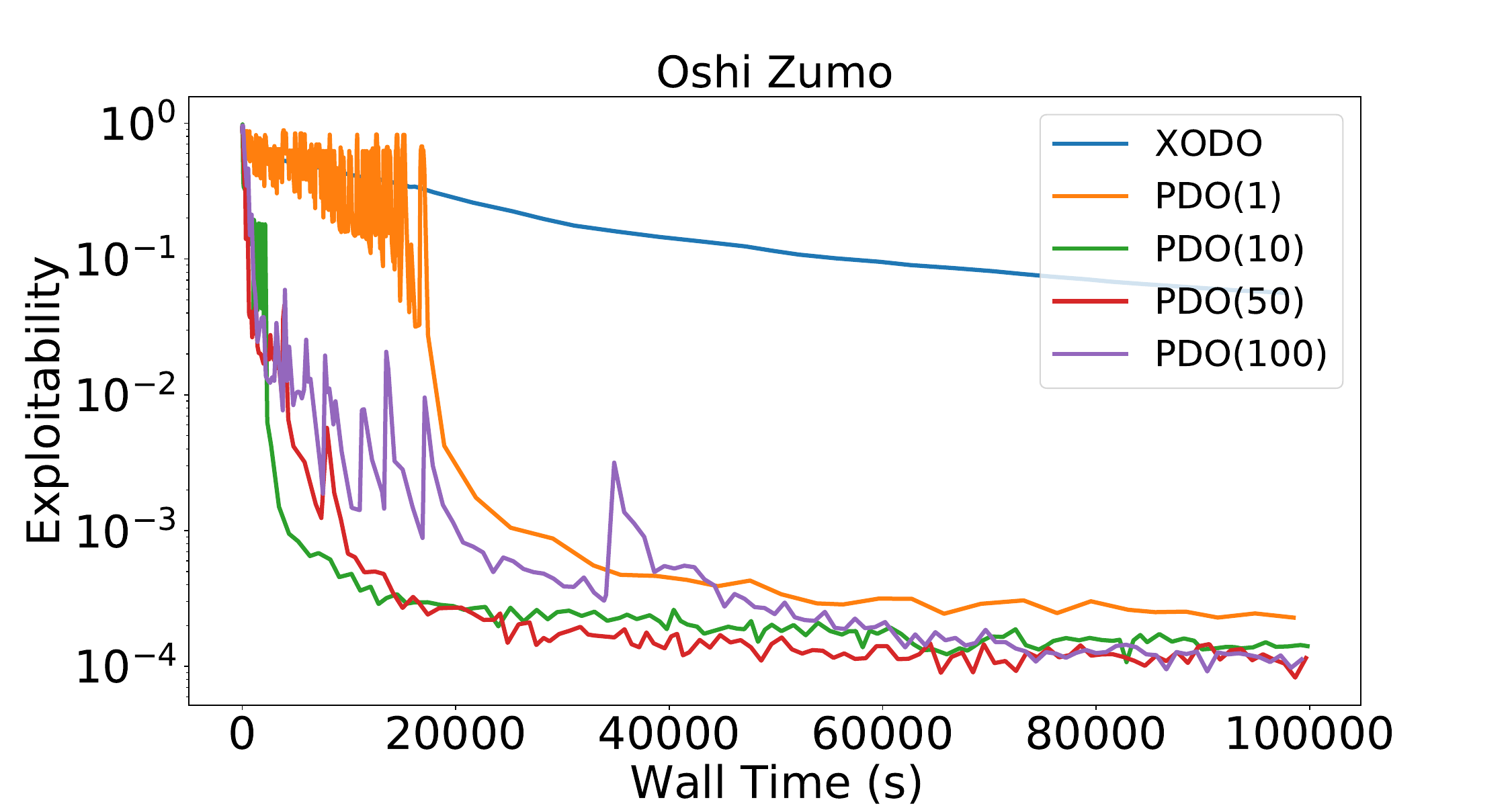}     
\end{subfigure}
\caption{Performance of XODO and PDO with periodicity function $m(j)=1,10,50,100$ on Leduc Poker, Dummy Leduc Poker and Oshi Zumo.}
  \label{fig:exploitability_ablation}
\end{figure*}

\begin{figure*}[t!]
     \centering
\begin{subfigure}
    \centering
    \includegraphics[width=.32\textwidth]{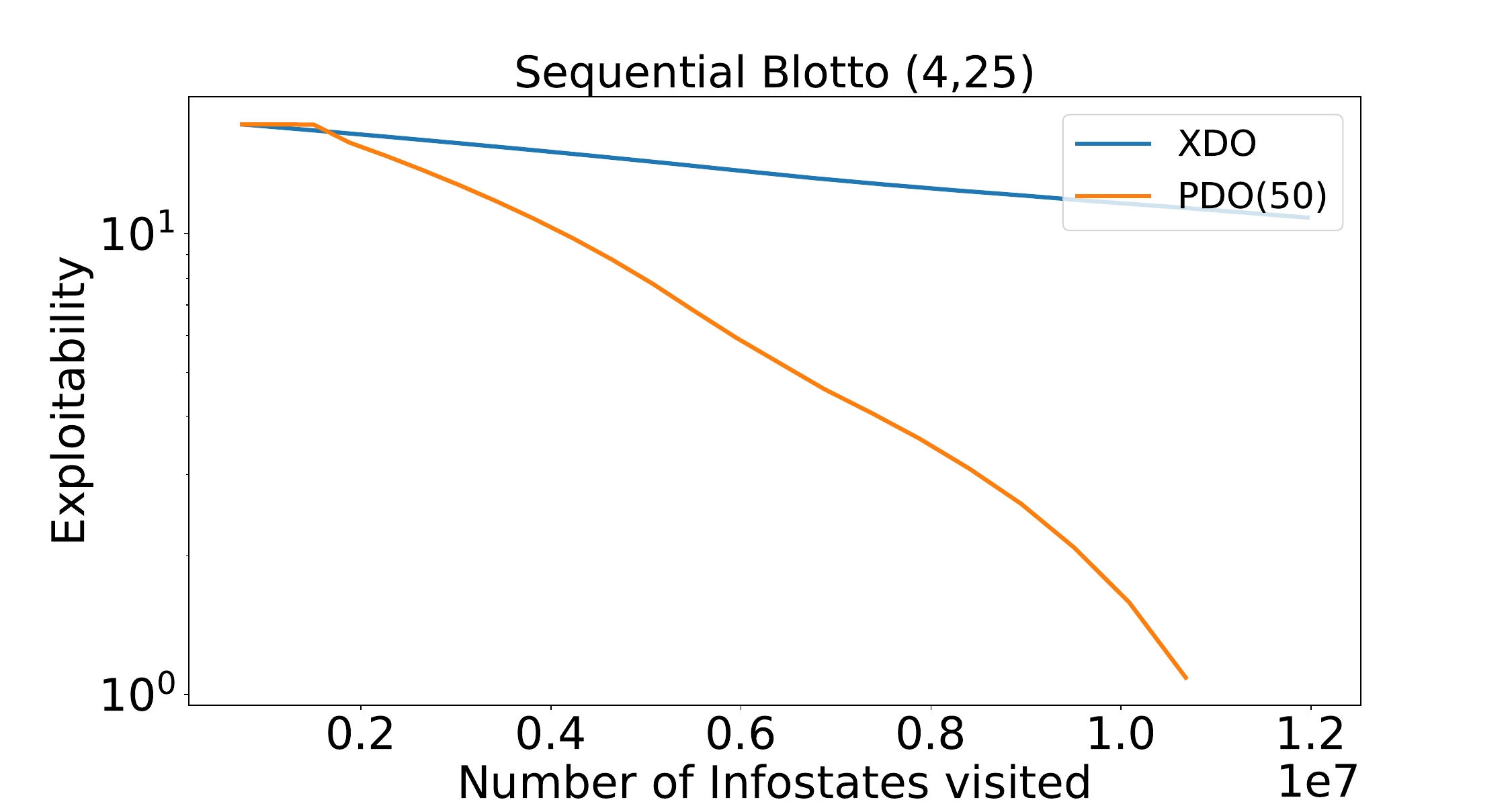}     
\end{subfigure}
\begin{subfigure}
    \centering
    \includegraphics[width=.32\textwidth]{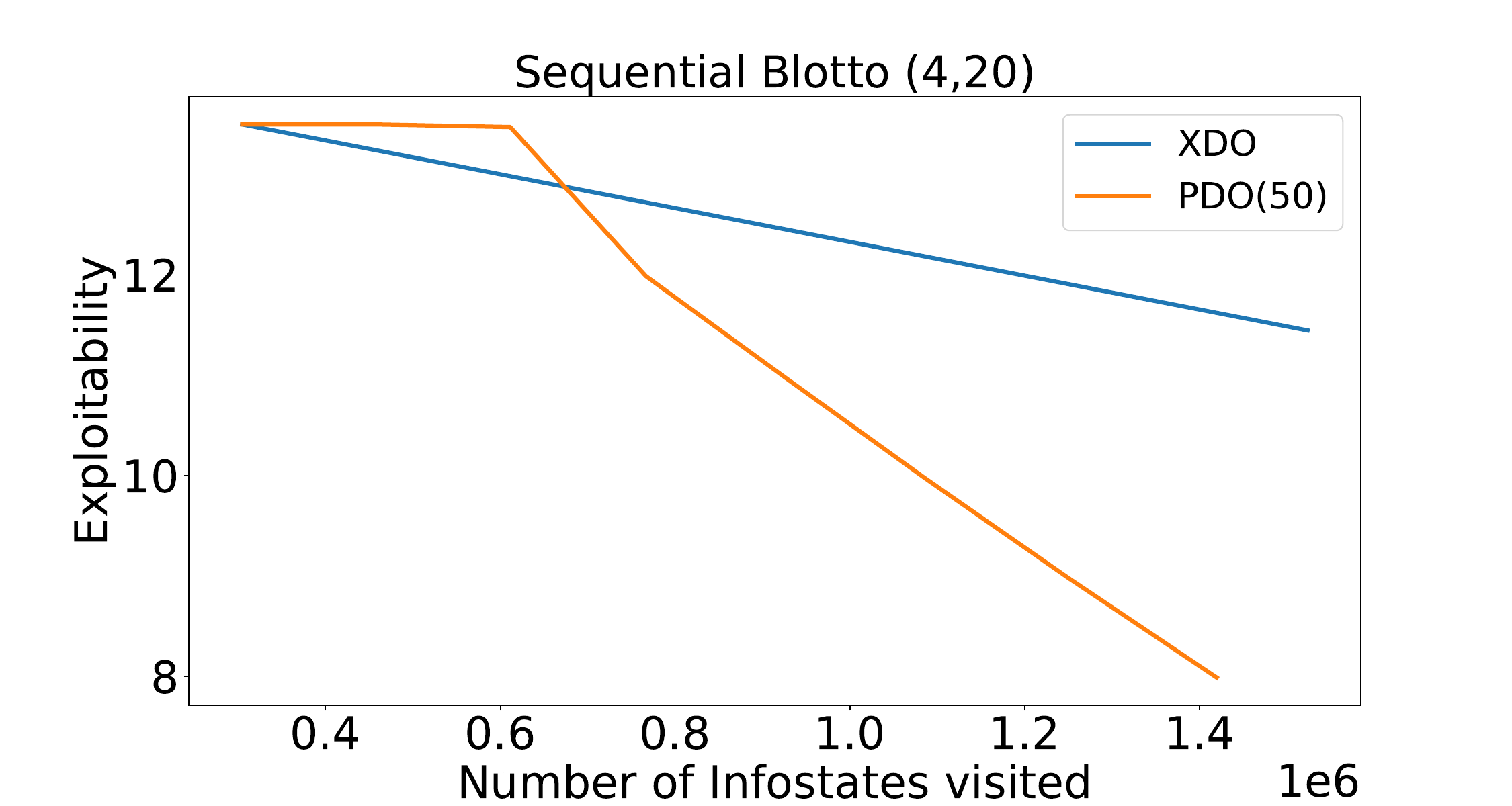}
\end{subfigure}
\begin{subfigure}
    \centering
    \includegraphics[width=.32\textwidth]{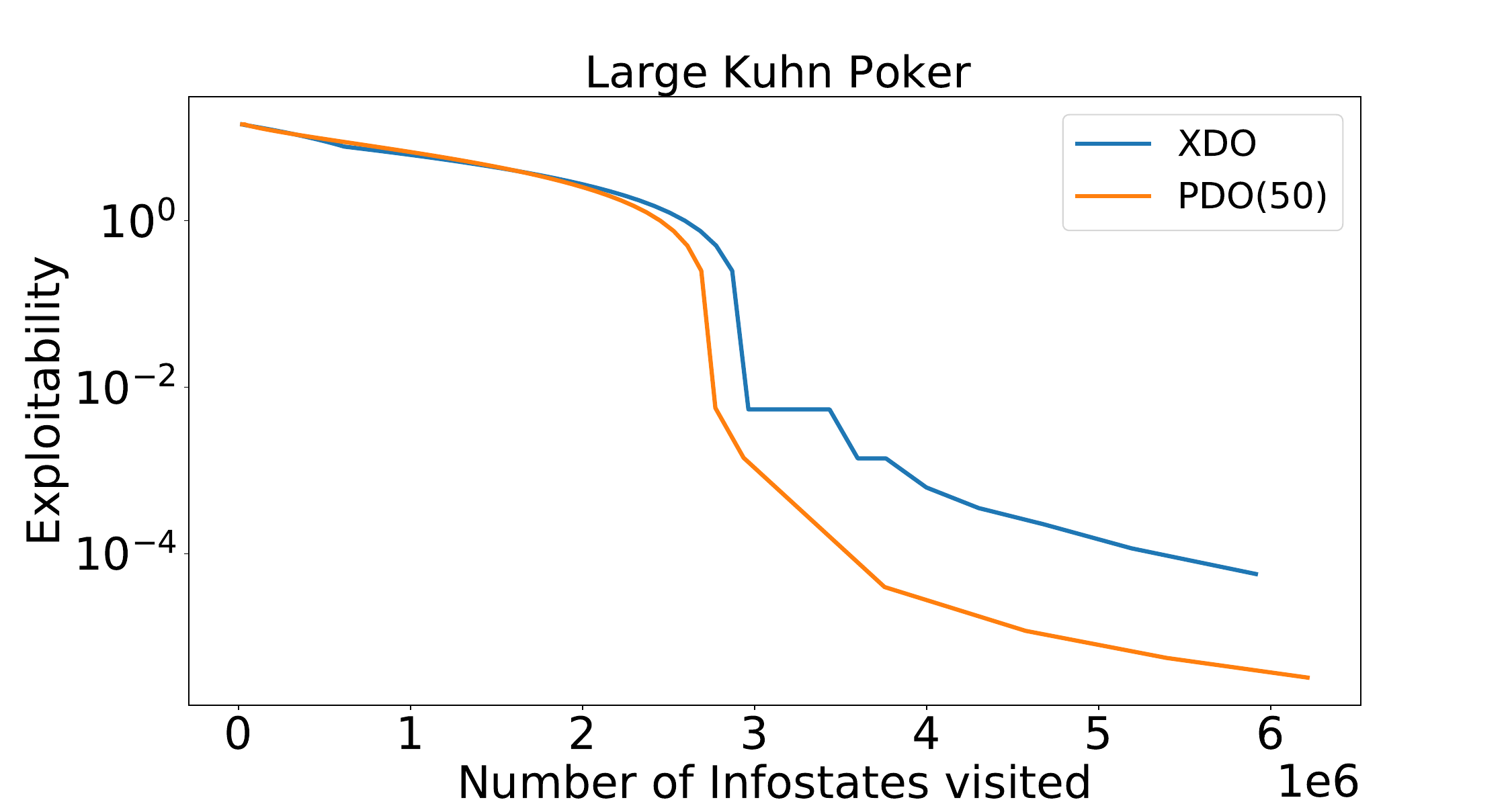}     
\end{subfigure}
\caption{Performance of PDO with periodicity $50$ and XDO on Sequential Blotto Games and Kuhn Poker with initial pot $40$ for each player. In Sequential Blotto, XDO is still significantly more exploitable than PDO even after visiting more than $10^6$ infosets.}
  \label{fig:exploitability_pdo_xdo}
\end{figure*}

\subsection{Periodic Double Oracle}
The exponentially growing frequency function $m(j)$ of XDO leads to an exponential increase in sample complexity with respect to $k$. On the other hand, XODO's inflexibility arises from the fact that it performs best response computation in each iteration, neglecting the balance between regret minimization and best response computation. In order to mitigate the large increase in sample complexity caused by a large value of $k$, and to balance the two computations, we propose the Periodic Double Oracle (PDO) algorithm. PDO computes the best response at a fixed time interval in each window and outputs the average strategy in the last window.

By setting $m(j)$ to a constant $c$, PDO can be viewed as an instantiation of RMDO. We can derive the expected number of iterations required for PDO to reach an approximate NE by utilizing Theorem \ref{theo:lw-xodo}. The algorithm is presented in Appendix \ref{append:pdo_add}.

\begin{corollary}
PDO needs $\mathcal{O}(k|A||S|^2/\epsilon^2 + (c-1)k)$ iterations to reach $\epsilon$-NE.
\end{corollary}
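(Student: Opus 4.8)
The plan is to obtain this corollary as a direct instantiation of Theorem~\ref{theo:lw-xodo}, since PDO is by definition the RMDO algorithm with the constant frequency function $m(j)=c$ for every window index $j$, outputting the last-window average strategy. First I would verify that PDO satisfies the hypotheses of Theorem~\ref{theo:lw-xodo}: the number of restricted games $k$ is finite by Lemma~\ref{lemma:k} and Assumption~\ref{assume:k}; the CFR-type regret minimizer run inside each restricted game has the assumed $\mathcal{O}(|S_{i}|\sqrt{|A|/T})$ average-regret bound; and the structural argument behind Theorem~\ref{theo:lw-xodo} applies unchanged, namely that in every non-final time window the regret minimizer cannot already have reached $\epsilon$-NE in the restricted game (otherwise the original-game best response would already lie in the population, contradicting that the window is non-final), so each such window contributes at most $\mathcal{O}(|A||S|^2/\epsilon^2)$ regret-minimization iterations.

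Next I would simply evaluate the frequency-dependent term of Theorem~\ref{theo:lw-xodo} under the PDO choice. The theorem states the last-window average strategy reaches $\epsilon$-NE in $\mathcal{O}(k|A||S|^2/\epsilon^2 - k + \sum_j m(j))$ iterations, and the only place $m$ enters is $\sum_{j=0}^{k-1} m(j)$. With $m(j)=c$ this sum is $\sum_{j=0}^{k-1} c = ck$, so the bound becomes $\mathcal{O}(k|A||S|^2/\epsilon^2 - k + ck) = \mathcal{O}(k|A||S|^2/\epsilon^2 + (c-1)k)$, which is exactly the claimed statement.

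There is no real obstacle here; the only point worth spelling out explicitly is the source of the additive $(c-1)k$ term. Because a best response is computed only every $c$ iterations, within each of the $k$ windows up to $c-1$ extra iterations may be expended after the regret minimizer has essentially converged but before the periodic BR check triggers a population update; summing this slack of at most $m(j)-1 = c-1$ iterations over all $k$ windows gives $(c-1)k$, which is precisely the $\sum_j m(j) - k$ contribution appearing in Theorem~\ref{theo:lw-xodo}. I would close by noting this makes explicit the trade-off governed by $c$: larger $c$ increases the additive overhead linearly in $k$ but (as used elsewhere) reduces the number of best-response computations and hence the sample-complexity cost of the BR oracle.
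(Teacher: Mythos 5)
Your proposal is correct and matches the paper's (implicit) argument exactly: the corollary is obtained by substituting $m(j)=c$ into Theorem~\ref{theo:lw-xodo}, so that $\sum_{j} m(j) = ck$ and the bound $\mathcal{O}(k|A||S|^2/\epsilon^2 - k + ck) = \mathcal{O}(k|A||S|^2/\epsilon^2 + (c-1)k)$ follows immediately. Your additional remarks on verifying the hypotheses and on the origin of the $(c-1)k$ slack are consistent with the paper's reasoning and add useful clarity, but do not constitute a different route.
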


\begin{proposition}
\label{theo:pdo_sc}
Since PDO computes BR every $c$ iterations, the sample complexity to reach $\epsilon$-NE is $\mathcal{O}(k|A||S|^3/\epsilon^2 + ck|S| + k|A||S|^3/c\epsilon^2 - k|S|/c)$.
\end{proposition}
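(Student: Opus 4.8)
The plan is to combine the iteration count from the preceding corollary with an accounting of how much tree traversal each part of PDO costs per iteration, exactly as was done for XODO in Proposition~\ref{theo:xodo_sc} and XDO in Proposition~\ref{theo:xdo_sc}. First I would recall that PDO is the instantiation of RMDO with $m(j)=c$ constant, so by the corollary it requires $N := \mathcal{O}(k|A||S|^2/\epsilon^2 + (c-1)k)$ iterations to reach $\epsilon$-NE with the last-window average strategy. Each iteration has two potential costs: one pass of CFR in the current restricted game, and — only on iterations that are multiples of $c$ — one best-response computation in the original game. Since both a CFR sweep and a BR computation traverse the full game tree, each contributes $\mathcal{O}(|S|)$ sample complexity when it occurs; this is the same $\mathcal{O}(|S|)$ bound on the regret-minimization part noted after Theorem~\ref{theo:lw-xodo}.

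Next I would multiply out the two contributions. The CFR part runs every iteration, so it contributes $\mathcal{O}(|S| \cdot N) = \mathcal{O}(k|A||S|^3/\epsilon^2 + (c-1)k|S|)$. The BR part runs once every $c$ iterations, so over $N$ iterations it is invoked $\lceil N/c \rceil$ times, contributing $\mathcal{O}(|S| \cdot N/c) = \mathcal{O}(k|A||S|^3/(c\epsilon^2) + (c-1)k|S|/c)$. Adding the two and simplifying $(c-1)k|S| + (c-1)k|S|/c$ to the stated $ck|S| - k|S|/c$ (up to the $\mathcal{O}$ constants, absorbing the lower-order $k|S|$ terms) yields
\begin{align}
\mathcal{O}\!\left(k|A||S|^3/\epsilon^2 + ck|S| + k|A||S|^3/c\epsilon^2 - k|S|/c\right),
\end{align}
which is the claimed bound.

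The only real subtlety — and the step I expect to need the most care — is justifying that the BR computation and the CFR sweep each cost only $\mathcal{O}(|S|)$ rather than something depending on $|A|$ or on the restricted-game size: the argument is that a depth-first traversal visits each infoset once and the per-infoset work (summing counterfactual values over actions, or picking the maximizing action for the BR) is folded into the $\mathcal{O}(\cdot)$ as in the rest of the paper, and that the restricted game's tree is a subtree of the original, hence no larger. A second minor point is that the best-response in PDO is taken against the within-window weighted-average strategy $\Tilde{\pi}^t$ (equation~\eqref{eq:avg_strategy}), but computing that average and then a single BR against it is still one tree traversal. With these conventions fixed, the result is a direct substitution of $m(j)=c$ into the general sample-complexity expression $\mathcal{O}(k|A||S|^3/\epsilon^2 - k|S| + |S|\sum_j m(j))$ plus the BR-frequency term $\mathcal{O}(|S|\sum_j |T_j|/m(j))$, and the proof is essentially bookkeeping. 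Full details are deferred to Appendix~\ref{append:pdo_sc}.
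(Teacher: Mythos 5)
Your proposal matches the paper's own proof: both take the iteration count $N=\mathcal{O}(k|A||S|^2/\epsilon^2+(c-1)k)$ from the corollary, charge $\mathcal{O}(|S|)$ per CFR sweep and per BR computation, count $N/c$ BR invocations, and multiply out to obtain $(N+N/c)\cdot|S|$, which simplifies exactly (since $(c-1)(1+1/c)=c-1/c$) to the stated bound. The approach and bookkeeping are essentially identical to Appendix~\ref{append:pdo_sc}.
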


Proposition \ref{theo:pdo_sc} has been demonstrated in Appendix \ref{append:pdo_sc}. The periodicity $m(j)=c$ in PDO reduces the impact of the dominating term $|S|^3$ in sample complexity, compared to that of XODO. Additionally, compared to XDO, PDO eliminates the term exponential in $k$ from the sample complexity. While XDO may have an \textit{exponential} sample complexity in $|S|$ in the worst case scenario, PDO only has polynomial complexity in $|S|$. Hence, theoretically, PDO is more sample-efficient than existing algorithms (refer to Table \ref{table:sc} for a summary of sample complexities).

Given that these complexities are computed in the worst-case scenario, and are computed to ensure that the algorithm reaches at most $\epsilon$-NE, we cannot determine the value of $c$ by merely solving for extreme values of sample complexity. Instead, we consider it as a hyperparameter and analyze the empirical performance of PDO with different $c$ in the next section.

\section{Experiments}
\label{sec:exp}

\begin{figure*}[t!]
     \centering
\begin{subfigure}
    \centering
    \includegraphics[width=.32\textwidth]{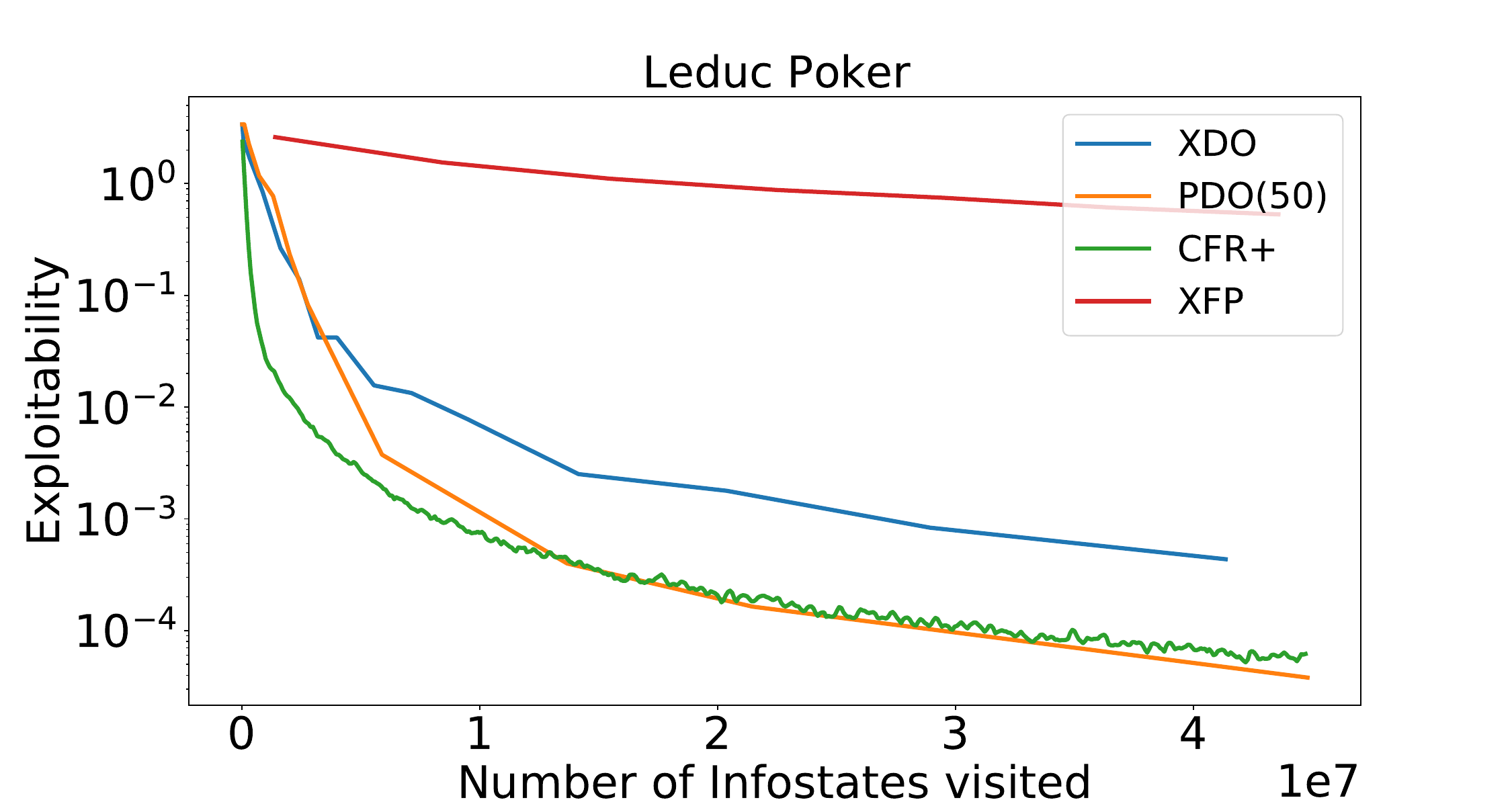}
\end{subfigure}
\begin{subfigure}
    \centering
    \includegraphics[width=.32\textwidth]{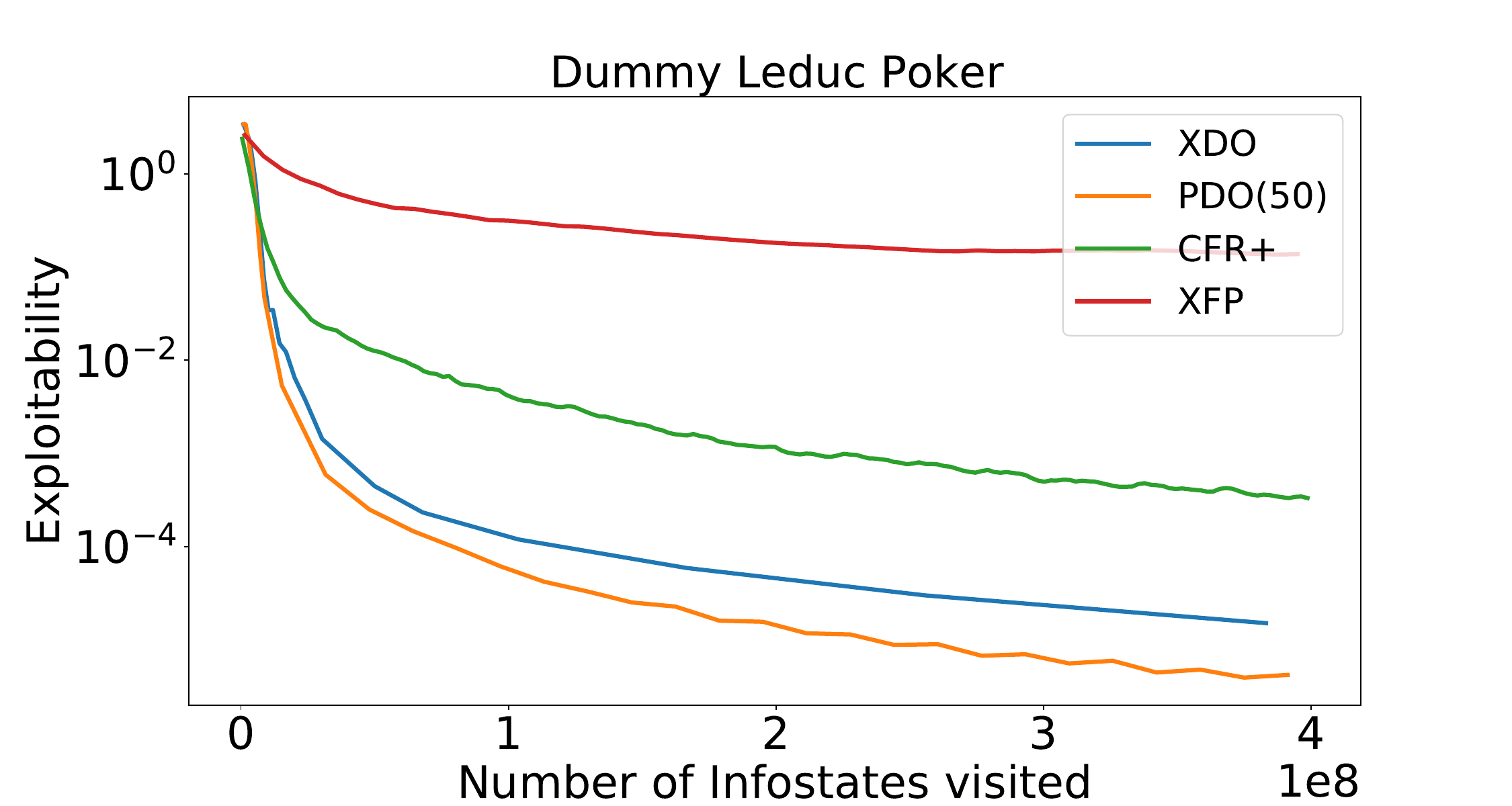}     
\end{subfigure}
\begin{subfigure}
    \centering
    \includegraphics[width=.32\textwidth]{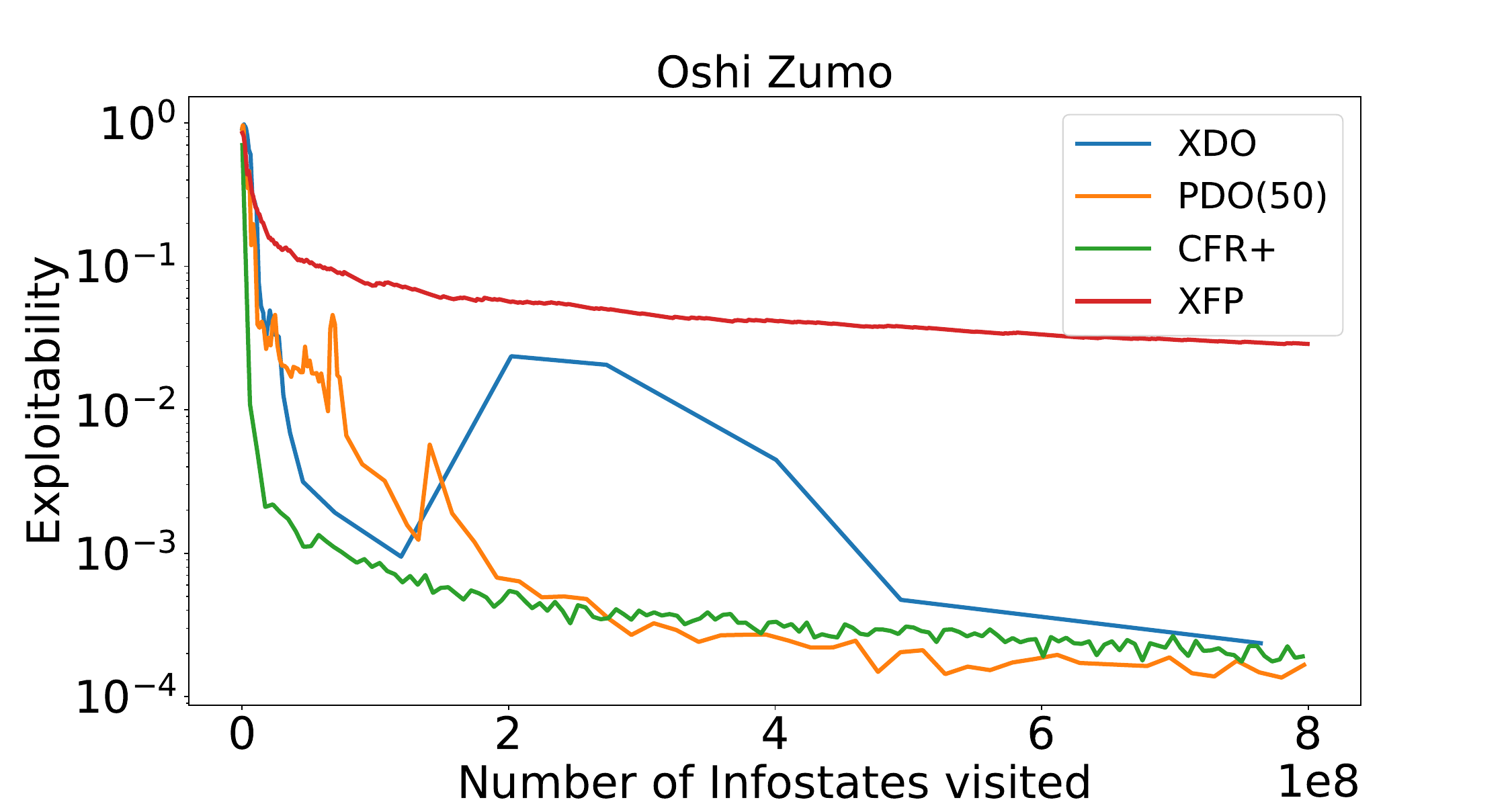}
\end{subfigure}
\caption{Comparison of PDO with XDO, CFR$+$ and XFP on Leduc Poker, Dummy Leduc Poker and Oshi Zumo. PDO preserves the strength of DO that performing well in games with small-support NE (Dummy Leduc Poker), but remains competitive to the state-of-the-art regret minimization methods in other games.}
  \label{fig:exploitability}
\end{figure*}

All experiments and algorithm implementations are based on OpenSpiel~\citep{lanctot2019openspiel}. Code is available at \href{https://github.com/xiaohangt/RMDO}{https://github.com/xiaohangt/RMDO}. We conducted empirical assessments on a variety of extensive-form games, including Sequential Blotto (perfect-information extensive-form game), Kuhn Poker with a initial pot $40$ for each player, Leduc Poker, Leduc Poker Dummy and Oshi Zumo. Leduc Poker Dummy is of particular interest as the NE of the game has a small support as the actions are duplicated in each infoset ~\citep{xdo}. Oshi Zumo is a board game in which players must repeatedly bid to push a token off the other side of the board ~\citep{oshizumo}. The full description of games is in Appendix \ref{append:games}.

We evaluate the performance with the exploitability in terms of number of infosets visited and wall time measured in seconds. The number of visited infosets refers to the total number of nodes traversed by the algorithm, including those encountered during best response (BR) computation. It is equivalent to the number of touched nodes in the experiment of Stochastic Regret Minimization ~\citep{srm}. It is important to note that the difference between the expanded infosets in XDO paper ~\citep{xdo} and the visited infosets in the present study is that XDO did not include the infosets during BR computation.

The experiment uses the state-of-the-art exact regret minimizer, CFR$+$ ~\cite{cfrplus}, for all double oracle algorithms, and the regret minimization algorithms are initialized with uniform random policies following the default setting. The study begins by analyzing the performance of PDO with various periodicity choices and comparing them with XODO. Furthermore, the algorithm's performance is compared against baselines, including XDO with restricted game solver CFR+, CFR+ itself, and Extensive-form Fictitious Self-play. For more empirical results including the comparison to Linear CFR, please refer to Appendix \ref{append:graphs}.

In Figure \ref{fig:exploitability_ablation}, we present a comparison between the XODO and PDO with periodicity values of $m=1, 10, 50, 100$, in terms of exploitability plotted against wall time in seconds.Our results show that in Kuhn poker, PDO algorithms outperform XODO, with all PDO algorithms exhibiting similar performance. Among the PDO algorithms, we found that PDO with periodicity $100$ performs slightly better than the other PDO algorithms. In all the other games, PDO algorithms outperformed XODO by a large margin. In Leduc Poker, larger periodicity values led to faster convergence. Among the PDO algorithms, PDO with periodicity values of $50$ and $100$ performed the best. In Leduc Poker Dummy, PDO with periodicity $50$ achieved a small exploitability the fastest. In Oshi Zumo, large periodicity values led to slow decreasing in exploitability at the early stages of training, but reached the least exploitability later on.

We also compare the performance of PDO with XDO in Figure \ref{fig:exploitability_pdo_xdo}. We find that PDO ($50$) outperforms XDO with a large margin in Sequential Blotto and Large Kuhn Poker. In Figure \ref{fig:exploitability}, we investigate the performance of PDO ($50$) and other baselines. We find that PDO outperforms XDO and Extensive-form Fictitious Self-play (XFP) with a large margin in Leduc Poker, and has a slight improvement over CFR$+$ in exploitability in the later stages of the training. In Leduc Poker Dummy, PDO outperforms all other algorithms with a large margin starting from the beginning of the training. In Oshi Zumo, PDO has a more stable exploitability curve and a faster convergence in general compared to XDO, with a lower exploitability level than CFR$+$ in the later stages of the training. 

Our findings suggest that PDO improves the convergence speed of DO methods in different types of games, and that the choice of periodicity can have a significant impact on the performance of PDO. Furthermore, we find that the last-window average strategy converges faster than the overall average strategy when comparing PDO ($1$) and XODO, likely due to the poor performance of the strategy before the restricted games stop expanding. These results contribute to the ongoing effort to improve the efficiency and effectiveness of DO algorithms in solving large-scale imperfect-information games.



\section{Conclusion}
This paper proposes the first generic framework for studying the theoretical convergence speed and algorithmic performance of regret-minimization based Double Oracle algorithms. Building upon this framework, we propose the Periodic Double Oracle algorithm which improves the sample complexity. Our numerical simulations demonstrate that PDO achieves superior performance compared to XDO and ODO in extensive-form games. Additionally, PDO exhibits fast convergence in games with small support NE, and remains robust across other games. Overall, our proposed framework and algorithm offer a significant contribution to the field of Double Oracle methods.

A future direction is combining PDO with deep regret-based methods for solving the restricted game~\citep{perolat2022mastering, mcaleer2022escher} and finding the BR~\citep{lanctot2017unified, mcaleer2022anytime, mcaleer2022self}.
As for applications of our framework, our framework can also be applied to robust and risk-aware reinforcement learning~\citep{zhang2020robust, lanier2022feasible, slumbers2022learning}. Additionally, it could be applied to solve extensive-form continuous games ~\citep{adam2021double} where $k$ can be large, but the sample complexity of PDO is only linear in $k$.

\section{Acknowledgements}
Yaodong Yang is supported in part by the National Key R\&D Program of China (20222D0114900). The authors would like to thank the Department of Statistical Science at University College London for providing the computing clusters. This work was supported by the Engineering and Physical Sciences Research Council [grant number EP/T517793/1, EP/W524335/1].


\bibliography{main}
\bibliographystyle{icml2023}

\newpage
\appendix
\onecolumn

\section{Additional Details of Extensive-form Double Oracle}
\label{append:xdo}
\subsection{Example of two iterations running}
\begin{figure}[h]
  \centering
  \includegraphics[width=.236\textwidth]{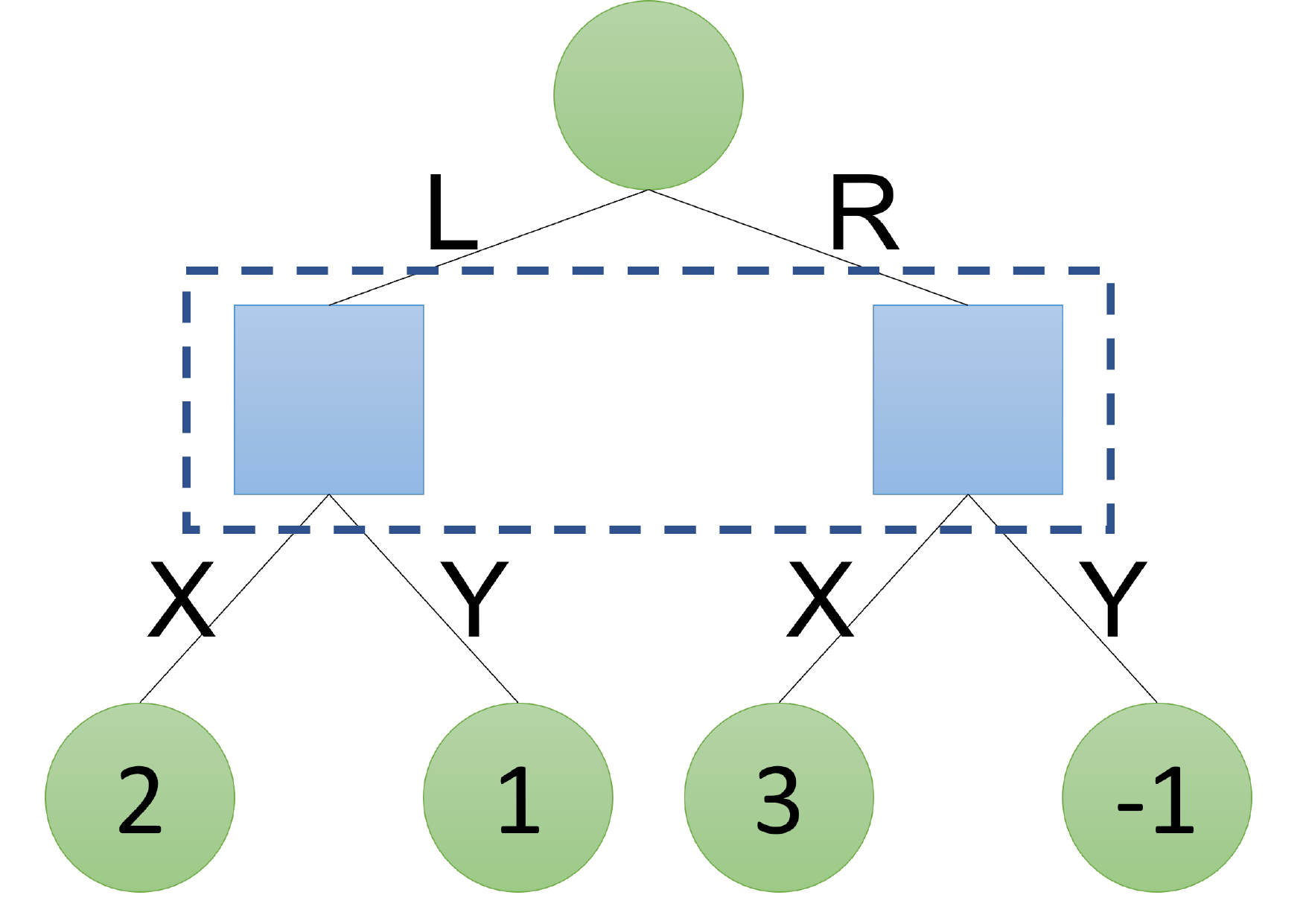}
  \includegraphics[width=.236\textwidth]{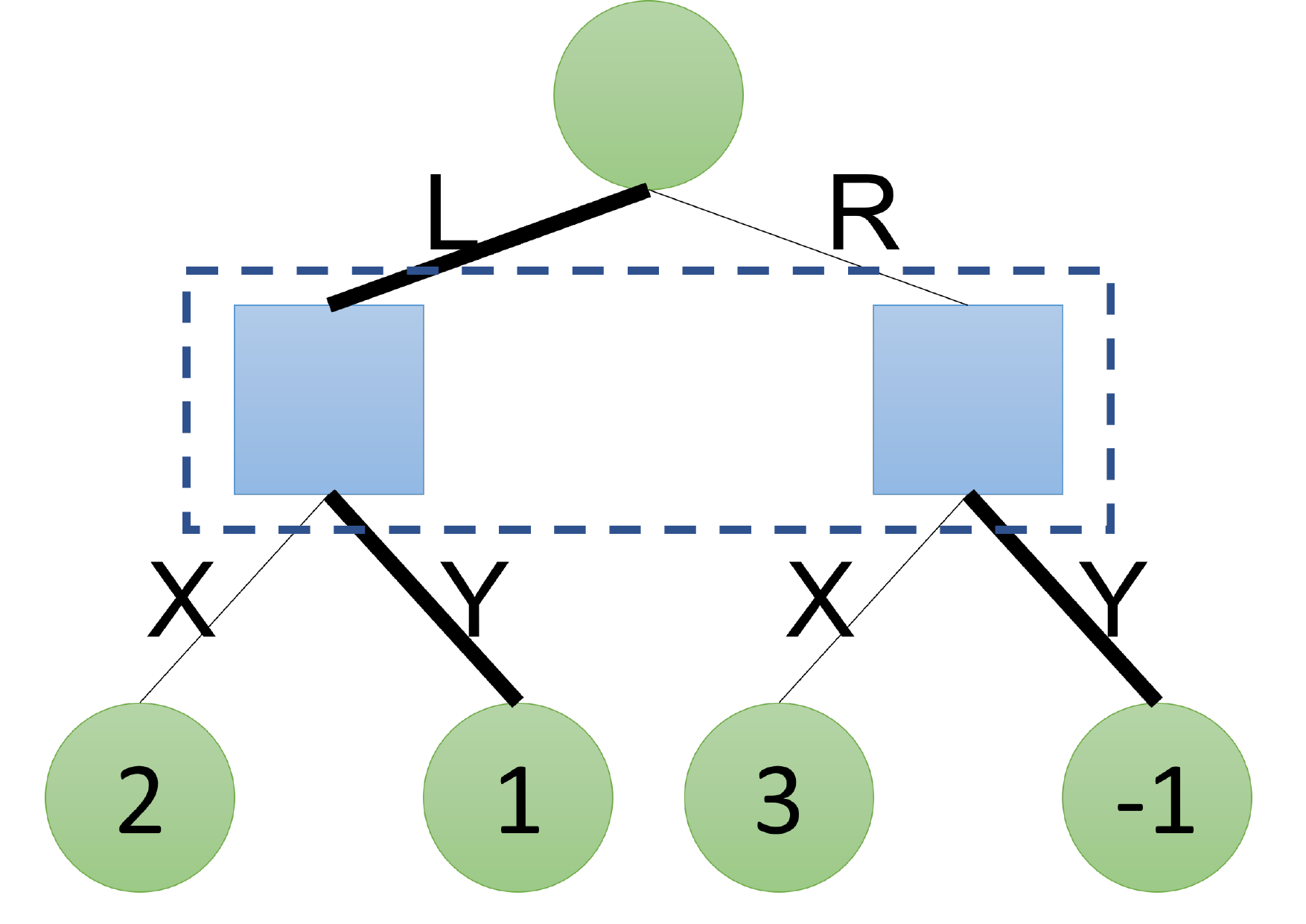}
\caption{An example of two iterations of Extensive-form Double Oracle. Square player is unaware of what Circle player chose. The values at the leafs are Circle player's payoff. Since it is zero-sum game, the Square player's payoffs are the opposite numbers.}
\label{fig:example}
\end{figure}
Here we present an illustration of two iterations of the Extensive-form Double Oracle in a zero-sum game in the figure above. In Iteration 0, both Circle and Square players have empty population and employ a uniform random strategy. In Iteration 1, Circle player's BR to Square's random strategy is action $L$ since $L$ and $R$'s expected values are $1.5$ and $1$, respectively. Square player's BR is action $Y$ since $X$ and $Y$'s expected utilities are $-2.5$ and $0$. Then $L$ and $Y$ are added to the population (thicker lines). Then since both players only have one action in the restricted game, the restricted game's NE is to choose them. Circle player's BR to NE of restricted game is still action $L$ given Square player will choose $Y$. Meanwhile, Square player's BR is still $Y$ given Circle player will choose $L$, since Square player's expected values of actions are $v(X)=-2$ and $v(Y)=-1$. As there is no BR actions added to the population, the restricted game won't change in latter iterations, DO converges and the NE for the original game is $(L,Y)$.

\subsection{Additional details of Algorithm \ref{alg:xdo}}
\label{append:algo_xdo}
Algorithm \ref{alg:xdo} has more detailed than the algorithm description of XDO in the original paper ~\citep{xdo}. In our description of XDO, we specify $\epsilon$ according to XDO's official implementation \href{https://github.com/indylab/tabular\_xdo/blob/main/main\_experiments.py\#L173}{https://github.com/indylab/tabular\_xdo/blob/main/main\_experiments.py\#L173}, where the authors start at a initial threshold $\epsilon=\epsilon_0$ and divide it by $2$ every time reaching $\epsilon$-NE in restricted game.

\subsection{Empirical values of $k$}
Table \ref{table:k} presents the findings of the empirical evaluation of the algorithmic execution of XDO across various games. It is observed that although the value of $k$ is usually smaller than the upper bound $|S|$, even for the smallest $k=16$, there is a considerable rise in both the number of iterations and the sample complexity caused by the exponential term: $4^{16} \sim \mathcal{O}(10^{9})$. The authors of XDO have proposed a solution to address this issue by setting a relatively large threshold value $\epsilon_0$. Nonetheless, the determination of the appropriate $\epsilon_0$ value remains a complex task since the sample complexity is highly sensitive to the value of $k$, which may vary based on the size of the NE support in different games.

\begin{table}[h]
\caption{The empirical values of $k$ when executing XDO to reach $10^{-3}$-NE and the number of infosets of different games}
\label{table:k}
\vskip 0.15in
\begin{center}
\begin{small}
\begin{sc}
\begin{tabular}{lcccr}
\toprule
Game Name & $k$ & Num. of Infosets \\
\midrule
Kuhn Poker          & 20& 58\\
Leduc Poker         & 16& 9457\\
Leduc Poker Dummy   & 17& 468517\\
Oshi Zumo           & 34& 60533\\
\bottomrule
\end{tabular}
\end{sc}
\end{small}
\end{center}
\vskip -0.1in
\end{table}

\section{Proofs}
\subsection{Proof of Lemma \ref{lemma:k}}
\label{append:k}

\begin{proof}
We first prove the upper bound. The number of time windows equals to the number of times adding new BR actions to the population. Then since the size of the population is less or equal to $\sum_{i}|S_i|$ and in each time window there is at least one new BR action added comparing to the last window, $k \leq \sum_{i}|S_i|$. 

Then we prove the lower bound of $k$. Since there is at most one pure strategy at a infoset added to the population every time when the restricted game is expanded and a new window starts. Thus at $s\in S$, $|\{a|a\in A(s)\}| \leq k$. Since at every infoset, the number of pure strategies in the converged population will be greater than the support of NE. Otherwise, there is a pure strategy in the NE strategy but not in the population, which means that the population doesn't converge and leads to contradiction. At $s\in S$, denote $\Pi^*$ as a set of NE of this game, $|\{a|a\in A(s)\}| \geq \min_{\pi\in \Pi^*}\text{supp}^{\pi}(s)$. So we have $k\geq \max_{s\in S}|\{a|a\in A(s)\}| \geq \max_{s\in S}\min_{\pi\in \Pi^*}\text{supp}^{\pi}(s)$.
\end{proof}

\subsection{Proof of Lemma \ref{lemma:bound_to_ne}}
\label{append:bound_to_ne}
\begin{proof}
Since the weighted-average strategy at time $T$ is defined as $\bar{\pi}_i=\sum_{t=1}^T \pi_i^t W_t$, where $\sum_t W_t=1$. Specifically, in Discounted CFR, $W_t=w_t/ \sum_t w_t$ and $\lim_{t\rightarrow \infty}w_t=\infty$. Since the value of strategy $v_i(\cdot, \cdot)$ is a linear function, then we have:
\begin{align}
    \max_{\pi^{'}_i} \sum_{t=1}^T [v_i(\pi^{'}_i, \pi_{-i}^t) - v_i(\pi_i^t, \pi_{-i}^t)] W_t 
    =
    \max_{\pi^{'}} v_i(\pi^{'}, \bar{\pi}_{-i}^t) - v_i(\bar{\pi_i}, \bar{\pi}_{-i}) 
    \leq 
    \mathcal{O}(|S_i|\sqrt{|A_i|}/\sqrt{T})\\
    \leq 
    \mathcal{O}(|S_i|\sqrt{|A|}/\sqrt{T})
\end{align}

Given the above inequality exists for both $i$, then in two-player zero-sum game setting, we have:
\[v_{-i}(\cdot,\cdot)=-v_{i}(\cdot,\cdot),\ \max_{\pi^{'}_{-i}} v_{-i}(\pi_i^{t}, \pi^{'}_{-i})=-\min_{\pi^{'}_{-i}} v_{i}(\bar{\pi}_i, \pi^{'}_{-i})\]
and then
\begin{equation}\label{eq: regret bound of OXDO 3}
    \begin{aligned}
    v_{i}(\bar{\pi})-\min_{\pi^{'}_{-i}} v_{i}(\bar{\pi}_i, \pi^{'}_{-i})\leq \mathcal{O}(|S_{-i}|\sqrt{|A|}/\sqrt{T}).
    \end{aligned}
\end{equation}
Therefore:
\begin{equation}\label{eq: regret bound of OXDO 5}
    \begin{aligned}
    \max_{\pi^{'}} v_i(\pi^{'}, \bar{\pi}_{-i}^t) - \sum_i \mathcal{O}(|S_i|\sqrt{|A|}/\sqrt{T})
    \leq 
    \max_{\pi^{'}} v_i(\pi^{'}, \bar{\pi}_{-i}^t) - \mathcal{O}(|S_i|\sqrt{|A|}/\sqrt{T}) \\
    \leq  
    v_{i}(\bar{\pi}) 
    \leq \min_{\pi^{'}_{-i}} v_{i}(\bar{\pi}_i, \pi^{'}_{-i}) + \mathcal{O}(|S_{-i}|\sqrt{|A|}/\sqrt{T})
    \leq 
    \min_{\pi^{'}_{-i}} v_{i}(\bar{\pi}_i, \pi^{'}_{-i}) + \sum_i \mathcal{O}(|S_{-i}|\sqrt{|A|}/\sqrt{T}),
    \end{aligned}
\end{equation}
$\bar{\pi}$ is $\epsilon_T$-NE, where $\epsilon=\sum_i \mathcal{O}(|S_i|\sqrt{|A|}/\sqrt{T}) = \mathcal{O}(|S|\sqrt{|A|}/\sqrt{T})$.
\end{proof}

\subsection{Proof of Theorem \ref{theo:RMDO_bound}}
\label{append:RMDO_bound}
\begin{proof} Since iteration $0$ population directly add BR with respect to the uniform random strategy, $|A_{i,0}|=1$, where  $|A_{i,j}|$ denotes player $i$'s maximal number of actions over all infoset in time window $T_j$. Since we have $k$ time windows in the period $T$, we then have $\sum_{j=0}^{k-1} |T_j| = T$. $w_t$ is the within window weights. In this time window $T_j$, at most $1$ pure action is added to the action set in an infoset compared to the previous time window, then the maximum number of actions at infosets of time window $T_j$ will be bounded by:
\[|A_{i,j}|\leq j \leq k.\]

Denote $V_j$ as the set of iterations after last BR checking point in $T_j$. In the last window $T_{k-1}$, according to assumption \ref{assume:k}, we have $V_j = \varnothing$ since $\lim_{T\rightarrow \infty}|T_{k-1}|=\infty$, there is no "last" since the final window will not end. 

Note that since the population will stop growing  Every time after computing the BR, RMDO will check if the BR to the average strategies of the opponent is in the current population. Therefore at two different iterations in the same time window, since the populations, we have the local Best Response is exactly the global Best response:
\begin{align}
    \max_{\pi'_i \in \Pi_j} \sum_{t\in T_j\setminus V_j} w_t \cdot v_i(\pi'_i, \pi_{-i}^{t}) = \max_{\pi'_i \in \Pi}\sum_{t\in T_j\setminus V_j} w_t \cdot v_i(\pi'_i, \pi_{-i}^{t}).
\label{eq:tw}
\end{align}

Since the frequency of BR computing in current window $T_j$ is $m(j)$, then $|V_j|<m(j)$ for $j\leq k-1$. Given the regret minimizer has $\mathcal{O}(|S_i|\sqrt{|A_i|T^{-1}})$ (weighted) average regret, then in window $T_j$ for $j=0, 1, \cdots, k$, we have:
\begin{align}\label{eq:ox1}
    \max_{\pi'_i \in \Pi_j} \sum_{t\in T_j\setminus V_j} w_t \cdot v_i(\pi'_i, \pi_{-i}^{t}) - \sum_{t\in T_j\setminus V_j} w_t \cdot v_i(\pi_i^{t}, \pi_{-i}^{t})
    \leq
    \mathcal{O}(|S_i|\sqrt{|A^{T_j}_i| \cdot|T_j\setminus V_j|^{-1}} 
    \leq
    \mathcal{O}(|S_i|\sqrt{k\cdot |T_j\setminus V_j|^{-1}}).
\end{align}
Thus:
\begin{align}
    \max_{\pi'_i \in \Pi_j} \sum_{t\in T_j\setminus V_j} W_t \cdot v_i(\pi'_i, \pi_{-i}^{t}) - \sum_{t\in T_j\setminus V_j} W_t \cdot v_i(\pi_i^{t}, \pi_{-i}^{t})
    \leq
    \frac{|T_j|}{T} \cdot \mathcal{O}(|S_i|\sqrt{k\cdot |T_j\setminus V_j|^{-1}})
\end{align}

 Then the weighted-average regret:
\begin{align}
    &R_i^T = \max_{\pi^{'}_i} \sum_{t=1}^T W_t \cdot (v_i(\pi^{'}_i, \pi_{-i}^t) - v_i(\pi^t)) 
    = \sum_j\left[ \max_{\pi'_i \in \Pi} \sum_{t\in T_j} W_t \cdot v_i(\pi'_i, \pi_{-i}^{t}) - \sum_{t\in T_j} W_t \cdot v_i(\pi_i^{t}, \pi_{-i}^{t})\right] \\ 
    &= \sum_j \left[
    \max_{\pi'_i \in \Pi} \sum_{t\in T_j\setminus V_j} W_t \cdot v_i(\pi'_i, \pi_{-i}^{t}) - \sum_{t\in T_j\setminus V_j} W_t \cdot v_i(\pi_i^{t}, \pi_{-i}^{t}) + \max_{\pi'_i \in \Pi} \sum_{t\in V_j} W_t \cdot v_i(\pi'_i, \pi_{-i}^{t}) - \sum_{t\in V_j} W_t \cdot v_i(\pi_i^{t}, \pi_{-i}^{t}) \right] \\
    &= \sum_j\left[
    \max_{\pi'_i \in \Pi_j} \sum_{t\in T_j\setminus V_j} W_t \cdot v_i(\pi'_i, \pi_{-i}^{t}) - \sum_{t\in T_j\setminus V_j} W_t \cdot v_i(\pi_i^{t}, \pi_{-i}^{t}) +
    \max_{\pi'_i \in \Pi} \sum_{t\in V_j} W_t \cdot v_i(\pi'_i, \pi_{-i}^{t}) - \sum_{t\in V_j} W_t \cdot v_i(\pi_i^{t}, \pi_{-i}^{t}) \right] \\
    &\leq \sum_j \frac{|T_j|}{T} \cdot\mathcal{O}\left(
    |S_i|\sqrt{k} \cdot |T_j\setminus V_j|^{-1/2} + \sum_{t\in V_j} w_t
    \right) 
    < 
    \sum_j \frac{|T_j|}{T} \cdot\mathcal{O}\left(
    |S_i|\sqrt{k} \cdot |T_j\setminus V_j|^{-1/2} + \sum_{t\in V_j} 1
    \right)\\ 
    &\leq
    \frac{1}{T} \cdot\mathcal{O}\left( |S_i|\sqrt{k} \cdot \sum_{j=0}^{k-1}\frac{|T_j|}{\sqrt{|T_j| - m(j) + 1}} + \sum_{j=0}^{k-2}|T_j| [m(j)-1]
    \right).
    \end{align}

The last but one inequality exists becuase $|T_j\setminus V_j|^{-1/2} \leq (|T_j| - m(j) + 1)^{-1/2}$, $V_{k-1}=\varnothing$ and $|V_{j}|\leq m(j)-1$.

Therefore, the upper bound of weighted-average regret:
\begin{align}
    \bar{R}_i^T = \max_{\pi^{'}_i} \sum_{t=1}^T W_t \cdot (v_i(\pi^{'}_i, \pi_{-i}^t) - v_i(\pi^t))
    < 
    \mathcal{O}\left( \sum_{j=0}^{k-2} \frac{|T_j| [m(j)-1]}{T} + \frac{|S_i|\sqrt{k}}{T} \cdot \sum_{j=0}^{k-1}\frac{|T_j|}{\sqrt{|T_j| - m(j) + 1}}
    \right).
\end{align}

According to Assumption \ref{assume:k}, $\lim_{T\rightarrow \infty} |T_{k-1}|/T=1$ and $\lim_{T\rightarrow \infty} \sum_{j=0}^{k-2}|T_j|/T=0$. Then we have:
\begin{align}
    \lim_{T\rightarrow \infty} m(k-1)/|T_{k-1}| =  \lim_{T\rightarrow \infty} m(k-1)/ T \cdot \lim_{T\rightarrow \infty} T / |T_{k-1}| = 0 \cdot 1 = 0
\end{align}

Then if $m(j)$ is sublinear, meaning that $\forall j,\ \lim_{T\rightarrow \infty} m(j)/T=0$, then the weighted-average regret satisfies that:
\begin{align}
    &\lim_{T\rightarrow \infty}\frac{1}{T} \cdot\mathcal{O}\left( \sum_{j=0}^{k-2}|T_j| [m(j)-1] + |S_i|\sqrt{k} \cdot \sum_{j=0}^{k-1}\frac{|T_j|}{\sqrt{|T_j| - m(j) + 1}}
    \right) \\
    &\leq
    \lim_{T\rightarrow \infty}\frac{1}{T} \cdot \mathcal{O}(\sum_{j=0}^{k-2}|T_j| [m(j)-1]) + 
   \lim_{T\rightarrow \infty}\frac{1}{T} \cdot \mathcal{O}(\frac{|S_i|\sqrt{k}\sqrt{|T_{k-1}|}}{T\sqrt{1 - m(k-1)/\sqrt{T_{k-1}}) + 1/\sqrt{T_{k-1}}}} + \\
   &\lim_{T\rightarrow \infty}\frac{1}{T} \cdot \mathcal{O}(|S_i|\sqrt{k} \cdot \sum_{j=0}^{k-2}\sqrt{|T_j|})
    =
    \lim_{T\rightarrow \infty}\frac{1}{T} \cdot \mathcal{O}(|S_i|\sqrt{k}\sqrt{|T_{k-1}|}) = 0.
\end{align}
\end{proof}

\subsection{Proof of Theorem \ref{theo:lw-xodo}}
\label{append:lw-xodo_it}
\begin{proof}
Before reaching the global $\epsilon$-NE, in each windows $T_j,\ j<k-1$, the number of iterations must satisfy that $|T_j|<\mathcal{O}(|A||S|^2/\epsilon^2 + m(j)-1)$ if the regret minizer has $\mathcal{O}(\sqrt{|A|}|S|/T)$ regret. Since if $|T_j|\geq\mathcal{O}(|A||S|^2/\epsilon^2 + m(j)-1)$, there must be a BR computing happening after $\mathcal{O}(|A||S|^2/\epsilon^2)$ iterations, and at this iteration $\mathbf{BR}(\pi^t)\in \Pi_j$ and $\pi^t$ has reached restricted game's local $\epsilon$-NE, then $\pi^t$ is the $\epsilon$-NE in the original game. Besides, in $T_{k-1}$, we also need less than $\mathcal{O}(|A||S|^2/\epsilon^2 + m(j)-1)$ iterations to reach $\epsilon$-NE. Then sum up iterations in all time window, we get that the expected number of iterations is less than $\mathcal{O}(k|A||S|^2/\epsilon^2 - k + \sum_{j=0}^{k-1}m(j))$.
\end{proof}

\subsection{Proof of Theorem \ref{theo:xodo_sc}}
\label{append:xodo_sc}
\begin{proof}
When finding $\epsilon$-NE, XODO needs $\mathcal{O}(|S^2k^2/\epsilon^2)$ number of iterations and thus compute $\mathcal{O}(|S^2k^2/\epsilon^2)$ times of BR. Given in the worst case, the complexity of traverse the entire game tree once and computing BR is both $|S|$, the overall sample complexity is $\mathcal{O}(|S^2k^2/\epsilon^2) \cdot |S| + \mathcal{O}(|S^2k^2/\epsilon^2) \cdot |S| = \mathcal{O}(2|S^2k^2/\epsilon^2)$.
\end{proof}

\subsection{Proof of Corollary \ref{theo:xdo_cr}}
\label{append:gxdo_sc}
In XDO, when the stopping threshold of restricted games is divided by $\alpha>1$ at the end of each window, XDO needs the following number of iterations to reach $\epsilon$-NE $\mathcal{O}(k|A||S|^2/\epsilon^2 + |A||S|^2 \alpha^{2k}/\epsilon_0^2 - k)$ ($\alpha=2$ in the original XDO).
\begin{proof}
Given in each time window $T_j$, $m(j) \geq |A||S|^2 \alpha^{2k}/\epsilon_0^2$, plug it into Theorem \ref{theo:lw-xodo}, we have the required number of iterations to reach $\epsilon$-NE is $\mathcal{O}(k|A||S|^2/\epsilon^2 - k + |A||S|^2 (\alpha^{2k} - 1)/[(\alpha^2 - 1)\epsilon_0^2]) \sim \mathcal{O}(k|A||S|^2/\epsilon^2 + |A||S|^2 \alpha^{2k}/\epsilon_0^2])$.
\end{proof}

\subsection{Proof of Theorem \ref{theo:xdo_sc}}
\label{append:xdo_sc}
\begin{proof}
When finding $\epsilon$-NE, XDO needs $\mathcal{O}(k|A||S|^2/\epsilon^2 + |S| 4^k/\epsilon_0^2 - k)$ number of iterations. Before convergence, XDO expands the restricted game $k$ times and manage to expand every time when it computes BR. Similarly, given in the worst case, the complexity of traverse the entire game tree once and computing BR is both $|S|$, the overall sample complexity is $\mathcal{O}(k|A||S|^2/\epsilon^2 + |A||S|^2 4^k/\epsilon_0^2 - k + k) \cdot |S| = \mathcal{O}(k|A||S|^3/\epsilon^2 + |A||S|^3 4^k/\epsilon_0^2)$.
\end{proof}

\subsection{Proof of Theorem \ref{theo:pdo_sc}}
\label{append:pdo_sc}
\begin{proof}
When finding $\epsilon$-NE, PDO needs $\mathcal{O}(k|A||S|^2/\epsilon^2 + (c-1)k)$ number of iterations and thus compute $\mathcal{O}(k|A||S|^2/c\epsilon^2 + (c-1)k/c)$ times of BR. Given in the worst case, the complexity of traverse the entire game tree once and computing BR is both $|S|$, the overall sample complexity is 
\begin{align}
   \mathcal{O}(k|A||S|^2/\epsilon^2 + (c-1)k + k|A||S|^2/c\epsilon^2 + (c-1)k/c) \cdot |S| \\
   = \mathcal{O}(k|A||S|^3/\epsilon^2 + ck|S| + k|A||S|^3/c\epsilon^2 - k|S|/c).
\end{align}

\end{proof}

\section{Description of the Games}
\label{append:games}
\textit{Sequential Blotto} is a revised sequential version of discrete Colonel Blotto Game. Each player at the beginning has a set of forces with different strength and need to put them on board in turns for fighting. Specifically, when the parameters is set to $(4,25)$, there will be $25$ forces for each person with strength from $0$ to $24$ and $4$ times of putting forces on board, thus $2$ rounds in total. In each round, players choose a force in turns. The battle results equal to the difference between the power of two forces on board. At the end of this round, the forces will be removed from board and start the next round. The payoff will only appear at the end of the game, equal to the summation of all battle results.

\textit{Large Kuhn Poker} is a variant of Kuhn Poker where a initial pot for each player is $40$. Players can bet any remaining amount.

\textit{Leduc Poker Dummy} is the same as vanilla Leduc Poekr except the actions in each information set are duplicated twice~\citep{xdo}.

\textit{Oshi Zumo} A board game in which players must repeatedly bid to push a token off the other side of the board. ~\citep{oshizumo}. In the instance of our experiment, there are two players each of whom is initialized with $4$ coins and the token is originally at the center of a board with length $2K+1$. $K$ in our case is $6$. Players need to put their bid on the board from the amount of coins they have, which is at least $M$. $M$ in our setting is $1$. Then the player who have chosen the greater number can push the token one step toward its opponent. Then both player needs to remove the number of coins in their bids. The winner is the player successfully push the token off the side of the board of its opponent. Winner will get payoff $1$ and the other will get $-1$.

We then offer some rough analysis on the support information in limited number common games. There are many more common games shown with small Nash support introduced in Table 2 of ODO paper ~\citep{odo}.

\begin{table*}[h]
\label{table:support}
\vskip 0.15in
\begin{center}
\begin{small}
\begin{sc}
\begin{tabular}{lcccr}
\toprule
Games & Minimum support percentage & Maximum support percentage\\
\midrule
Sequential Blotto (n, m)               & $1/M$ & $100\%$ \\
Kuhn Poker                 & $50\%$ & $100\% $\\
Leduc Poker Dummy & $\geq 25\%$ & $\leq 50\%$ \\
\bottomrule
\end{tabular}
\end{sc}
\end{small}
\end{center}
\vskip -0.1in
\caption{Percentage of Nash support in benchmark games, which is the support size of NE divided by the number of available actions at the corresponding infoset. Since at different infoset we have different support size of NE and number of available actions, we offer minimum and maximum support percentage. The minimum support of sequential Blotto's NE will be the similar idea of tic for tac. Given it is perfect information in our setting, the min player will have a pure strategy NE. The NE of Kuhn Poker is computed in ~\citep{wiki:Kuhn_poker}. In Leduc Poker Dummy, since actions a duplicated once, the maximum support will only half of number of available actions.}
\end{table*}

\section{Additional details of Periodic Double Oracle}
\label{append:pdo_add}
\begin{algorithm}[h]
\begin{algorithmic}
    \caption{Periodic Double Oracle}
    \label{alg:pdo}
    \STATE Hyperparameter $m(j)=c$, initial population $\Pi_0$, window index $j=0$.
    \FOR{$t=0,\cdots,\infty$}
    \STATE Construct restricted game $\mathbf{G}_t$ with $\Pi_t$.
    \STATE Update $\pi^t$ in $\mathbf{G}_t$.
    \IF{$t \mod c=0$}
    \STATE Compute $\Tilde{\pi}_i^t$ with equation (\ref{eq:avg_strategy}).
    \FOR{$i\in \{1,2\}$}
    \STATE
    $\Pi_{t+1} = \Pi_t \cup  \mathbf{BR}_i(\Tilde{\pi}^t_{-i})$.
    \ENDFOR
    \IF{$\Pi_t \neq \Pi_{t-1}$}
    \STATE Start new window: $j = j+1$
    \STATE Reset strategy $\pi^{t+1}$.
    \ENDIF
    \STATE \textbf{Output} $\bar{\pi}^t$.
    \ENDIF
    \ENDFOR
\end{algorithmic}
\end{algorithm}

\begin{table*}[h]
\caption{The sample complexity of examples of RMDO to reach $\epsilon$-NE in an extensive-form game where $|S|$ is the number of infosets and $|A|=\max_{s\in S} |A(s)|$.}
\label{table:sc}
\vskip 0.15in
\begin{center}
\begin{small}
\begin{sc}
\begin{tabular}{lcccr}
\toprule
Example & Sample Complexity & Sample Complexity in $k$\\
\midrule
XODO                & $\mathcal{O}(2|S|^3k^2/\epsilon^2)$ & polynomial\\
XDO                 & $\mathcal{O}(k|A||S|^3/\epsilon^2 + |A||S|^3 4^k/\epsilon_0^2)$ & exponential\\
PDO         & $\mathcal{O}(k|A||S|^3/\epsilon^2 + ck|S| + k|A||S|^3/c\epsilon^2 - k|S|/c)$ & linear\\
\bottomrule
\end{tabular}
\end{sc}
\end{small}
\end{center}
\vskip -0.1in
\end{table*}

\section{Additional Experimental Results}
\label{append:graphs}
\vspace{1.7cm}

\begin{figure*}[h]
     \centering
\begin{subfigure}
    \centering
    \includegraphics[width=.49\textwidth]{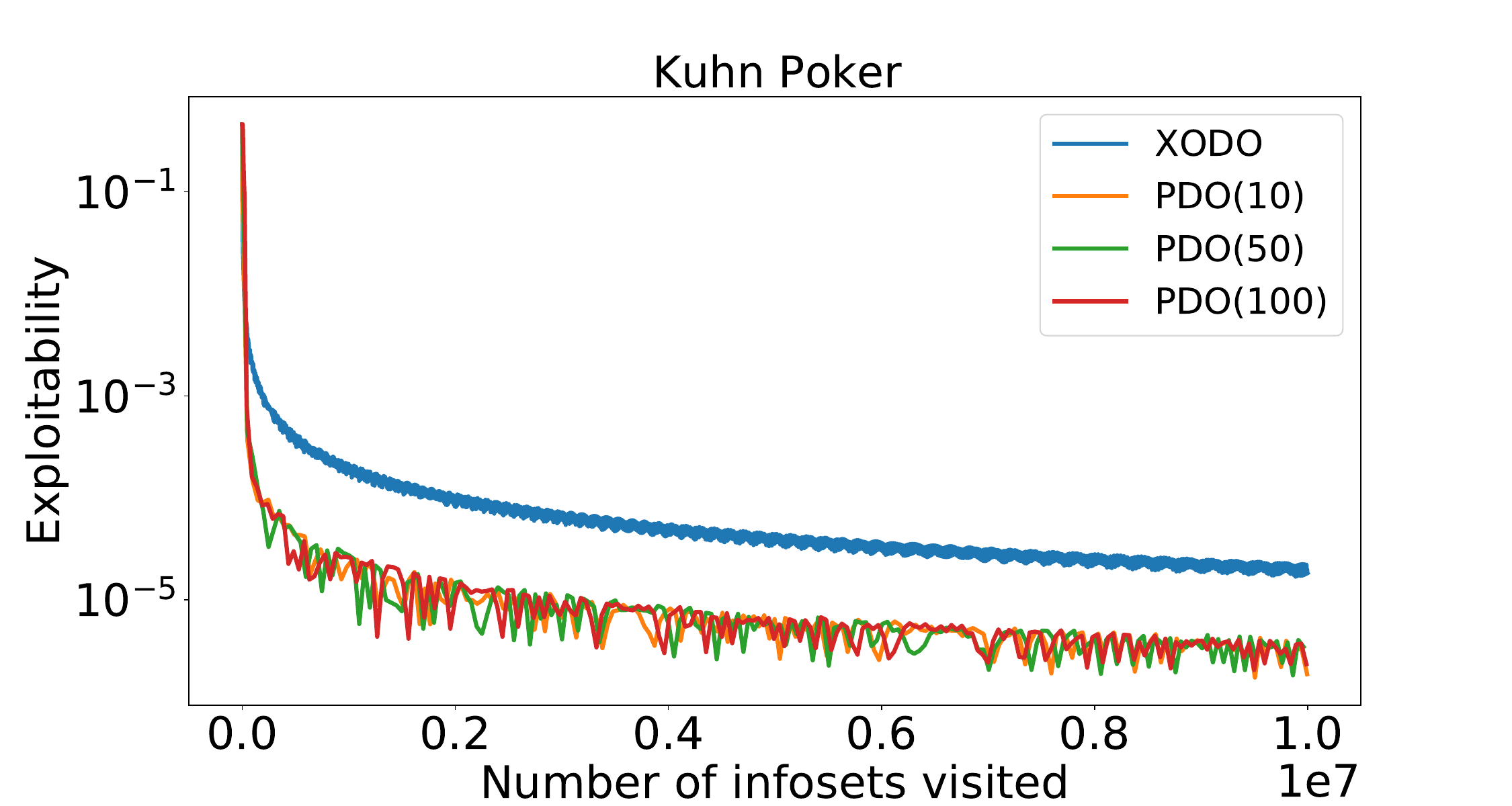}
     \label{fig:oshi_info}
\end{subfigure}
\begin{subfigure}
    \centering
    \includegraphics[width=.49\textwidth]{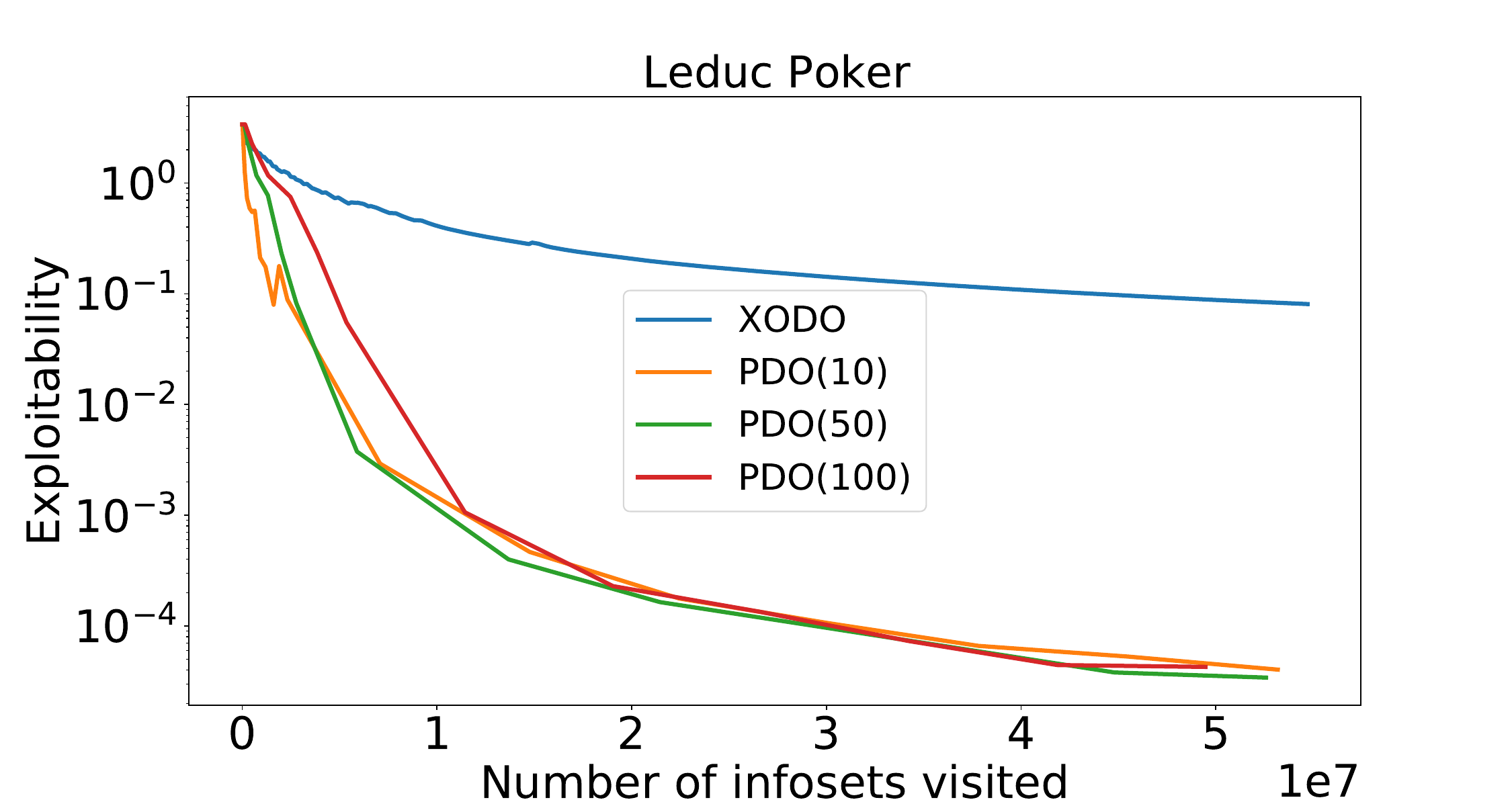}     
    \label{fig:dl_info}
\end{subfigure}

\begin{subfigure}
    \centering
    \includegraphics[width=.49\textwidth]{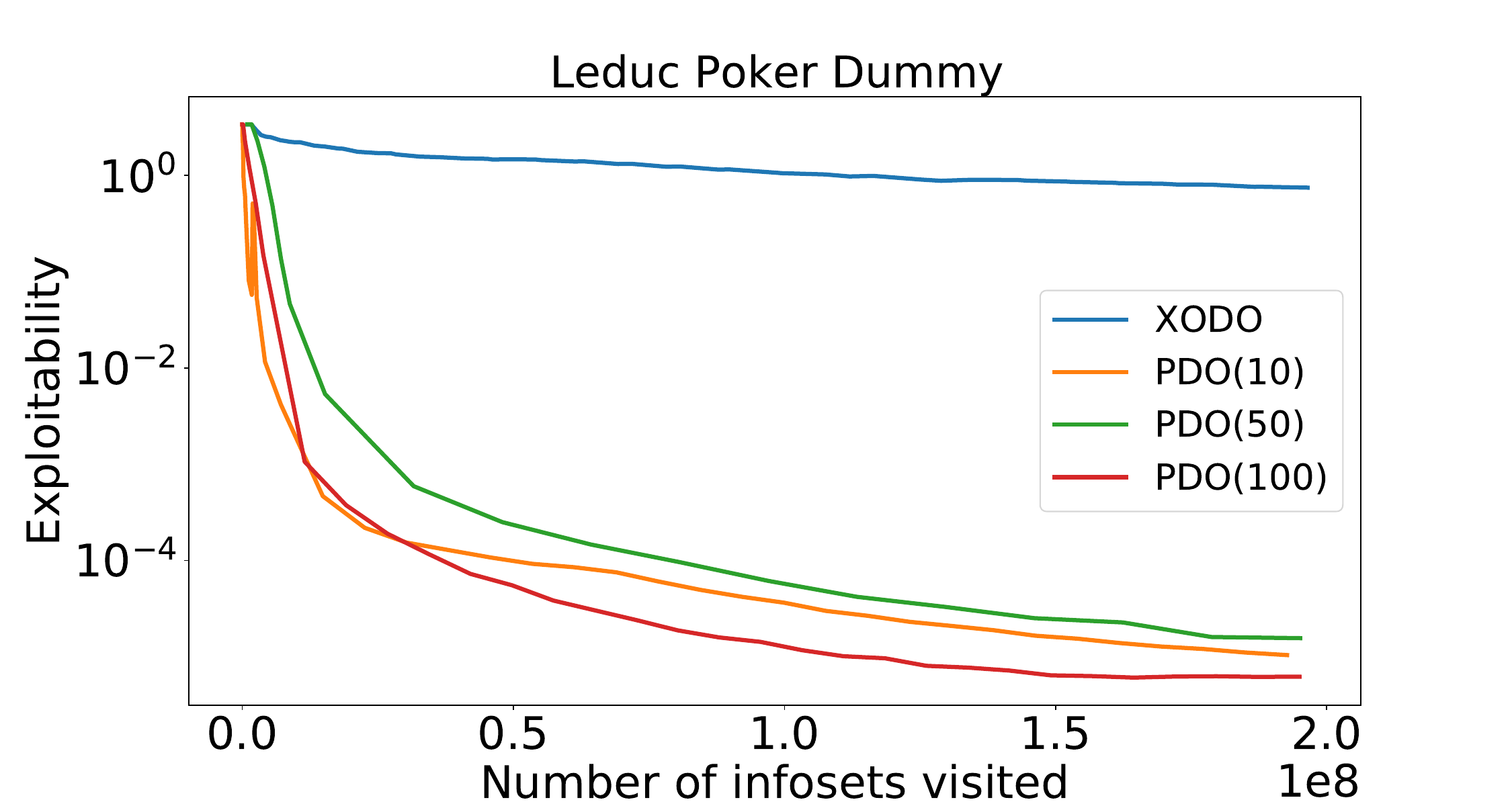}
     \label{fig:oshi_info}
\end{subfigure}
\begin{subfigure}
    \centering
    \includegraphics[width=.49\textwidth]{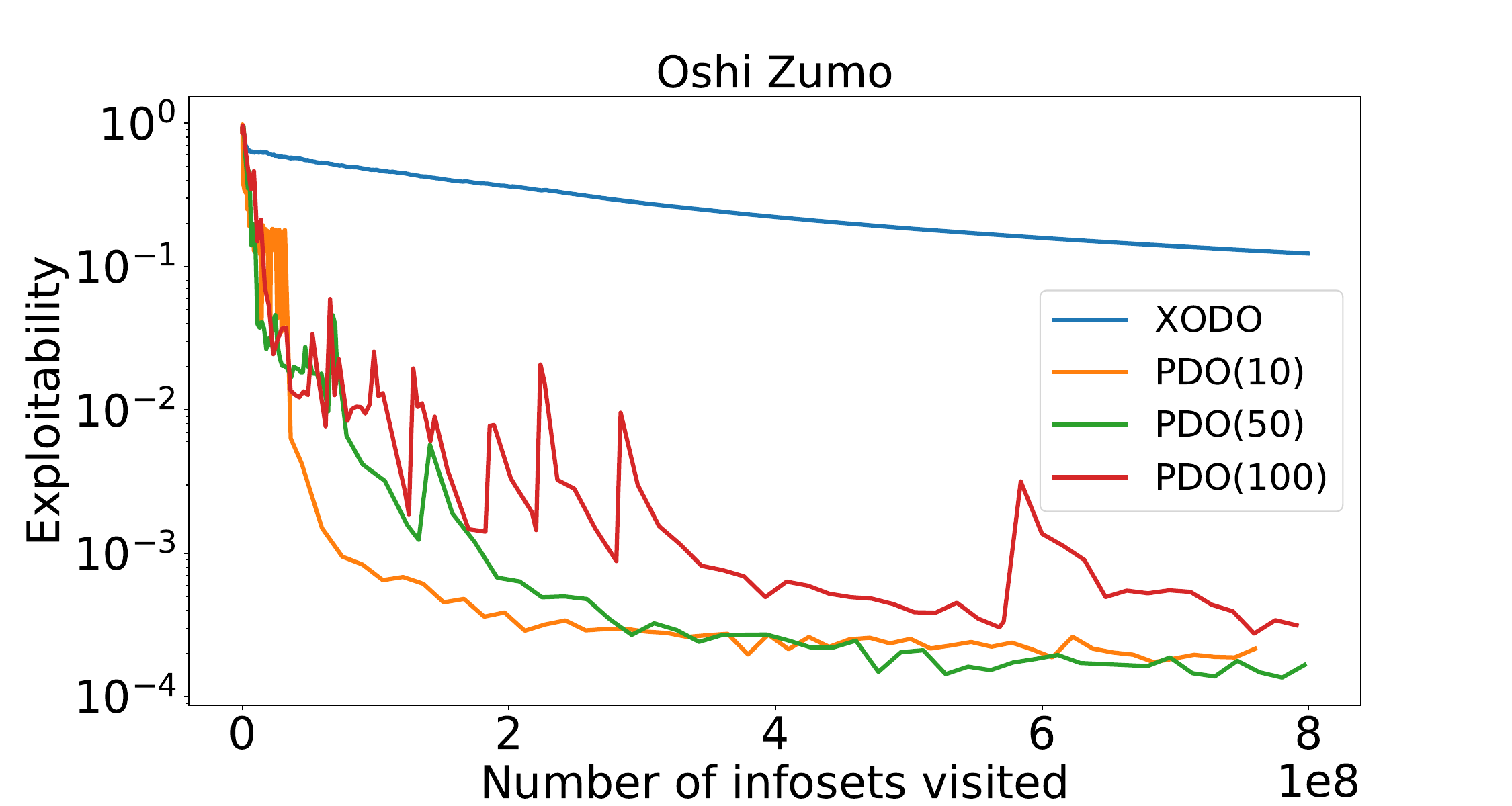}
    \label{fig:dl_info}
\end{subfigure}
\caption{Performance of PDO with different periodicity and XODO in terms of number of infosets. PDO significantly outperforms XODO.}
  \label{fig:exp_dlp}
\end{figure*}

\begin{figure*}[h]
     \centering
\begin{subfigure}
    \centering
    \includegraphics[width=.49\textwidth]{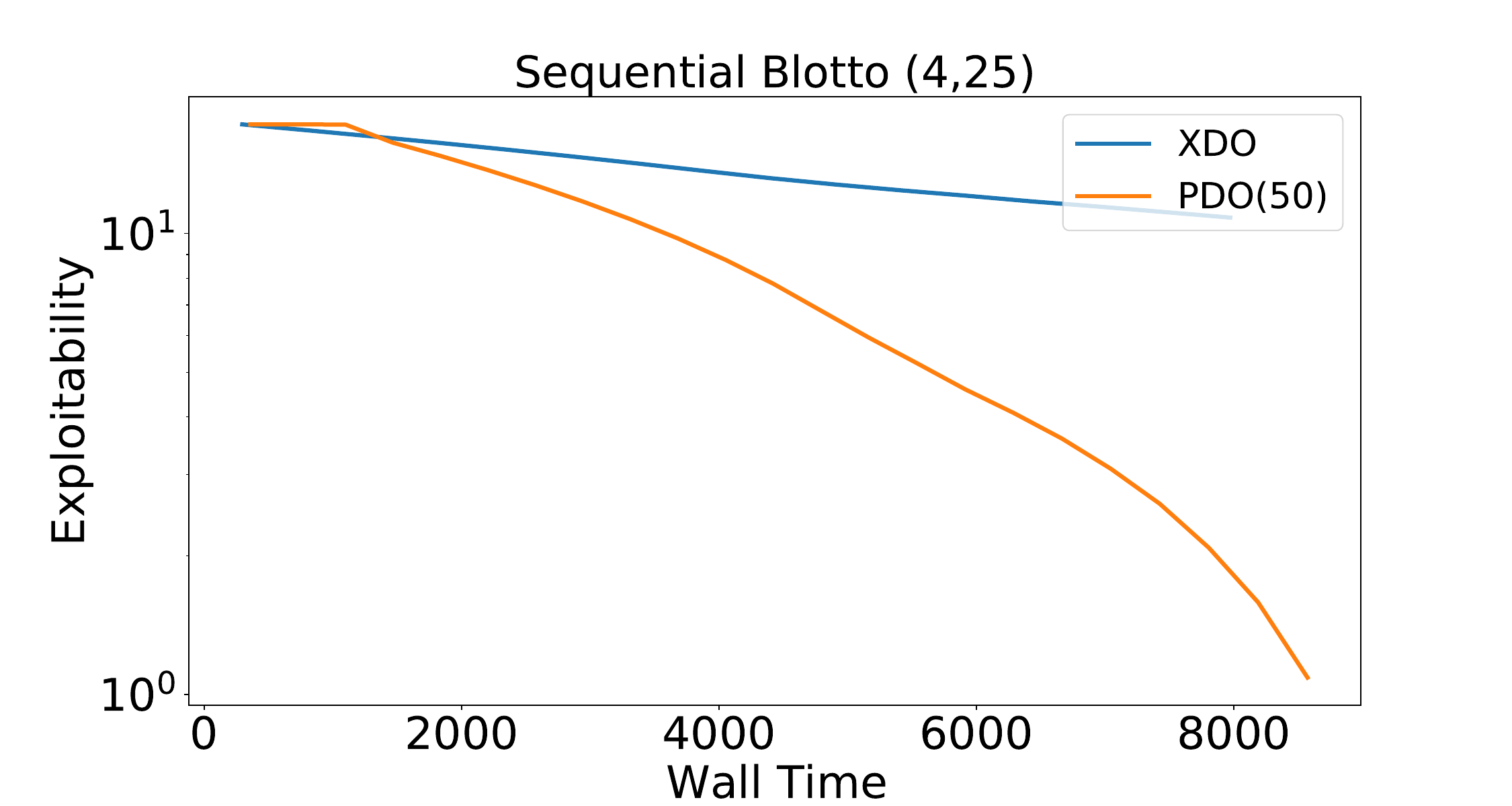}
     \label{fig:oshi_info}
\end{subfigure}
\begin{subfigure}
    \centering
    \includegraphics[width=.49\textwidth]{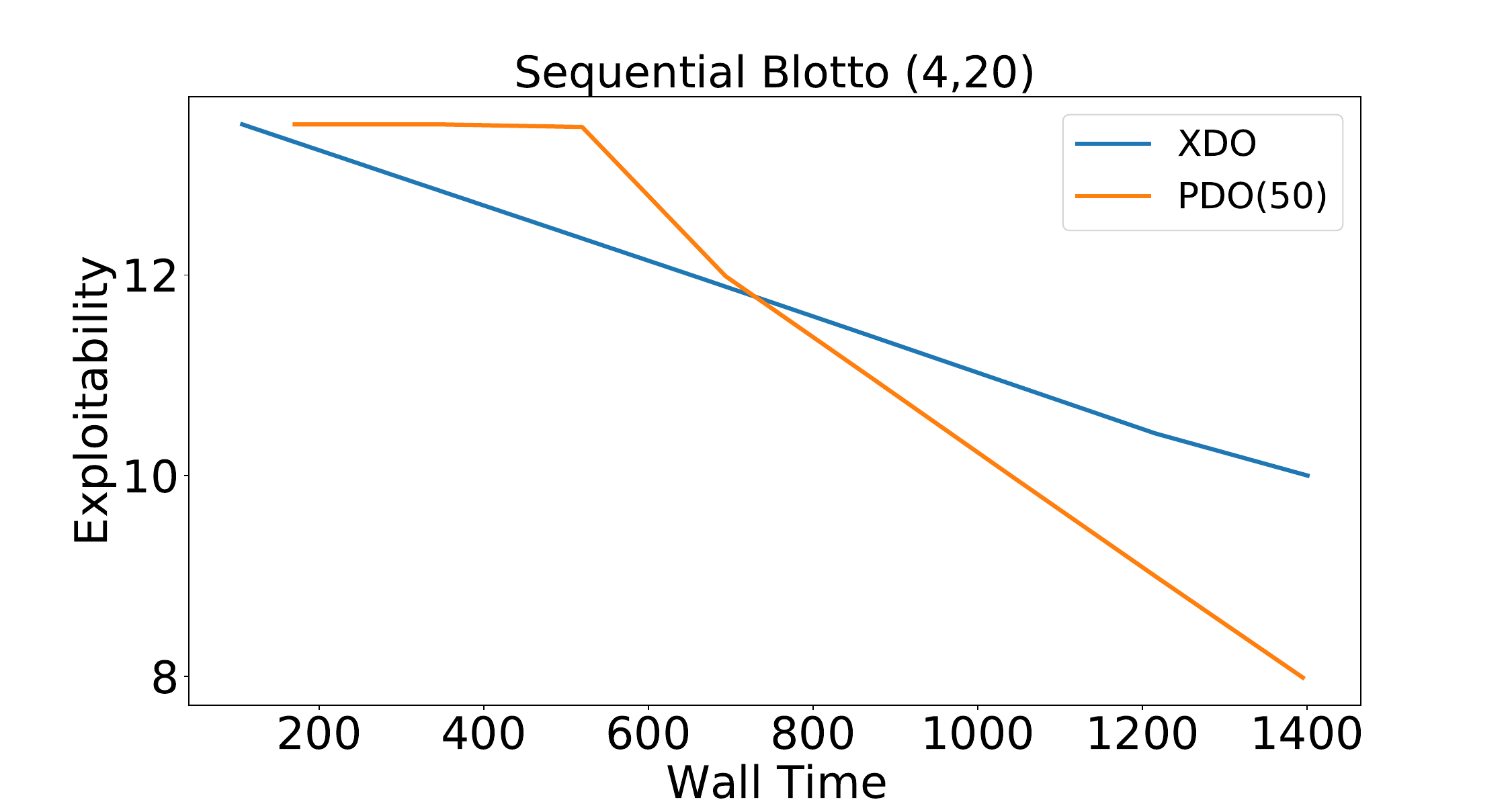}
    \label{fig:dl_info}
\end{subfigure}
\caption{Performance on Sequential Blotto in terms of wall time (in seconds).}
  \label{fig:exp_dlp}
\end{figure*}

\begin{figure*}[h]
     \centering
\begin{subfigure}
    \centering
    \includegraphics[width=.49\textwidth]{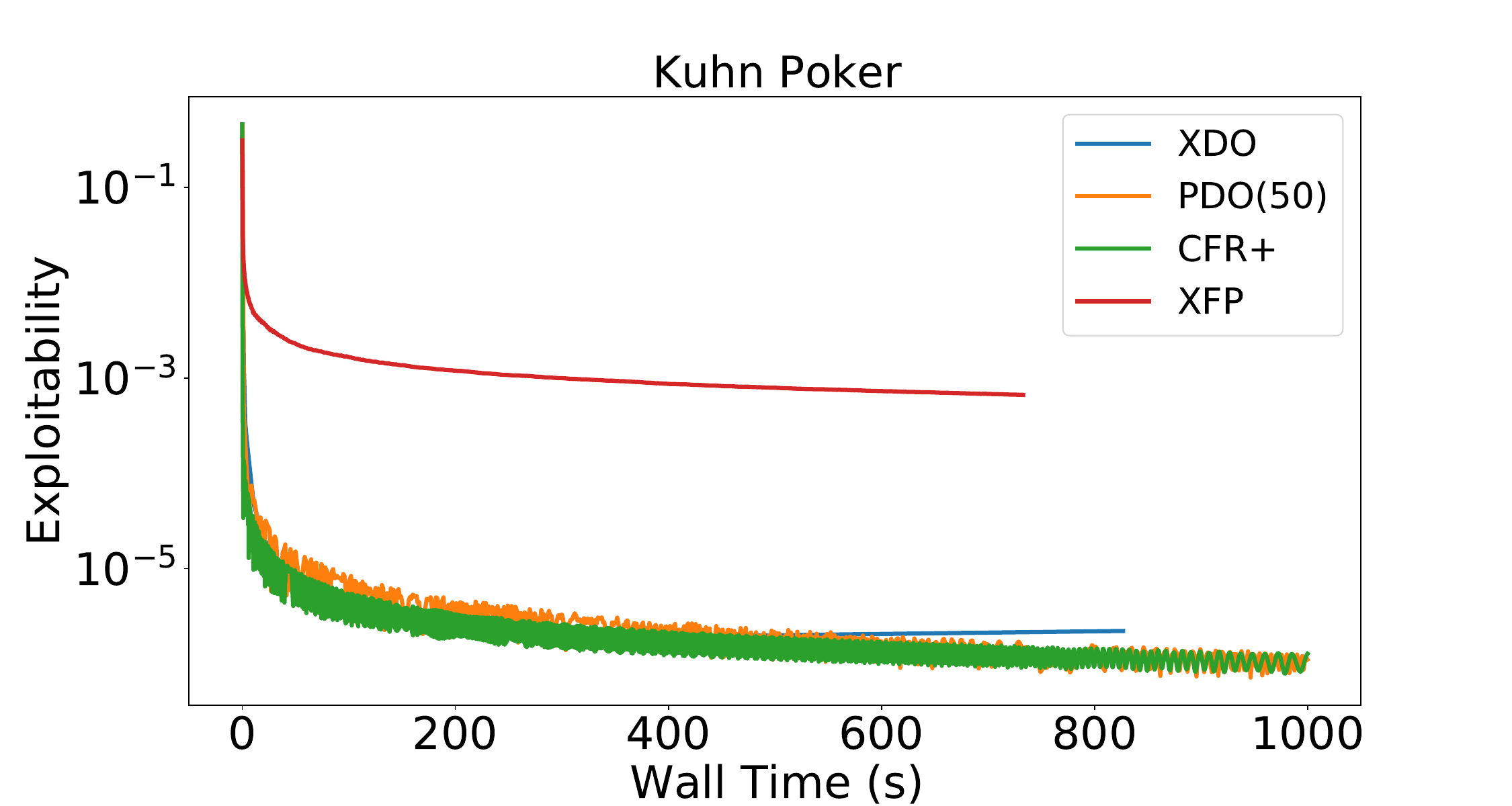}
     \label{fig:oshi_info}
\end{subfigure}
\begin{subfigure}
    \centering
    \includegraphics[width=.49\textwidth]{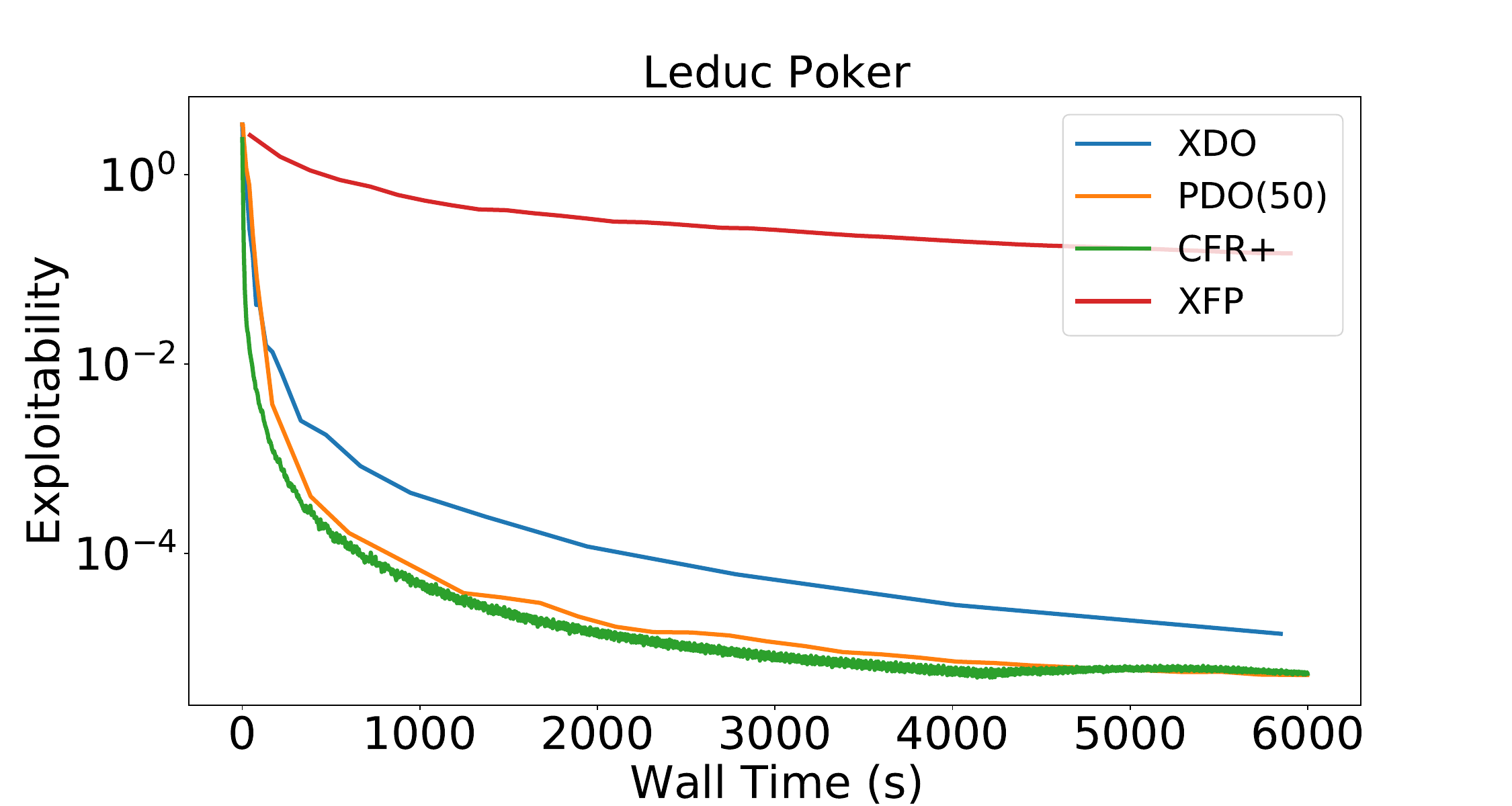}     
    \label{fig:dl_info}
\end{subfigure}

\begin{subfigure}
    \centering
    \includegraphics[width=.49\textwidth]{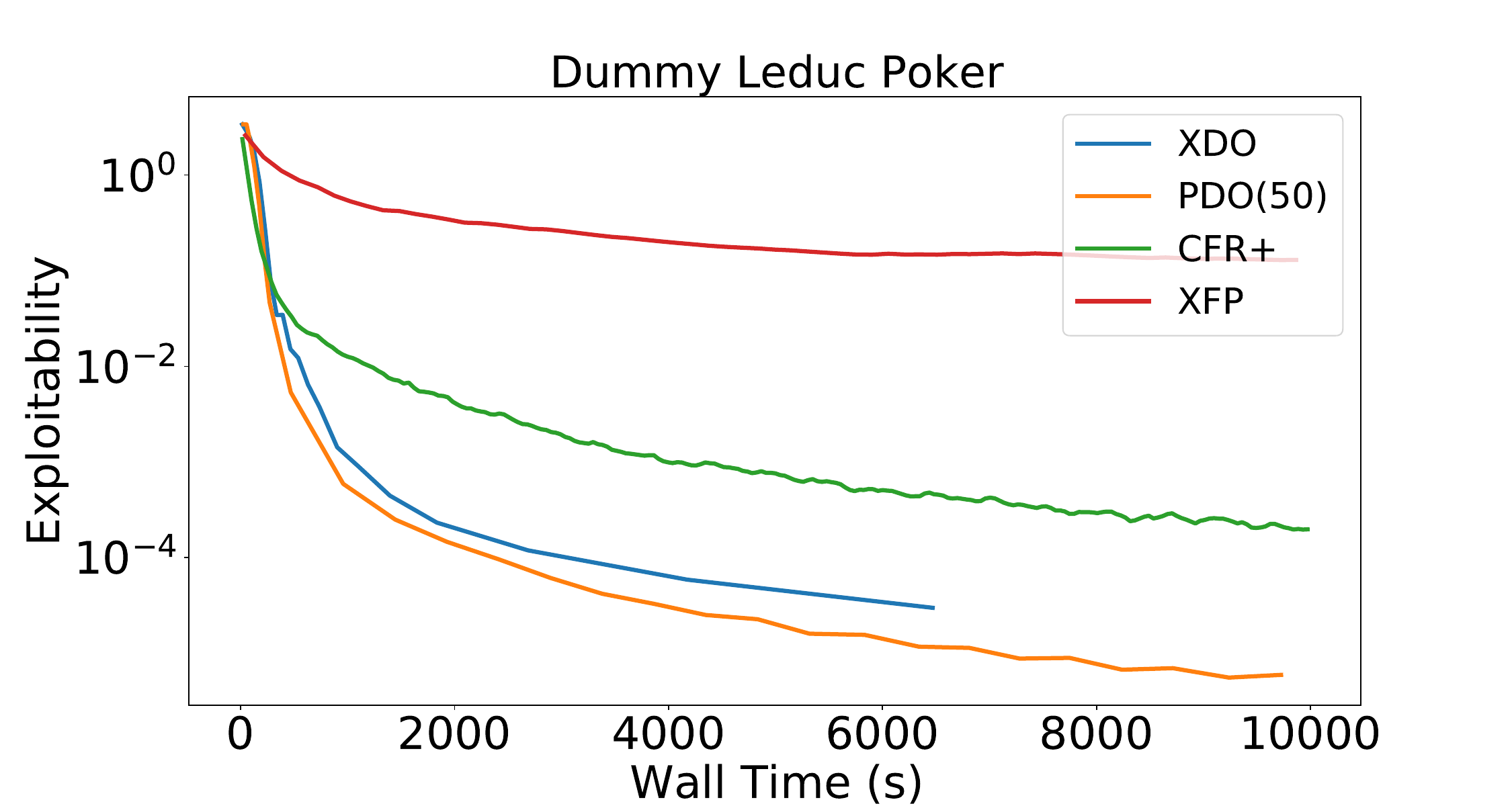}
     \label{fig:oshi_info}
\end{subfigure}
\begin{subfigure}
    \centering
    \includegraphics[width=.49\textwidth]{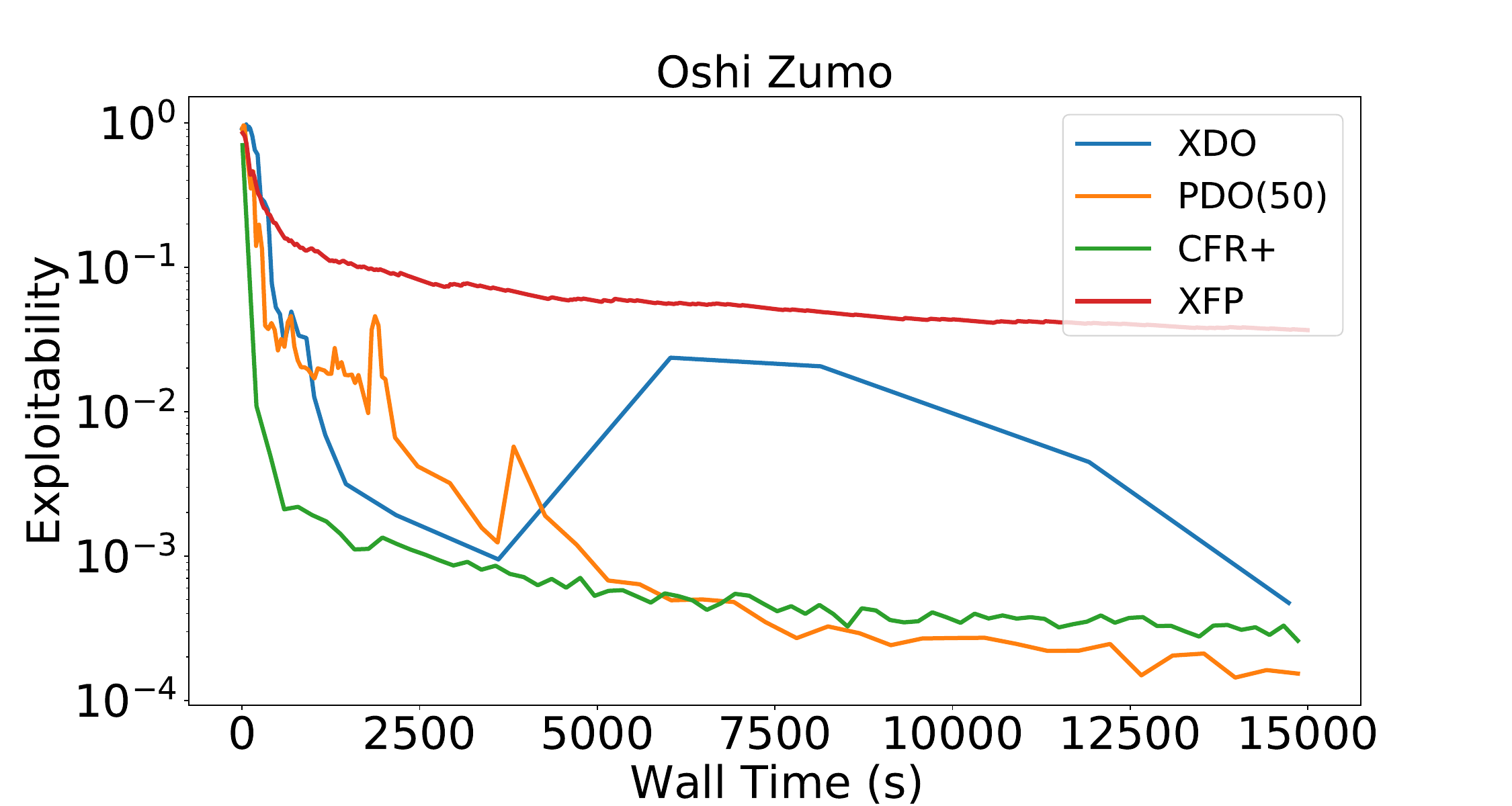}
    \label{fig:dl_info}
\end{subfigure}
\caption{Performance of PDO ($50$) comparing to baselines.}
  \label{fig:exp_dlp}
\end{figure*}

\begin{figure*}[t!]
     \centering
\begin{subfigure}
    \centering
    \includegraphics[width=.32\textwidth]{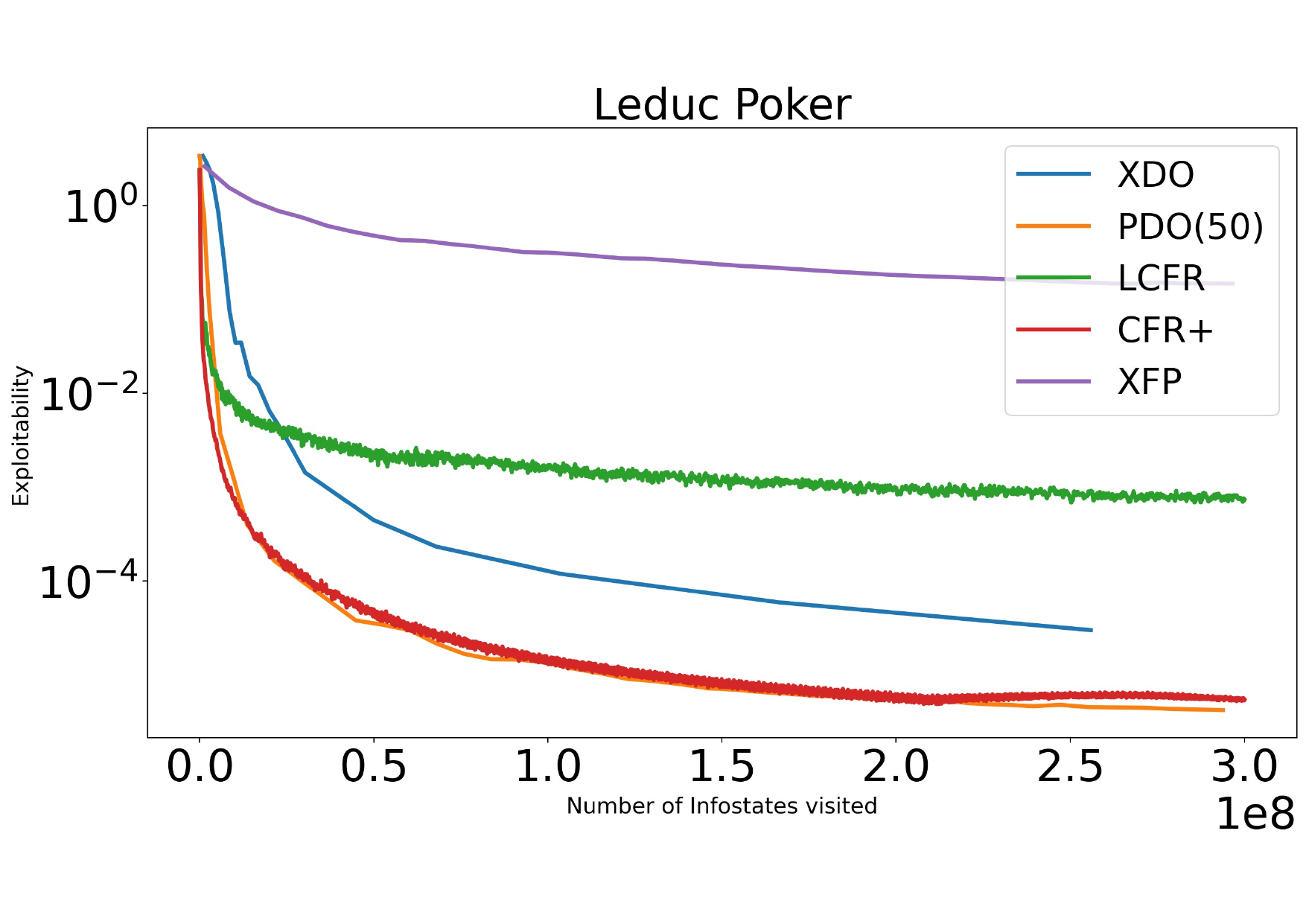}     
\end{subfigure}
\begin{subfigure}
    \centering
    \includegraphics[width=.32\textwidth]{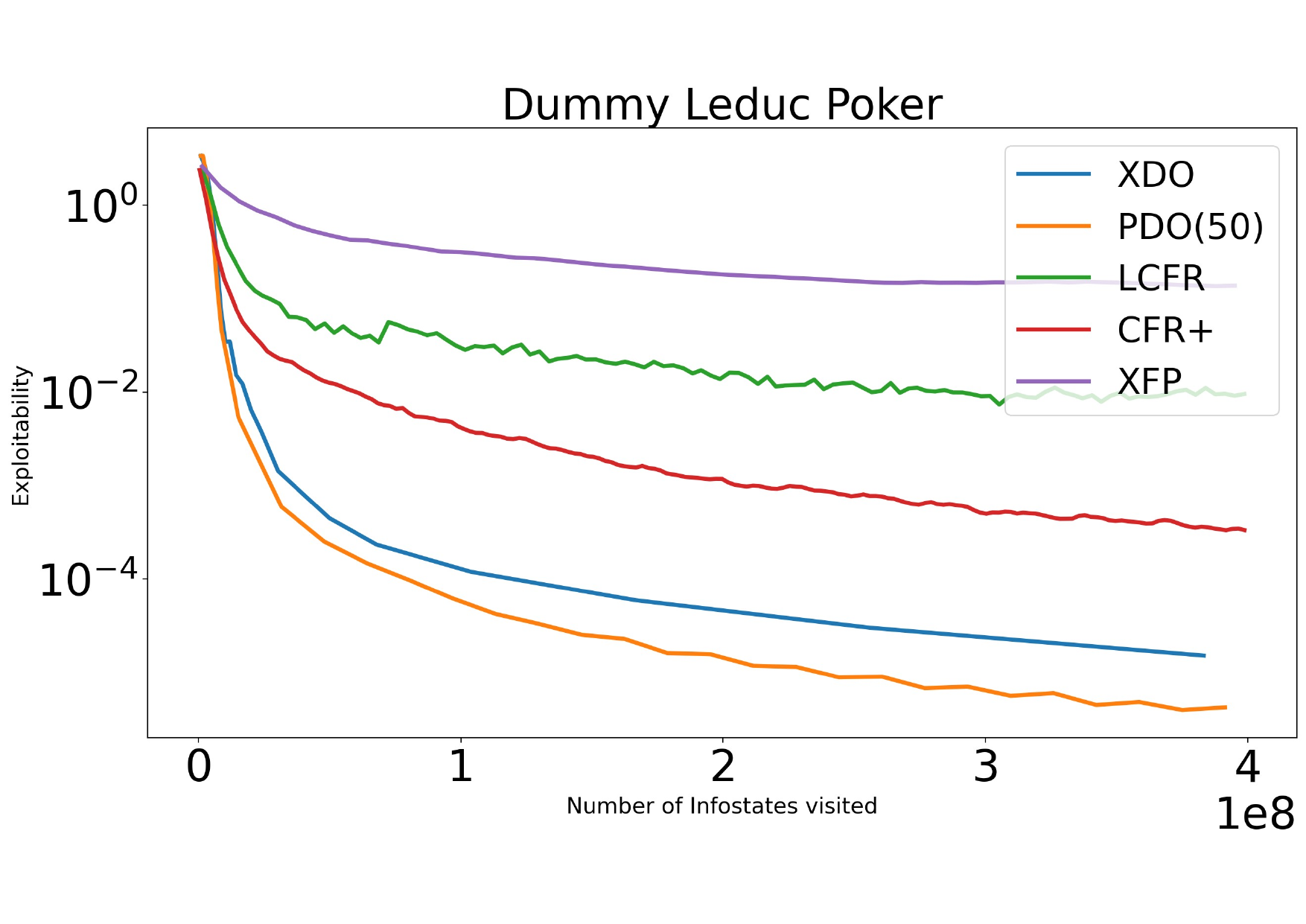}
\end{subfigure}
\begin{subfigure}
    \centering
    \includegraphics[width=.32\textwidth]{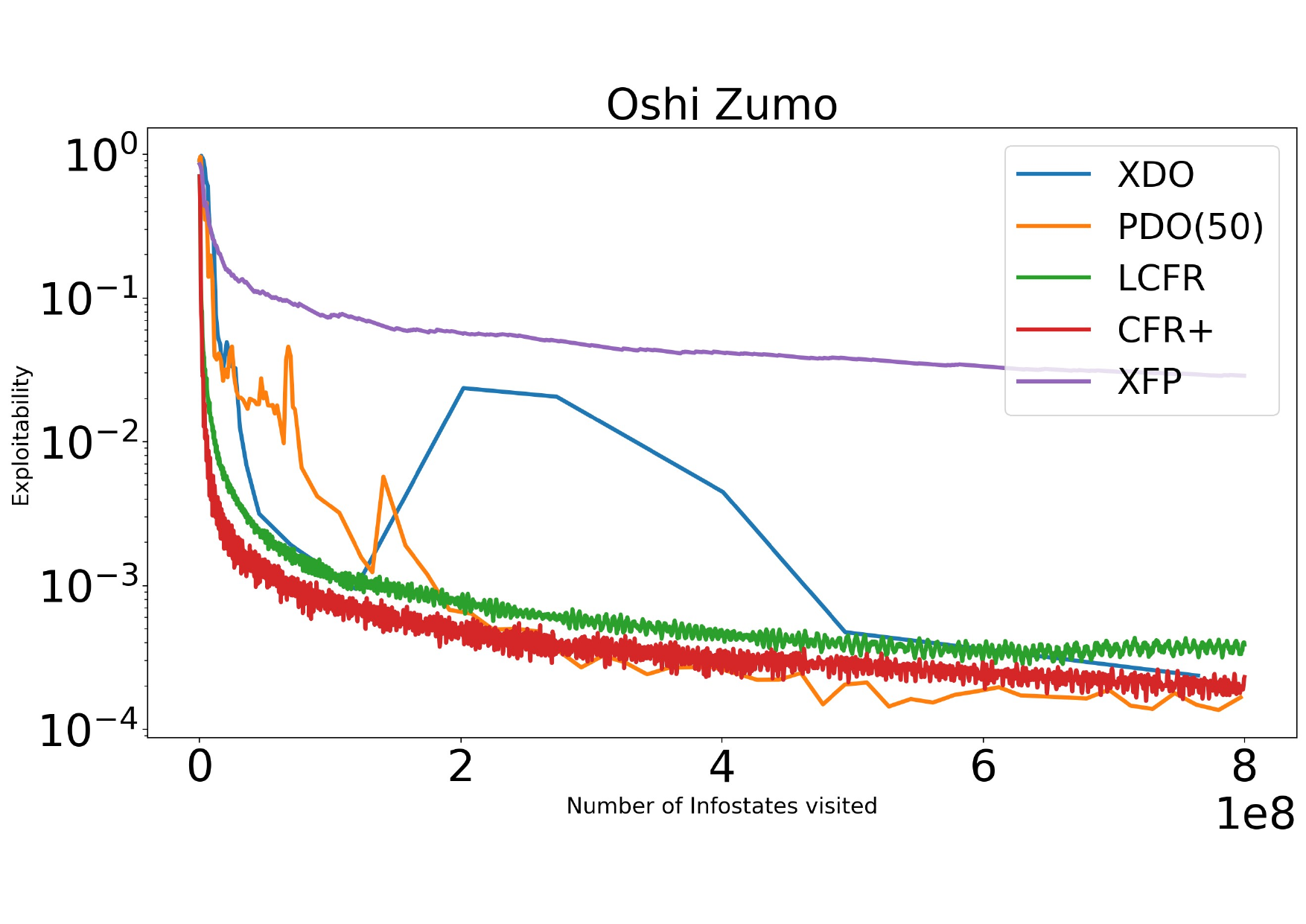}     
\end{subfigure}
\caption{Performance of PDO with periodicity $50$ and other baselines including Linear CFR (exploitability-number of infostates visited).}
\end{figure*}

\begin{figure*}[t!]
     \centering
\begin{subfigure}
    \centering
    \includegraphics[width=.32\textwidth]{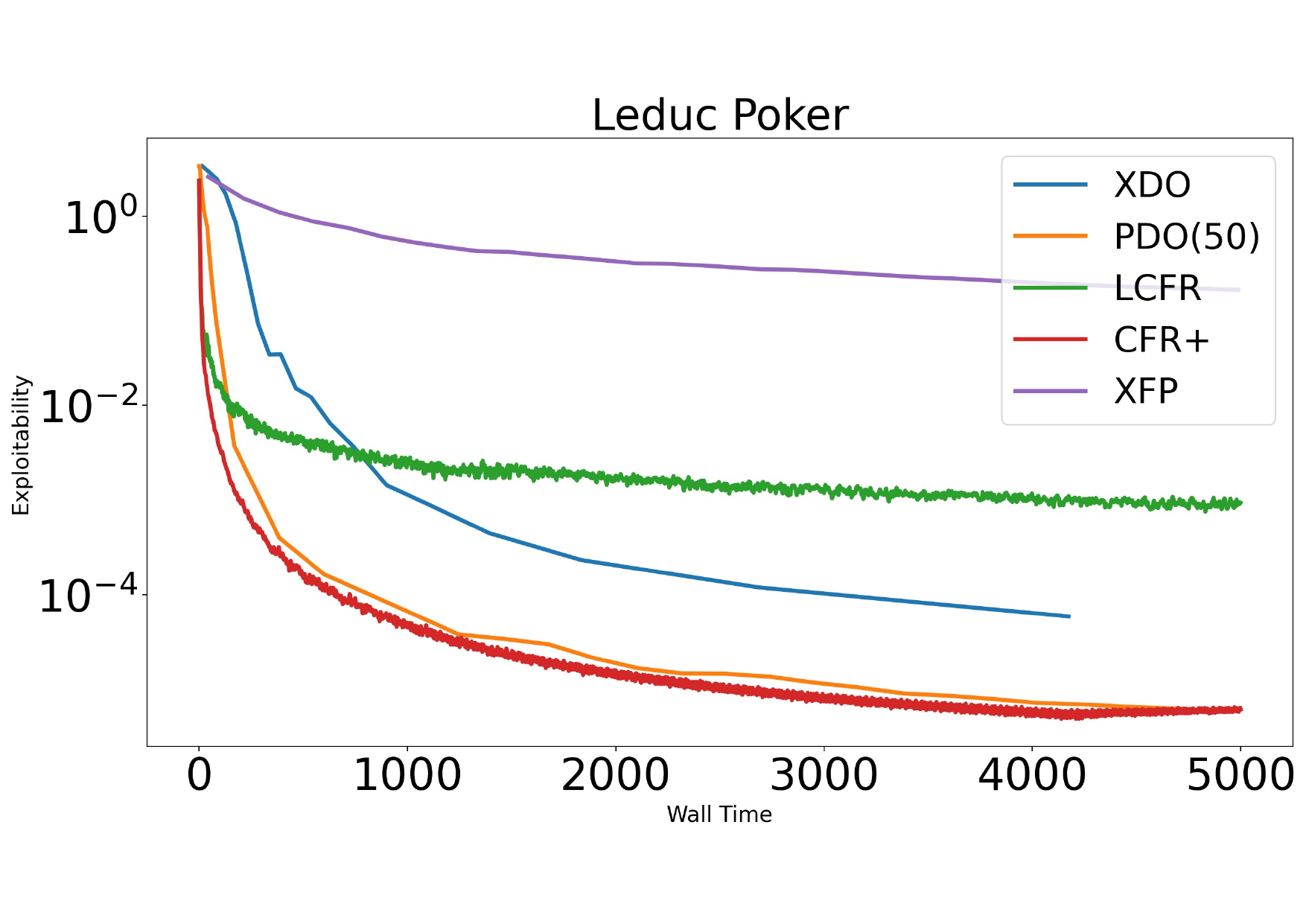}     
\end{subfigure}
\begin{subfigure}
    \centering
    \includegraphics[width=.32\textwidth]{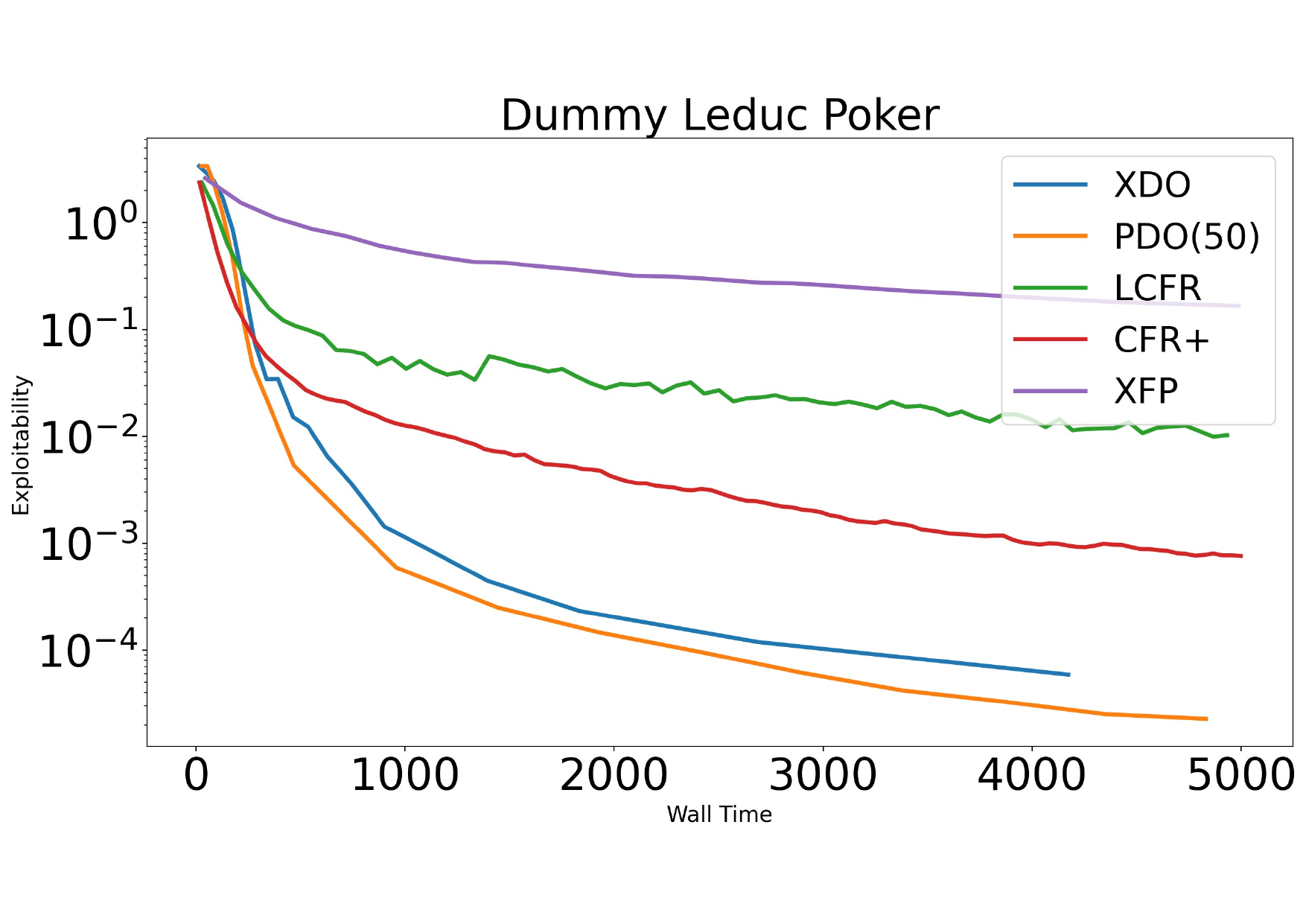}
\end{subfigure}
\begin{subfigure}
    \centering
    \includegraphics[width=.32\textwidth]{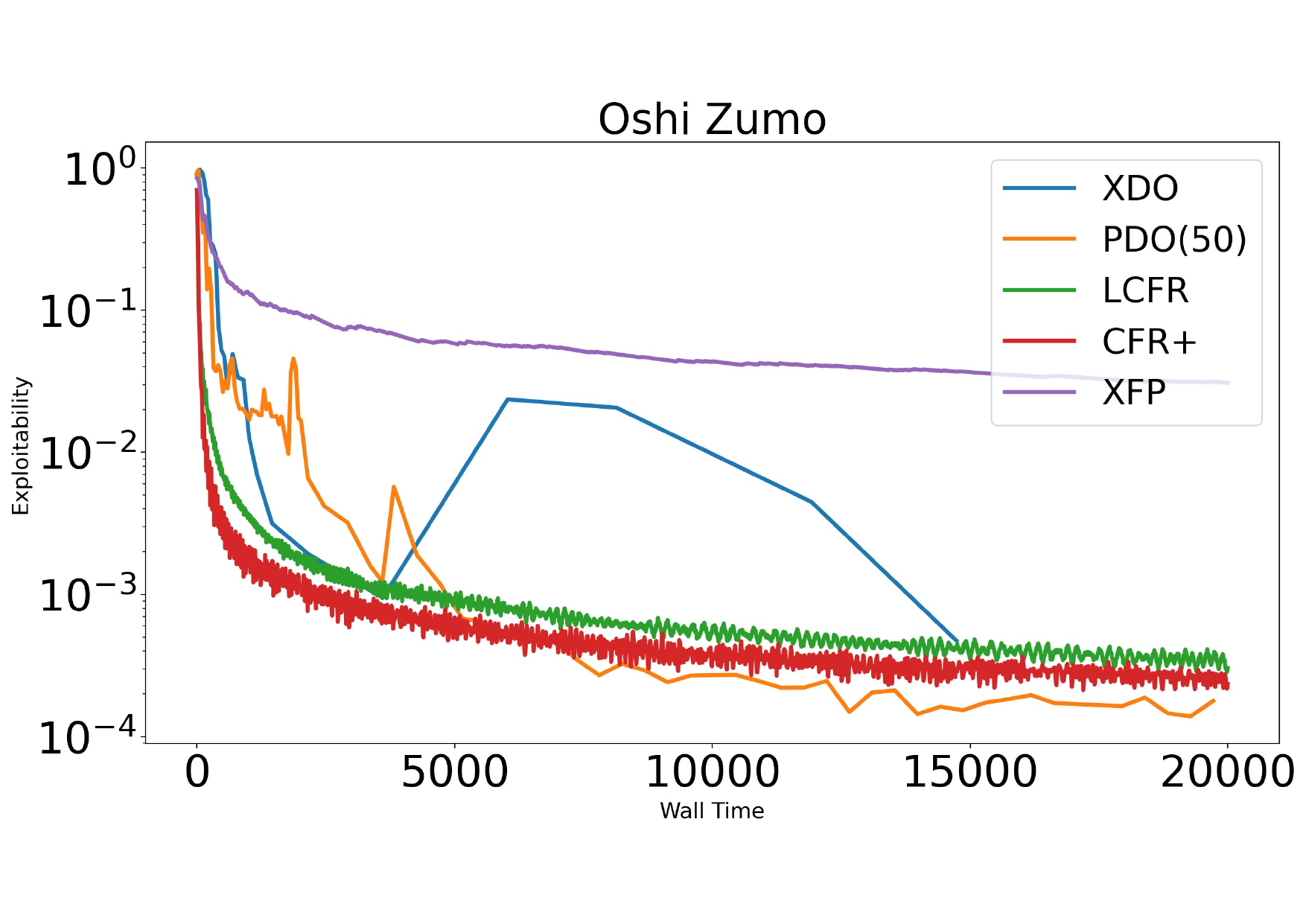}     
\end{subfigure}
\caption{Performance of PDO with periodicity $50$ and other baselines including Linear CFR (exploitability-wall time).}
\end{figure*}

\vspace{5cm}

\end{document}